\def\a{\alpha}
\def\b{\beta}
\def\g{\gamma}
\def\d{\delta}
\def\e{{\rm e}}
\def\D{\Delta}
\def\G{\Gamma}
\def\O{\Omega}
\def\L{\Lambda}
\def\p{\partial}
\def\C{\mathbb C}
\def\Z{\mathbb Z}
\def\R{{\mathbb R}}
\def\N{\mathbb N}
\def\tr{{\rm tr}\,}
\def \Lp {\mathsf L}  
\NewDocumentCommand{\xnewtheorem}{m o m}
 {%
  \IfNoValueTF{#2}
   {\newtheorem{#1}{#3}}
   {%
    \newaliascnt{#1}{#2}%
    \newtheorem{#1}[#1]{#3}%
    \aliascntresetthe{#1}%
    \expandafter\newcommand\csname #1autorefname\endcsname{\makefirstuc  {\lowercase {#3}}}%
   }%
 }
\theoremstyle{plain}
\newtheorem{thmc}{ERROR}[section]
\theoremstyle{definition}
\theoremstyle{remark}
\theoremstyle{plain}
\theoremstyle{definition}
\theoremstyle{remark}
\newtheorem{remark}[thmc]{Remark}
\newtheorem{remarks}[thmc]{Remarks}
\newcommand{\be}{\begin{equation}}
\newcommand{\ee}{\end{equation}}
\newcommand{\bea}{\begin{eqnarray}}
\newcommand{\eea}{\end{eqnarray}}
\newcommand{\beax}{\begin{eqnarray*}}
\newcommand{\eeax}{\end{eqnarray*}}
\newcommand{\mfr}[2]{{\textstyle\frac{#1}{#2}}}
\keywords{Entanglement entropy, Landau Hamiltonian, asymptotic analysis}
\subjclass[2010]{Primary 47G30, 35S05; Secondary 45M05, 47B10, 47B35}
\numberwithin{equation}{section}
\begin{document}

\title[Entanglement entropy of free fermions in a magnetic field]{Entanglement entropy of ground states of the three-dimensional ideal Fermi gas in a magnetic field}

\date{September 20, 2022}

\author[P.~Pfeiffer and W.~Spitzer]{Paul Pfeiffer and Wolfgang Spitzer}
\address{Fakult\"at f\"ur Mathematik und Informatik, 
FernUniversit\"at in Hagen, Universit\"atsstra\ss e 1, 
58097 Hagen, Germany}
\email{paul.pfeiffer@fernuni-hagen.de}
\email{wolfgang.spitzer@fernuni-hagen.de}

\begin{abstract} We study the asymptotic growth of the entanglement entropy of ground states of non-interacting (spinless) fermions in $\R^3$ subject to a non-zero, constant magnetic field perpendicular to a plane. As for the case with no magnetic field we find, to leading order $L^2\ln(L)$, a logarithmically enhanced area law of this entropy for a bounded, piecewise Lipschitz region $L\L\subset \R^3$ as the scaling parameter $L$ tends to infinity. This is in contrast to the two-dimensional case since particles can now move freely in the direction of the magnetic field, which causes the extra $\ln(L)$ factor. The explicit expression for the coefficient of the leading order contains a surface integral similar to the Widom formula in the non-magnetic case. It differs however in the sense that the dependence on the boundary is not solely on its area but on the ``area perpendicular to the direction of the magnetic field". On the way we prove an improved two-term asymptotic expansion (up to an error term of order one) of certain traces of one-dimensional Wiener--Hopf operators with a discontinuous symbol. This is of independent interest and leads to an improved error term of the order $L^2$ of the relevant trace for piecewise $\mathsf{C}^{1,\alpha}$ smooth surfaces $\p\L$. 
\end{abstract}

\maketitle

\tableofcontents

\section{Introduction}

In recent years, entanglement entropy (EE) has become an important and intensively studied quantity of states of many-particle quantum systems. For an introduction to this topic we refer to \cite{AFV,ECP,HHHH}. In this paper, we study the EE of ground states of the ideal Fermi gas in a magnetic field in three-dimensional Euclidean space, $\R^3$. The two-dimensional Fermi gas in a constant magnetic field was recently analyzed in \cite{CE} and \cite{LSSmagn}, starting from the earlier work in \cite{RS1}. Here, a strict area-law holds, while for the free Fermi gas in any dimension $d$ a logarithmically enhanced area-law is valid, see \cite{GK,LSS1}. Stability of these area-laws has been proved in \cite{MS2019,MS2020} for $d\ge2$ and in \cite{P2021} in the sense that adding a ``small" electric or magnetic potential to the Hamiltonian does not change the leading asymptotics of the entropy. The one-dimensional case seems to be still open (for the $\a$-R\'enyi entropy with $\a\le1$).

There is an extensive literature on EE by now with many fascinating connections and implications to related fields. Here we only mention and refer to a small fraction of mathematical results. In \cite{PS}, an enhanced area-law was proved for the one-dimensional free Fermi gas in a periodic potential; the higher dimensional case remains an open problem. By the work in \cite{PasSlav,EPS,MPS} we understand EE in Anderson-type models on the lattice. An extension to the EE of positive temperature equilibrium states (of the ideal Fermi gas) was presented in \cite{LSS-temp1,LSS-temp2,Sob2016}. Finally, we mention results on the XY and XXZ quantum spin chain~\cite{AFS,JinKor,BW,EKS,FS,FO}.

By a (strict) area-law for a ground state of the infinitely extended Fermi gas, say in $\R^d$ with spatial dimension $d\in\N$,  we mean that the entanglement (or local) entropy of this state reduced to the scaled (bounded) region $L\L$ grows to leading order like $L^{d-1}\mathcal H^2(\p\L)$ as the dimensionless real parameter $L$ tends to infinity. Here, $\mathcal H^2(\p\L)$ is the (Hausdorff) surface area of the boundary $\p\L$. If there is an extra $\ln(L)$ factor in this leading asymptotics, then we call it a logarithmically enhanced area-law.  

Whether one should expect a strict area-law or an enhanced area-law is related to the spectral properties of the one-particle Hamiltonian of the non-interacting many-particle Fermi gas. If the off-diagonal part of the integral kernel of the corresponding spectral (Fermi) projection has a fast decay (e.g., exponential), then we expect a strict area-law to hold. It is not difficult to argue for that (see \cite{PasSlav}) but to compute and finally prove the precise leading coefficient has only been accomplished in special cases. On the other hand, if the decay of the off-diagonal part of the integral kernel is weak (e.g., inverse linear), then we can expect an enhanced area-law. In the present model we have a mixture. Namely we have an exponential decay in the planar coordinate (orthogonal to the magnetic field) and a $1/|\cdot|$ decay in the longitudinal coordinate along the magnetic field. The latter prevails and leads to a logarithmically enhanced area-law. Our main result is formulated in \autoref{entropy of ground state}. 

As in previous proofs there is two parts to proving such a result. Firstly, we prove a two-term asymptotic expansion for polynomials (see \autoref{thm:polynomial}). Due to the product structure of the ground state, see \eqref{Dmu exp}, we can dimensionally reduce the asymptotics of a three-dimensional problem to an asymptotic expansion of a one-dimensional problem with localizing sets $L\L_{x^\perp}\subset \R$ and with the spectral projection of the one-dimensional Laplacian, see \autoref{lem:reduction}. The corresponding asymptotic expansion was already proved by Landau and Widom~\cite{LW} and then improved by Widom~\cite{Widom1982}. But here we need to take care of the error term which depends on the planar coordinate $x^\perp\in\R^2$ and integrate over $x^\perp$. To this end, we show that the error term is of order one and is integrable as a function of $x^\perp$ under some assumptions on $\L$.  We believe that the precise description of the error term for the one dimensional free case in terms of the finite collection of intervals $\L_{x^\perp}$ is of independent interest and we provide a proof in Appendix \ref{Appendix C}. This dimensional reduction is also the strategy of Widom in \cite{Widom1990} and of Sobolev in the proof of the Widom conjecture in \cite{Sob:AMS}. In fact, due to the fast (exponential) decay in the planar direction error estimates are simpler to obtain than in the case with no magnetic field. This and the improved Landau--Widom (or Widom) asymptotics allows us to prove for $\mathsf{C}^{1,\a}$ (smooth) regions $\L$ an error term (for polynomials as in \autoref{thm:polynomial}) of the order $L^2$ rather than merely of lower order than $L^2\ln(L)$ in \cite[Theorem 2.9]{Sob:AMS}. 

Secondly, in Section \ref{Section:Entanglement} we make the transition in the asymptotic expansion from polynomials to the entropy function. This requires certain Schatten--von Neumann quasi-norm bounds presented in Section \ref{Section: Schatten}, which in turn are based on bounds obtained in previous papers~\cite{LSSmagn,LSS1} and notably by Sobolev~\cite{Sob:Schatten}.

The smoothness conditions on the region $\L$ to prove our two-term asymptotic result with error term $o(L^2\ln(L))$ are rather weak, namely we require $\L$ to be only piecewise Lipschitz smooth. For a smooth region $\L$, one would expect the next lower order term to be of the order $L^2$. This is indeed true if the boundary $\p\L$ is piecewise $\mathsf C^{1,\alpha}$ smooth. We also present regions with weaker regularity on the boundary for which the error term (for a quadratic polynomial) can be arbitrarily close to the leading $L^2\ln(L)$-term. This may also be of independent interest and is the content of Section \ref{Section: lower order}.

A note on our notation: As $L$, $L\ge1$, is our scaling parameter that tends to infinity, we use the ``big-O" and ``small-o" notation in the sense that for two functions $f$ and $g$ on $\R^+$, $f = O(g)$ if $\limsup_L f(L)/g(L) <\infty$ and $f = o(g)$ if $\limsup_L f/g(L) =0$. By $C$ with or without indices, we denote various positive, finite constants, whose precise values is of no importance, and may even change from line to line.

\section{Setup}

We consider a non-zero, constant magnetic field in $\R^3$ of strength $B$ which is perpendicular to a plane. We assume without loss of generality that this constant magnetic field points in the positive $z$-direction with $B>0$. 

We denote the Euclidean norm in $\R^d$, $d\in\N$, or the norm in the Hilbert-space $\text L^2(\R^d)$ of complex-valued, square-integrable functions on $\R^d$ by the same symbol $\|\cdot\|$. For $x \in \R$, let $\langle x \rangle\coloneqq \sqrt{ 1+ x^2}$ denote the Japanese bracket. For a Borel set $\Omega \subset \R^d$ and $k <d$, let $\mathcal H^k(\Omega)$ be be the $k$-dimensional Hausdorff measure of $\Omega$, $\#\Omega=\mathcal H^0(\Omega)$ its counting measure, and let $\lvert \Omega \rvert$ be its $d$-dimensional Lebesgue measure/volume. By $\mathds{1}_{\Omega}$ we denote the multiplication operator on $\Lp^2(\R^d)$ by the indicator function $1_\Omega$ of the set $\Omega$. As usual, we write for the complement $\Omega^\complement \coloneqq \R^d \setminus \Omega$.  

For $r>0$, $x\in\R^d$, and a set $X\subset \R^d$ we denote by 
\be \label{def: ball} B_r(x) \coloneqq \big\{y\in\R^d : \|y-x\|<r\big\}\,,\quad B_r(X) \coloneqq X+B_r(0)\coloneqq \big\{x+y: x\in X, \|y\| <r\big\}
\ee
the open ball of radius $r$ with center $x$ and the (open) \emph{$r$-neighborhood} of the set $X\subset \R^{d}$ of width $r$, respectively. In most cases the dimension, $d$, is clear from the context and we omit it in the definition; if not, we write $B^{(d)}_r(x)$. We denote the closed ball of radius $r$ with center $x$ by $\overline{B}_r^{(d)}(x)$.

For a point $x\in\R^3$ we write $x = (x^\perp,x^\parallel)$ with (planar coordinate) $x^\perp\in\R^2$ and (longitudinal coordinate) $x^\parallel\in\R$, and $\nabla = (\nabla^\perp,\nabla^\parallel)$, where $\nabla^\perp$ and $\nabla^\parallel$ are the gradients in the respective Cartesian coordinates. 

By our assumption, the magnetic field is equal to $B\cdot e_3$ with $e_3 \coloneqq (0,0,1)$. As in \cite{LSSmagn}, we use the symmetric gauge $a:\R^2\to\R^2$ defined as $a(x^\perp) \coloneqq B/2\,(x^\perp_2,-x^\perp_1)$ so that the rotation
\be \nabla\times (a,0) = B\cdot e_3\,.
\ee
The one-particle Hamiltonian of the ideal Fermi gas in three-dimensional Euclidean space $\R^3$ subject to the magnetic field $B\cdot e_3$ is informally given by
\be \mathrm{H}_B \coloneqq (-\mathrm{i}\nabla^\perp - a)^2 + (-\mathrm{i}\nabla^\parallel)^2\,.
\ee
It is well-defined as a self-adjoint operator on a suitable domain in the one-particle Hilbert space $\mathsf{L}^2(\R^3)$. 

The ground state of free fermions with one-particle Hamiltonian $\mathrm{H}_B$ is described by the spectral projection (or Fermi projection) $\mathrm{D}_\mu \coloneqq \mathds{1}(\mathrm{H}_B\le\mu) \coloneqq 1_{(-\infty,\mu]}(\mathrm{H}_B)$ of $\mathrm{H}_B$ below some so-called Fermi energy (or chemical potential) $\mu\in\R$. As is well-known, we have \cite{Fock,Landau} 
\be (-\mathrm{i}\nabla^\perp - a)^2 = B\sum_{\ell=0}^\infty (2\ell +1) \mathrm{P}_\ell
\ee 
with explicitly known (infinite-dimensional) eigenprojections $\mathrm{P}_\ell$ on $\mathsf{L}^2(\R^2)$. In order to write down these projections, let us introduce the Laguerre polynomials, $\mathcal{L}_\ell(t) \coloneqq \sum_{j=0}^\ell \frac{(-1)^j}{j!}\, \binom{\ell}{\ell - j}\, t^j$, $t\ge0$, of degree $\ell\in\N_0$. Then the integral kernel of $\mathrm{P}_\ell$ is given by the function
\be \label{Landau_kernel:eq}
p_\ell(x^\perp,y^\perp) \coloneqq \frac{B}{2\pi} \,\mathcal{L}_\ell\big(B\|x^\perp-y^\perp\|^2/2\big)\,\exp\big(-B\|x^\perp-y^\perp\|^2/4+ \mathrm{i}\mfr{B}2 x^\perp \wedge y^\perp\big)\,,\quad x^\perp,y^\perp\in\mathbb R^2\,.
\ee 
Here, $\wedge$ refers to the exterior or wedge product on $\R^2$. The explicit description of this kernel is not relevant for this paper. We only use the exponential decay in $\lVert x^\perp-y^\perp \rVert^2$ and $p_\ell(x^\perp,x^\perp)=B/(2 \pi)$.
In the $z$-direction, we meet the spectral projection $\mathds{1}((-\nabla^\parallel)^2\le\mu)$ with (sine) integral kernel, $\mathds{1}((-\nabla^\parallel)^2\le\mu)(z,z')=k_\mu(z-z')$, 
\begin{align} \label{k mu}
k_\mu(z)\coloneqq \begin{cases}
 \frac{\sin(\sqrt{\mu}z)}{\pi z}\,, &\mbox{ for } z\in\R \setminus \{0\} \\
 \lim_{z \to 0} k_{\mu}(z) = \frac{\sqrt \mu} \pi &\mbox{ for } z=0 \end{cases}
 \quad \mu>0\,.
\end{align}

The following factorization of spectral projections is crucial, which stems from the fact that the magnetic field is pointing in the $z$-direction. We work with the identification $\Lp^2(\R^2) \otimes \Lp^2(\R)=\Lp^2(\R^3)$. Since the spectrum of $\mathrm{H}_{B}$ is the set $[B,\infty)$, we may always consider $\mu> B$ since for smaller values of $\mu$ the ground state is zero. If $B<\mu\le 3B$ then $\mathrm{D}_\mu = \mathrm{P}_0 \otimes \mathds{1}[(-\mathrm{i}\nabla^\parallel)^2 \le \mu - B]$. For higher values of $\mu$, let $\nu\coloneqq\lceil\frac12(\mu/B -1)\rceil\in\N$ be the smallest integer larger or equal to $\frac12(\mu/B -1)$, and let us set $\mu(\ell) \coloneqq\mu- B(2\ell+1)$. Then
\be 
\mathrm{D}_\mu = \mathds{1}(\mathrm{H}_{B}\le\mu) = \sum_{\ell=0}^{\nu-1} \mathrm{P}_\ell \otimes \mathds{1}[(-\mathrm{i}\nabla^\parallel)^2 \le \mu(\ell)] \label{Dmu exp}
\ee
with integral kernel ($x=(x^\perp,x^\parallel), y=(y^\perp,y^\parallel)$) 
\be \mathrm{D}_\mu(x,y) = \sum_{\ell=0}^{\nu-1} p_\ell(x^\perp,y^\perp) k_{\mu(\ell)}(x^\parallel-y^\parallel)\,. \label{int kernel Dmu}
\ee
For any Borel subset $\L\subset\R^3$ we define the spatial reduction (or truncation) of $\mathrm{D}_\mu$ to $\L$ by
\be \mathrm{D}_\mu(\L) \coloneqq \mathds{1}_\L \mathrm{D}_\mu \mathds{1}_\L\,. \label{def Dmu L}
\ee 

Before we define the main object in this paper, we introduce for any $\g>0$ the $\g$-R\'enyi entropy function, $h_\g:[0,1]\to[0,\ln(2)]$, 
\begin{align} \label{def:entropy}
h_\g(t) \coloneqq &\ \frac{1}{1-\g}\ln\big(t^\g + (1-t)^\g\big)\,,\ \g\not = 1\,,\\[0.2cm]
h_1(t) \coloneqq &\ -t\ln(t) - (1-t)\ln(1-t)\mbox{ if } t\not\in\{0,1\} \mbox{ and } h_1(0)\coloneqq h_1(1)\coloneqq 0\,. 
\end{align}
Now, for a ground state described by the projection $\mathrm{D}_\mu=\mathds{1}(\mathrm{H}_B\le\mu)$ as above, a Borel subset $\L\subset\R^3$, and localized ground-state projection, $\mathrm{D}_\mu(\L) $, we define the $\g$-R\'enyi entanglement entropy of the ground state at Fermi energy $\mu$ localized (in space) to $\L$ by
\be 
\mathrm S_\g(\L) \coloneqq \tr h_\g(\mathrm{D}_\mu(\L))\,.
\ee
Here, $\tr$ refers to the (usual Hilbert space) trace on $\mathsf{L}^2(\R^d)$. For bounded $\L$, $h_\g(\mathrm{D}_\mu(\L))$ is trace-class by the same arguments as in the proof of Lemma 7 in \cite{LSSmagn}; thus,  the entanglement entropy $\mathrm S_\g(\L)$ is trivially a positive number. This entropy is a rather complicated function of $\L$, but there is a chance to describe it for large regions. To this end, we scale a fixed set $\L$ by $L, L \ge 1$, and we determine the leading growth (scaling) of the entropy $\mathrm S_\g(L\L)$ as $L\to\infty$.

As there does not seem to be a common definition for regions with piecewise differentiable boundary, we will now provide the one used in this paper. 

\begin{defin} \label{def ps} Let $0 < \alpha <1, d\in\N$. A \emph{region} $\Lambda\subset \R^{d+1}$ is a finite union of bounded, open, connected sets in $\R^{d+1}$ such that their closures (denoted by $\bar{\cdot}$) are disjoint. The boundary $\p\L$ is the set $\bar{\L}\setminus\L$. We assume that the closures $\overline{\L}$ and $\L^\complement$ are topological manifolds with boundary $\p \L$.

We call a bi-Lipschitz\footnote{\label{biLip def} A function $f$ is bi-Lipschitz if there is a constant $C_{\text{lip}} \in \R^+$ such that $C_{\text{lip}}^{-1}\lVert x-y \rVert \le \lVert f(x)-f(y) \rVert \le C_{\text{lip}} \lVert x-y \rVert$. Such a function $f$ is (obviously) invertible on its image and satisfies $C_{\text{lip}}^{-1}\lVert x-y \rVert \le \lVert f^{-1}(x)-f^{-1}(y) \rVert \le C_{\text{lip}} \lVert x-y \rVert$. } map $\Psi \colon [0,1]^d \to \partial \L$ a Lipschitz chart of $\p \L$ if $\Psi((0,1)^d) \subset \p\L$ is relatively open. If in addition $\Psi \in\mathsf{C}^1((0,1)^d)$ and its differential $D\Psi$ satisfies the H\"older condition
\be \lVert D\Psi(x) - D\Psi(y) \rVert \le C \lVert x-y \rVert^\alpha \,, \quad x,y \in (0,1)^d \,, \label{C1a cond} \ee
for some constant $C$,  we say that $\Psi$ is a $\mathsf{C}^{1,\alpha}$ chart. A finite set of charts $(\Psi_i)_{i \in I}$ is called a piecewise atlas  of $\p \L$ if $ \p \L=\bigcup_{i\in I} \Psi_i([0,1]^d)$, and a global atlas  of $\p \L$ if $ \p \L=\bigcup_{i\in I} \Psi_i((0,1)^d)$.  We say an atlas is  a Lipschitz  atlas (resp.~$\mathsf{C}^{1,\a}$) if it consists of Lipschitz (resp.~$\mathsf{C}^{1,\a}$) charts.

We say that $\L$ is a \textit{piecewise Lipschitz region} (resp. \textit{global Lipschitz region}) if $\p \L$ admits  a piecewise Lipschitz atlas $(\Psi_{\text{pL},i})_{i \in I}$  (resp.~global Lipschitz  atlas $(\Psi_{\text{gL},i})_{i \in I}$). We call $\L$ a \textit{piecewise $\mathsf{C}^{1,\a}$ region} if it admits both a global Lipschitz atlas $(\Psi_{\text{gL},j})_{j \in J}$ and a piecewise $\mathsf{C}^{1,\a}$ atlas $(\Psi_{\text{pC},i})_{i \in I}$.

For a \textit{piecewise $\mathsf{C}^{1,\a}$ region} $\L$,  we fix a piecewise  $\mathsf{C}^{1,\alpha}$ atlas $(\Psi_{\text{pC},i})_{i \in I}$ and define the set of all edges, $\Gamma$ by
\be \Gamma  \coloneqq \bigcup_{i \in I}\Psi_{\text{pC},i}(\partial ( [0,1]^d)) \,. \ee
\end{defin}
\begin{remarks}
\begin{enumerate}
\item[(i)] Any \textit{global} Lipschitz region is obviously a \textit{piecewise} Lipschitz region.

\item[(ii)] Our definition of a \textit{global Lipschitz region} is a bit more general than the usual notion of a \emph{strong Lipschitz region} (see \cite[Pages 66--67]{Adams}), where every $v \in \p\L$ has a neighborhood $U_v \subset \R^{d+1}$ such that, after an affine-linear transformation, the set  $\L \cap U_v$ looks like the graph below a Lipschitz function $\Psi_v \colon (0,1)^d \to \R$. To get to our definition from this, one can choose the graph function $x \mapsto (x, \Psi_v(x))$ on $(0,1)^d$ as the bi-Lipschitz function needed in our definition. (As a Lipschitz function, it naturally extends to all of $[0,1]^d$.) 

\item[(iii)]
For a piecewise Lipschitz region $\L\subset\R^{d+1}$ and for $v \in \p \L$, let $n(v)$ be the unit outward normal vector at $v$. This is only well defined up to null sets with respect to the $d$-dimensional Hausdorff (surface) measure $\mathcal H^{d}$ on $\p\L$, see \autoref{null set1}.

\item[(iv)] As the set of edges, $\Gamma$, depends on the piecewise $\mathsf{C}^{1,\alpha}$ atlas $\Psi_{\mathrm{pC},i}$ it may be a different set depending on the atlas. 
\end{enumerate}
\end{remarks}

For a continuous function $f:[0,1]\to\C$ with $f(0)=0$ and being H\"older continuous at the two endpoints $0$ and $1$, we introduce the linear functional 
\be \label{def: I}
f\mapsto \mathsf I(f) \coloneqq \frac{1}{4\pi^2}\int_0^1\mathrm dt\, \frac{f(t)-tf(1)}{t(1-t)}\,.
\ee
By our assumption, $|\mathsf I(f)|<\infty$. We note for later use two special cases. Namely, $\mathsf{I}(m)\coloneqq \mathsf I((\cdot)^m) = -1/(4\pi^2)\,\sum_{r=1}^{m-1}r^{-1}$; as usual we interpret the sum on the right-hand side as zero if $m=1$, which coincides with the vanishing of $\mathsf{I}$ on affine linear functions. The second example concerns the $\g$-R\'enyi entropy function $h_\g$ defined in \eqref{def:entropy}. Here, $\mathsf I(h_\g) = (1+\g)/(24\g)$, see~\cite{LSS1}. 

Our first main result is the following theorem, which we prove in the next section.

\begin{thm} \label{thm:polynomial} Let $f:[0,1]\to\C$ be a polynomial with $f(0)=0$, let $\L \subset \R^3 , \mu>B>0, \nu \coloneqq\lceil\frac12(\mu/B -1)\rceil\in\N$, the smallest integer larger or equal to $\frac12(\mu/B -1)$, and $\mu(\ell) \coloneqq\mu- (2\ell+1)B$. Let $ \mathrm{D}_\mu (L\L) $ be the operator defined in \eqref{def Dmu L}. 
\begin{enumerate}[(i)]
\item{ 
If $\L$ is a piecewise Lipschitz region (see \autoref{def ps}), then we have the asymptotic expansion of the trace on $\mathsf{L}^2(\R^3)$,
\begin{align} \label{main formula}
\tr f(\mathrm{D}_\mu({L\L})) &= L^3 \frac{B}{2\pi^2} \,\sum_{\ell=0}^{\nu-1} \sqrt{\mu(\ell)} f(1) |\L| \notag
\\
& +  L^2\ln(L) \nu B \,\mathsf{I}(f) \,\frac{1}{\pi} \int_{\p\L} \mathrm{d} \mathcal H^2(v) \,\lvert n(v) \cdot e_3 \rvert + o(L^2\ln(L))\,,
\end{align}
as $L\to\infty$. Here, $n(v)$ is the unit normal outward vector at $v\in\p\L$, which is well-defined for almost every $v \in \p \L$, and $\mathcal H^2$ is the two-dimensional (surface) Hausdorff measure on $\p\L$.}
\item{If $\L$ is a piecewise $\mathsf{C}^{1,\alpha}$ region (see \autoref{def ps}), then the error term is $O(L^2)$ instead of $o(L^2\ln(L))$.} 
\end{enumerate}
\end{thm}

\begin{remarks} \begin{enumerate}
\item[(i)] The condition $f(0)=0$ is no restriction in the sense that in general the operator on the left-hand side has to be replaced by $f(\mathrm{D}_\mu({L\L})) - f(0)\mathrm{D}_\mu({L\L})$ and $\mathsf{I}(f)$ on the right-hand side by $\mathsf{I}(\tilde{f})$ with $\tilde{f}(t):= f(t) - (1-t)f(0)$.  

\item[(ii)] For the ideal Fermi gas with one-particle Hamiltonian $\mathrm{H}_0 = -\Delta$ on $\mathsf{L}^2(\R^3)$, Fermi energy $\mu>0$, ground state Fermi projection $\mathrm{D}_\mu = \mathds{1}(-\Delta\le\mu)$ and Fermi sea $\G \coloneqq \{p\in\R^3 : p^2\le \mu\}$ it was proved in \cite{LSS1} that
\be \label{LSSformula}\tr f(\mathrm{D}_\mu(L\L)) = L^3 f(1) \lvert\G/(2\pi) \rvert \lvert \L\rvert + L^2\ln(L) \,\frac{\mu}{2\pi}\,\mathsf{I}(f) \, \mathcal H^2(\p\L) + o(L^2\ln(L))
\ee
as $L\to\infty$. To this end, note that $|\G| = \frac{4\pi}{3} \mu^{3/2}$ and that our functional $\mathsf{I}$ here is the same as the functional $I$ in \cite{LSS1}. The double-surface integral $J(\p\G,\p\L)$~\cite[(2)]{LSS1} equals $\frac{\mu}{2\pi}\mathcal H^{2}(\p\L)$.

Letting $B$ tend to zero in \eqref{main formula} but keeping the Fermi energy $\mu$ fixed, the prefactor $\nu B$ tends to $\mu/2$. The remaining integral over $\p\L$ is independent of the strength $B$ and remains fixed. For the volume term we have in this limit
\begin{align*} \frac{B}{2\pi^2}\sum_{\ell=0}^{\nu-1} \sqrt{\mu - (2\ell+1)B} &\sim \frac{\mu^{3/2}}{4\pi^2 \nu} \sum_{\ell=0}^\nu \sqrt{1-\ell/\nu} \sim \frac{\mu^{3/2}}{4\pi^2} \int_0^1\text dx\,\sqrt{x} = \frac{\mu^{3/2}}{6\pi^2}\,.
\end{align*}
In this limit the volume term equals the above volume term at $B=0$ as in \eqref{LSSformula}. To summarize, we obtain 
\begin{align*} \lim_{B\downarrow0} \mbox{rhs of } \eqref{main formula} &= L^3 f(1) \frac{\mu^{3/2}}{6\pi^2}|\L| + L^2\ln(L) \frac{\mu}{2\pi} \,\mathsf I(f) \int_{\p\L} \mathrm{d}\mathcal H^2(v) \,|n(v)\cdot e_3| + o(L^2\ln(L))\,,
\end{align*}
which is identical to the right-hand side (rhs) of \eqref{LSSformula} except for the prefactor depending on $\p\L$.

\item[(iii)] There is no 'level mixing' at the order in $L^2\ln(L)$ in the sense that each Landau level enters individually in the numerical coefficient. In \cite{LSSmagn}, we proved that level mixing occurs in the two-dimensional setting at the next-to-leading order, namely at the order $L$. We expect level mixing to occur in the present case at the order $L^2$. This is certainly possible to prove, say for a cylindrical region, but it requires a three-term expansion in the $x^\parallel$-coordinate and the by now proved two-term expansion in the $x^\perp$-coordinate \cite{LSSmagn}. The caveat for us to proceed with this question is that the mentioned three-term expansion has not been proved so far for the entropy function. This is an interesting open problem.

\item[(iv)] For \eqref{main formula} to hold we require only weak regularity of the boundary $\p\L$ like in the proof in \cite{LSS1} for the ideal Fermi gas. In contrast, the proof of the corresponding two-term asymptotics for the two-dimensional model in \cite{LSSmagn} required $\mathsf{C}^3$ smooth regions. This smoothness was a technical condition and may not be necessary. On the other hand and more importantly, only the leading contribution of the two-dimensional Hamiltonian enters and the extra logarithm stems from an expansion in the longitudinal direction, where weaker conditions suffice.
\end{enumerate}
\end{remarks}

\section{Proof of \texorpdfstring{\autoref{thm:polynomial}}{Theorem \ref{thm:polynomial}}}
We split the proof into two steps. The first one is the lemma below, which reduces the computation of the trace to an integral of the trace of the projection operator $\mathds{1}[(-\mathrm{i}\nabla^\parallel)^2 \le \mu]$ localized to the sets $L\Lambda_{x^\perp}\subset \R$ with respect to $x^\perp\in\R^2$. The second step starts from there, proves an asymptotic expansion of this trace, and finishes the proof of \autoref{thm:polynomial}.

\begin{defin}  \label{def L_x}
For any Borel set $E\subset \R^3$ and any $x^\perp\in\R^2$ we define $E_{x^\perp} \coloneqq \{x^\parallel\in\R : (x^\perp,x^\parallel)\in E\}$ to collect the third components of the intersection $E\cap (\{x^\perp\}\times\R)$. 
\end{defin}

\begin{lemma}\label{lem:reduction}
Let $m \in \mathbb N$ with $m \ge 2$. Then, under the same conditions as in \autoref{thm:polynomial}(i), there is a constant $C$ depending only on $B,m$ and $\mu$ such that 
\begin{align} \label{not main formula} 
\Big|\tr (\mathrm{D}_\mu({L\L}))^m - L^2\frac{B}{2 \pi}\sum_{\ell=0}^{\nu-1} \int_{\R^2} \mathrm{d} x^\perp\, \operatorname{tr} \left( \mathds{1}_{L \L_{x^\perp}}\mathds{1}[(-\mathrm{i}\nabla^\parallel)^2 \le \mu(\ell)]  \mathds{1}_{L \L_{x^\perp}} \right)^m\Big|\le C \mathcal K(\L) L^2\,,
\end{align}
where the $\L$ dependent constant $\mathcal K(\L)$ is defined in \autoref{tubular nbh small}; it is positive and finite for any piecewise Lipschitz region $\L$. Note that $L \L_{x^\perp} \coloneqq L \left(\L_{x^\perp} \right)$ is (in general) different from $(L \L)_{x ^\perp}$. 
\end{lemma}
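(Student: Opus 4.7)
The plan is to expand $\tr(\mathrm{D}_\mu(L\L))^m$ using the kernel factorization~\eqref{int kernel Dmu} and compare it to the right-hand side of~\eqref{not main formula} by a ``planar freezing'' argument. With cyclic indices $x_{m+1}\coloneqq x_1$, the trace expands as
\begin{equation*}
\tr(\mathrm{D}_\mu(L\L))^m = \sum_{\ell_1,\ldots,\ell_m = 0}^{\nu-1} T_{\vec\ell}\,, \quad T_{\vec\ell} \coloneqq \int \prod_{i=1}^m \mathds{1}_{L\L}(x_i)\, p_{\ell_i}(x_i^\perp, x_{i+1}^\perp)\, k_{\mu(\ell_i)}(x_i^\parallel - x_{i+1}^\parallel)\, \mathrm{d} x_i\,.
\end{equation*}
For each $i = 2,\ldots,m$, I replace $\mathds{1}_{L\L}(x_i^\perp, x_i^\parallel)$ by $\mathds{1}_{(L\L)_{x_1^\perp}}(x_i^\parallel)$ and write $T^{\mathrm{f}}_{\vec\ell}$ for the resulting ``frozen'' integral. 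It then suffices to show (i)~the sum $\sum_{\vec\ell}T^{\mathrm{f}}_{\vec\ell}$ equals the right-hand side of~\eqref{not main formula} exactly, and (ii)~the total replacement error is of order $\mathcal K(\L)L^2$.

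For (i), after freezing, the transverse integrations over $x_2^\perp,\ldots,x_m^\perp$ decouple from the longitudinal ones. Iterating the Landau-level orthogonality $\int p_\ell(x,y)\,p_{\ell'}(y,z)\,\mathrm{d} y = \delta_{\ell\ell'}\,p_\ell(x,z)$ together with $p_\ell(x,x) = B/(2\pi)$ collapses the transverse chain $\int p_{\ell_1}(x_1^\perp,x_2^\perp)\cdots p_{\ell_m}(x_m^\perp,x_1^\perp)\,\mathrm{d} x_2^\perp\cdots\mathrm{d} x_m^\perp$ to the Kronecker factor $\delta_{\ell_1 = \cdots = \ell_m}\cdot B/(2\pi)$. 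The remaining longitudinal integral is precisely $\tr(\mathds{1}_{(L\L)_{x_1^\perp}} K_{\mu(\ell)}\mathds{1}_{(L\L)_{x_1^\perp}})^m$ with $K_\mu \coloneqq \mathds{1}[(-\mathrm{i}\nabla^\parallel)^2 \le \mu]$, and the change of variables $x_1^\perp = L x^\perp$ together with the identity $(L\L)_{Lx^\perp} = L\L_{x^\perp}$ extracts the $L^2$-prefactor and exactly reproduces the right-hand side of~\eqref{not main formula}.

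For (ii), I telescope
\begin{equation*}
\prod_{i=1}^m\mathds{1}_{L\L}(x_i) - \prod_{i=1}^m\mathds{1}_{(L\L)_{x_1^\perp}}(x_i^\parallel) = \sum_{k=2}^m (\cdots)\bigl[\mathds{1}_{(L\L)_{x_k^\perp}}(x_k^\parallel) - \mathds{1}_{(L\L)_{x_1^\perp}}(x_k^\parallel)\bigr](\cdots)
\end{equation*}
and bound each summand separately. For fixed planar variables, the longitudinal part of the $k$th term reads as a trace on $\mathsf L^2(\R)$ of the form $\tr(\mathds{1}_{S_k} K_{\mu(\ell_k)} B)$, where $S_k\coloneqq(L\L)_{x_1^\perp}\triangle(L\L)_{x_k^\perp}$ and $B$ is a bounded product of projections and indicator functions with $\|B\|\le 1$. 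Combining Hölder's inequality for Schatten classes with the projection identity $K_\mu^2 = K_\mu$ yields the uniform bound $|\tr(\mathds{1}_{S_k} K_{\mu(\ell_k)} B)| \le C\,|S_k|$, where $C$ depends only on $\mu$ and $m$; this is the crucial step, absorbing all $m - 1$ remaining longitudinal integrations into a single factor $|S_k|$ with no additional $L$-dependence. With the Gaussian decay $|p_\ell(x,y)| \le C\exp(-c\|x-y\|^2)$ and iterative chain-integration in the transverse direction (by completing the square), the error bound reduces to $\int\mathrm{d} z\,\exp(-c\|z\|^2)\,|L\L\triangle(L\L-(z,0))|$. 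Invoking~\autoref{tubular nbh small}, which supplies $|L\L\triangle(L\L+(v,0))|\le \mathcal K(\L)L^2\|v\|$ for piecewise Lipschitz $\L$, together with the finite Gaussian first moment $\int\exp(-c\|z\|^2)\|z\|\,\mathrm{d} z <\infty$, gives the desired $O(\mathcal K(\L)L^2)$. Summing over the finitely many $k \in \{2,\ldots,m\}$ and $\vec\ell\in\{0,\ldots,\nu-1\}^m$ concludes the proof.

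The main obstacle is the slow $1/|z|$ decay of the longitudinal kernel $k_\mu$. A crude pointwise bound $|k_\mu|\le\sqrt\mu/\pi$ would inflate each longitudinal integration by a factor $L$, giving a total error of order $L^{m+1}$, hopelessly large. The remedy is to retain the operator-trace structure in the longitudinal variable and use the projection identity $K_\mu^2 = K_\mu$ to collapse the relevant trace to the explicit value $k_\mu(0)|S_k|$; this is the essential technical point that makes the planar freezing argument efficient enough to yield the sharp $L^2$ error.
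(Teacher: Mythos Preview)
Your proof is correct and takes a genuinely different route from the paper in the error estimate. Part (i) is identical to the paper. For part (ii), the paper bounds the error pointwise: it identifies the ``bad'' set $U\subset\R^{3m}$ where the frozen and unfrozen indicator products differ, shows $U\subset\bigcup_j U_j$ with $U_j$ defined by $\operatorname{dist}(x_j,L\partial\L)\le m\|\mathbf y^\perp\|$, and then integrates the pointwise kernel bound $|\mathrm D_\mu(x,y)|\le C\exp(-c\|y^\perp\|^2)/\langle y^\parallel\rangle$ over $U_0$. The longitudinal integral produces the $m$-fold cyclic convolution $\int_{\R^{m-1}}\prod_{j=1}^m\langle y_j^\parallel\rangle^{-1}\,\mathrm d\mathbf y^\parallel$, which the paper controls by a separate Fourier/Bessel computation in Appendix~\ref{kmu integrable section}. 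Your telescope plus the Schatten--H\"older bound $\|K_{\mu(\ell_{k-1})}\mathds 1_{S_k}K_{\mu(\ell_k)}\|_1\le\|K_{\mu(\ell_{k-1})}\mathds 1_{S_k}\|_2\,\|\mathds 1_{S_k}K_{\mu(\ell_k)}\|_2=C|S_k|$ bypasses that convolution estimate entirely, and the Fubini step $\int|S_k|\,\mathrm dx_1^\perp=|L\L\triangle(L\L-(z,0))|$ converts the problem directly into a tubular-neighborhood bound. This is cleaner and more elementary; the paper's pointwise route has the minor advantage of making the $m$-dependence of the constant ($C^m m!$) fully explicit.

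Two small points to tighten. First, the difference $\mathds 1_{(L\L)_{x_k^\perp}}-\mathds 1_{(L\L)_{x_1^\perp}}$ is a signed function on $S_k$, so the longitudinal trace is not literally of the form $\tr(\mathds 1_{S_k}K_{\mu(\ell_k)}B)$; split $S_k=S_k^+\cup S_k^-$ and treat the two pieces separately. Also, the H\"older step needs the $K_\mu$ on \emph{both} sides of $\mathds 1_{S_k}$ (which you have, via cyclicity, since $B$ ends with $K_{\mu(\ell_{k-1})}$); the phrasing ``collapse the relevant trace to the explicit value $k_\mu(0)|S_k|$'' overstates this---it is a bound, not an identity. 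Second, \autoref{tubular nbh small} actually gives $|L\L\triangle(L\L+(v,0))|\le|B_{\|v\|}(L\partial\L)|\le\mathcal K(\L)(L^2\|v\|+\|v\|^3)$, not just $\mathcal K(\L)L^2\|v\|$; the extra $\|v\|^3$ term is harmless against the Gaussian weight (and is $\le L^2\|v\|$ once $\|v\|\le L$, the complementary region contributing $O(1)$), so the conclusion stands.
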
 

\begin{proof} We utilize the same changes of coordinates in the first two components (that is, for the planar $x_0^\perp$-coordinates) as in \cite{LSSmagn}. For the convenience of the reader we repeat all steps. 

As $\L$ is bounded, the operator $\mathds{1}_{L\L} \mathrm{D}_\mu$ is Hilbert--Schmidt and therefore $\mathrm{D}_\mu(L \L)$ is trace-class. We may write
\be \tr \mathrm{D}_\mu(L \L)^m = \int_{\R^3} \text d x_0\, \mathrm{D}_\mu(L \L)^m(x_0,x_0)
\ee
with integral kernel
\be \mathrm{D}_\mu(L \L)(x,y) = \sum_{\ell=0}^{\nu-1} p_\ell(x^\perp,y^\perp) k_{\mu(\ell)}(x^\parallel,y^\parallel)\,,\quad x=(x^\perp,x^\parallel)\,,y=(y^\perp,y^\parallel)\,, \ee
as in \eqref{int kernel Dmu}. Therefore, the trace is of the form
\begin{align*} \tr \mathrm{D}_\mu(L \L)^m 
= & \int_{L\L} \mathrm dx_0\sum_{\ell_1,\ldots,\ell_m=0}^{\nu-1} \int_{\R^{2(m-1)}} \text d x^\perp_1\cdots\text d x^\perp_{m-1}\, p_{\ell_1}(x^\perp_0,x^\perp_1) p_{\ell_2}(x^\perp_1,x^\perp_2)\cdots p_{\ell_m}(x^\perp_{m-1},x^\perp_0) 
\\
&\times\,\int_{\R^{m-1}} \text d x_1^\parallel\cdots\text d x_{m-1}^\parallel\, k_{\mu(\ell_1)}(x^\parallel_0-x_1^\parallel) \cdots k_{\mu(\ell_m)}(x_{m-1}^\parallel-x^\parallel_0)\, 1_{L\L}(x_1)\cdots 1_{L\L}(x_{m-1})\,.
\end{align*}
We begin by approximating $1_{L\L}(x_j)$ by $1_{L\L}(x_0^\perp,x_j^\parallel)$. We call the resulting approximate term $T(L\L)$. This means
\begin{align*} 
T(L \L) \coloneqq &\int_{L \L} \mathrm d x_0\sum_{\ell_1,\ldots,\ell_m=0}^{\nu-1} \int_{\R^{2(m-1)}} \text d x^\perp_1\cdots\text d x^\perp_{m-1}\, p_{\ell_1}(x^\perp_0,x^\perp_1) p_{\ell_2}(x^\perp_1,x^\perp_2)\cdots p_{\ell_m}(x^\perp_{m-1},x^\perp_0) 
\\
&\times\,\int_{\R^{m-1}} \text d x_1^\parallel\cdots\text d x_{m-1}^\parallel\, k_{\mu(\ell_1)}(x_0^\parallel-x_1^\parallel) \cdots k_{\mu(\ell_m)}(x_{m-1}^\parallel-x_0^\parallel)\,  \\
&  \quad \times1_{L\L}(x_0^\perp,x_1^\parallel)\cdots 1_{L\L}(x_0^\perp,x_{m-1}^\parallel)\, . 
\end{align*}
As the second line is independent of $x_j^\perp$, the integrals over $x_1^\perp, \ldots , x_{m-1}^\perp$ can be easily resolved and yield the diagonal of the integral kernel of the operator $\mathrm{P}_{\ell_1} \cdots \mathrm{P}_{\ell_{m}}$ at $x_0^\perp$, which is $B/(2\pi)$, if $\ell_1= \dots =\ell_m$ and $0$ otherwise. Thus, we have
\begin{align*} T(L \L) &= \int_{L \L} \mathrm dx_0\sum_{\ell=0}^{\nu-1} \frac B {2\pi} 
 \quad \int_{\R^{m-1}} \text d x_1^\parallel\cdots\text d x_{m-1}^\parallel\, k_{\mu(\ell)}(x_0^\parallel-x_1^\parallel)\cdots k_{\mu(\ell)}(x_{m-1}^\parallel-x_0^\parallel)\, \\
 & \qquad \times 1_{L\L}(x_0^\perp,x_1^\parallel)\cdots 1_{L\L}(x_0^\perp,x_{m-1}^\parallel)\,.  
\end{align*}
Now, we set $x^\perp \coloneqq x_0^\perp /L $ and observe $1_{L \L}(x^\perp_0, x_j^\parallel)=1_{L \L_{x ^\perp}}(x_j^\parallel)$. Thus, we have
\begin{align*} T(L \L) &= L^2 \int_{\R^2} \mathrm d x ^\perp  \sum_{\ell=0}^{\nu-1} \frac B {2\pi}  \int_{L \L_{x^\perp} } \mathrm dx_0^\parallel 
 \int_{\R^{m-1}} \text d x_1^\parallel\cdots\text d x_{m-1}^\parallel\, \\
 & \qquad \times k_{\mu(\ell)}(x_0^\parallel-x_1^\parallel) \cdots k_{\mu(\ell)}(x_{m-1}^\parallel-x_0^\parallel)\,  1_{L\L}(Lx^\perp,x_1^\parallel)\cdots 1_{L\L}(Lx^\perp,x_{m-1}^\parallel)\,  \\
&= L^2 \frac{B}{2\pi}\int_{\R^2} \mathrm d x ^\perp  \sum_{\ell=0}^{\nu-1} \tr \left(\mathds{1}_{L \L_{x ^\perp} } \mathds{1}[(-\mathrm{i}\nabla^\parallel)^2 \le \mu(\ell)]  \mathds{1}_{L \L_{x^\perp} } \right)^m\,,
\end{align*}
which is the expression in the claim. Thus, we are left to bound the error term of our approximation. Let us denote by $U \subset \R^{3m}$ the set of all tuples $(x_0,x_1, \dots ,x_{m-1})$ where $1_{L\L} (x_0) 1_{L \L} (x_1) \cdots 1_{L \L} (x_{m-1}) $ is not equal to $1_{L\L} (x_0) 1_{L \L} (x_0^\perp,x_1^\parallel) \cdots 1_{L \L} (x_0^\perp,x_{m-1}^\parallel)$. Then, using the notation $x_m \coloneqq x_ 0$ we trivially have
\be \big\lvert T(L \L) - \tr \mathrm{D}_\mu(L \L)^m  \big\rvert \le \int_U \mathrm dx_0 \mathrm dx_1 \cdots \mathrm dx_{m-1} \prod_{j=0}^{m-1} \lvert \mathrm D_\mu(x_j, x_{j+1} ) \rvert\,. \ee

We will now enlarge $U$ until we get a set where the integral can easily be calculated. Let $(x_0,x_1,\dots, x_{m-1}) \in U$. Then, there is a $j \in \{ 1, \dots , m-1\}$ such that $1_{L \L} (x_j) \neq 1_{L \L} (x_0^\perp, x_j^\parallel)$. Thus, the line between $x_j$ and $(x_0^\perp,x_j^\parallel)$ has to intersect the boundary $L \partial \L$, which implies $\operatorname{dist}(x_j, L \partial \L) \le \lVert x_j^\perp - x_0^\perp \rVert$. By the triangle and mean inequalities, we observe that
\be \operatorname{dist}(x_j, L \partial \L) \le \lVert x_j^\perp - x_0^\perp \rVert \le \sum_{k=1} ^{m} \lVert x_k^\perp - x_{k-1}^\perp  \rVert \le  \sqrt{m}  \sqrt { \sum_{k=1} ^{m} \lVert x_k^\perp - x_{k-1}^\perp  \rVert ^2  } \,.\label{Uj def approx}\ee
For $j \in \{0, \dots, m-1\}$, let $U_j \subset \R^{3m}$ be the set of all $(x_0,x_1, \dots, x_{m-1})\in\R^{3m}$ satisfying 
\be \operatorname{dist}(x_j, L \partial \L) \le {\sqrt {m} }  \sqrt { \sum_{k=1} ^{m} \lVert x_k^\perp - x_{k-1}^\perp  \rVert ^2  } \,.  \ee
As $U \subset \bigcup_{j=1}^{m-1} U_j$, we see that
\begin{align}
 \int_{U}  \mathrm dx_0 \mathrm dx_1 \cdots \mathrm dx_{m-1} \prod_{j=0}^{m-1} \lvert \mathrm D_\mu(x_j, x_{j+1} ) \rvert &\le \sum_{k=1} ^{m-1}  \int_{U_k} \mathrm dx_0 \mathrm dx_1 \cdots \mathrm dx_{m-1} \prod_{j=0}^{m-1} \lvert \mathrm D_\mu(x_j, x_{j+1} )\rvert  \\
& = (m-1)  \int_{U_0} \mathrm dx \mathrm dx_1 \cdots \mathrm dx_{m-1} \prod_{j=0}^{m-1} \lvert \mathrm D_\mu(x_j, x_{j+1} )\rvert\,.
\end{align}
The cyclic parameter shift $(x,x_1, \dots, x_{m-1}) \mapsto (x_1, x_2, \dots , x)$ sends $U_j$ to $U_{j+1}$ and does not change the integrand. For $1 \le j \le m$, let $y_j \coloneqq x_j- x_{j-1}$. We will change variables from $(x_0,x_1, \dots, x_{m-1})$ to $(x_0,y_1, \dots , y_{m-1}) \eqqcolon (x_0, \mathbf y)$. Using $y_m^\perp = - \sum_{j=1}^{m-1} y_j^\perp$, similar to \eqref{Uj def approx}, we observe that
\be
m \sum_{k=1} ^{m} \lVert x_k^\perp - x_{k-1}^\perp  \rVert ^2 = m( \lVert \mathbf y ^\perp \rVert^2 + \lVert y_m^\perp \rVert^2) \le m \lVert \mathbf y ^\perp \rVert^2  + m(m-1) \lVert \mathbf y ^\perp \rVert ^2= m^2 \lVert \mathbf y^\perp \rVert^2\,.
\ee
Thus, under this change of variables the set $U_0$ is mapped into the set  
\be V \coloneqq \left\{(x_0,y_1, \dots, y_{m-1} ) \in \R^{3m} \colon\operatorname{dist}(x_0, L \partial \L) \le m \lVert \mathbf y^\perp \rVert \right\} \,.\ee 
Let us first estimate the integrand in terms of the $y_j$'s. With \eqref{int kernel Dmu}, \eqref{Landau_kernel:eq} and \eqref{k mu}, we get  
\be \lvert \mathrm D_\mu(x_j, x_{j+1} ) \rvert \le  C_{\mu,B,1} \frac  {\exp(-B \lVert y_{j+1}^\perp \rVert^2/8)} {\langle y_{j+1}^\parallel \rangle} \,. \label{kernel est00} \ee 
We recall that $\langle x \rangle=\sqrt{1+ x^2}$ is the Japanese bracket. 

For $x_0 \in\R^3$, let $\Omega_{x_0} \coloneqq \{ \mathbf y ^\perp \in \R^{2(m-1)} \colon \operatorname{dist}(x_0, L \partial \L) \le m\lVert \mathbf y^\perp \rVert  \}$, and thus $V=\{(x_0,\mathbf y) \in \R^{3m} \colon \mathbf y^\perp \in \Omega_{x_0}\}$. We have
\begin{align}
\int_{U_0} \mathrm dx_0 \mathrm dx_1 &\cdots \mathrm dx_{m-1} \prod_{j=0}^{m-1} \lvert \mathrm D_\mu(x_j, x_{j+1} )\rvert \\
&\le C_{\mu,B,1}^m \int_{V} \mathrm dx_0 \mathrm d \mathbf y \prod_{j=1}^m \frac  {\exp(-B \lVert y_{j}^\perp \rVert^2/8)} {\langle y_{j}^\parallel \rangle} \\
&= C_{\mu,B,1}^m  \left(\int_{\R^{m-1}} \mathrm d\mathbf y^\parallel \prod_{j=1}^m  \frac  {1} {\langle y_{j}^\parallel \rangle}  \right)\int_{\R^3} \mathrm dx_0   \int_{\Omega_{x_0}} \mathrm d\mathbf y^\perp  \exp(- B  \lVert \mathbf y^\perp \rVert^2/8 ) \,. 
\end{align}
We need the estimate 
\be  \int_{\R^{m-1}} \mathrm d\mathbf y^\parallel \prod_{j=1}^m \frac  {1} {\langle y_{j}^\parallel \rangle} \le 2^m m! \label{kmu integrable eq}\,, \ee
 which is proved in \autoref{kmu integrable section}. We also have the bound
\begin{align}
 \int_{\Omega_{x_0}} \mathrm d\mathbf y^\perp  \exp(- B   \lVert \mathbf y^\perp \rVert^2 /8) &\le \sup_{\mathbf y^\perp \in \Omega_{x_0}} \left(\exp(- B \lVert \mathbf y^\perp \rVert^2 /9)\right) \int_{\R^{2(m-1)}} \mathrm d\mathbf y^\perp \exp(- B   \lVert \mathbf y^\perp \rVert^2 /72) \\
 &= \exp\left( \frac{ - B\operatorname{dist}(x_0, L \p\L)^2  } {9m^2 } \right) \sqrt{72 \pi/B }^{2(m-1)} \,.
 \end{align}
 Thus, we arrive at
 \begin{align}
 \int_{U_0} \mathrm dx \mathrm dx_1 \cdots \mathrm dx_{m-1} \prod_{j=0}^{m-1} \lvert \mathrm D_\mu(x_j, x_{j+1} )\rvert 
&\le  C_{\mu,B,1}^m C_{B,2}^m m! \int_{\R^3} \mathrm dx_0 \exp\left( \frac{ - B\operatorname{dist}(x_0, L \p\L)^2  } {9m^2 } \right) \\ 
&\le  C_{\mu,B,1}^m C_{B,2}^m  m! \sum_{k=0}^\infty \left \lvert B_{k+1} (L \partial \L) \right\rvert \exp \left(-\frac B {9 m^2} k^2 \right)\,.
\end{align} 
Here, we used an $(1,\infty)$ H\"older estimate on the sets $k \le \operatorname{dist} (x_0,L\partial \L)\le k+1$ for the integral over $\R^3$. We then enlarged these sets to the $k+1$-neighborhood $B_{k+1}(L \partial \L)$, as their measures can be estimated more easily. Thus, using \autoref{tubular nbh small} with $d=2$ and $r=k+1$, we arrive at
\begin{align}
&\lvert T(L \L) - \tr \mathrm{D}_\mu(L \L)^m  \rvert  \\
&\le (m-1)(C_{\mu,B,1} C_{B,2} )^m m! \sum_{k=0} ^\infty \left \lvert B_{k+1} (L \partial \L) \right\rvert \exp \left(-\frac B {9m^2} k^2 \right) \\
&\le (m-1)(C_{\mu,B,1} C_{B,2} )^m m! \sum_{k=0} ^\infty L^3 \left \lvert B_{\frac{k+1}L} ( \partial \L) \right\rvert \exp \left(-\frac B {9 m^2} k^2 \right) \\
&\le  (m-1)(C_{\mu,B,1} C_{B,2} )^m m! \sum_{k=0} ^\infty L^3  \mathcal K (\L)   \left( \frac{k+1} L + \frac{(k+1)^3} {L^3} \right) \exp \left(-\frac B {9 m^2} k^2 \right) \\
&\le  (m-1)(C_{\mu,B,1} C_{B,2} )^m m!  \mathcal K (\L )  L^2 \sup_{t>0}\left ((t+1)^3(t+2)^2 \exp\left(-\frac B {9m^2} t^2 \right) \right) \sum_{k=0} ^\infty \frac 1 {(k+1)(k+2)} \\
&\le  (m-1)(C_{\mu,B,1} C_{B,2} )^m m!  \mathcal K (\L)  L^2 C_{B,3} m^{5} \le \mathcal K (\L ) L^2 C_{\mu,B}^m m! \, ,
\end{align}
which was our claim.
\end{proof}

In the next step we accomplish the  
\begin{proof}[Proof of \autoref{thm:polynomial}] As the expression is linear in $f$, it suffices to consider monomials $f(t) = t^m$ with integer $m\ge1$. In the special case $m=1$, we just use \eqref{int kernel Dmu}, \eqref{Landau_kernel:eq}, and \eqref{k mu} to see
\begin{align} 
 \tr \mathrm D_{\mu}(L\L) &= \int_{L\L} \mathrm d x_0 \, \mathrm D_{\mu}(x_0,x_0) = \int_{L\L} \mathrm d x_0 \sum_{\ell=0}^{\nu-1} k_{\mu(\ell)}(0) p_\ell(x_0^\perp,x_0^\perp) \\
 &= \int_{L\L} \mathrm d x_0 \sum_{\ell=0}^{\nu-1} \frac{\sqrt{\mu(\ell)} }\pi  \frac B {2 \pi} = L^3 \frac B {2 \pi^2} \sum_{\ell=0}^{\nu-1} \sqrt{\mu(\ell)} 1^1 \lvert \L \rvert \,. 
 \end{align}
As $\mathsf I(1)=\mathsf I(id)=0$, this covers the case $m=1$ and we may from now on assume $m\ge 2$.

Our first aim is to understand the open sets $\L_{x^\perp}$. This is essentially a question about the nature of the sets $\L$. There are some results to choose from, so let us take a look. 
Due to \autoref{null set1} and \autoref{null sets2}, for Lebesgue almost every $x^\perp \in \R^2$, the set $\L_{x^\perp}$ is a finite union of disjoint intervals, $\p\left(\L_{x^\perp} \right)=(\p \L)_{x^\perp}$,  and $\#(\partial (\L_{x^\perp}))$ is twice the number of these intervals. Henceforth, we set $\p \L_{x^\perp} \coloneqq \p\left(\L_{x^\perp}\right)$. The (improved) asymptotic expansion
goes back to Landau and Widom\cite{LW} and is presented in \autoref{Appendix C}, see \autoref{1D final asym}. The coefficient $\mathsf{I}(m) = -1/(4\pi^2)\,\sum_{r=1}^{m-1}r^{-1}$ is mentioned below \eqref{def: I}.
 
For fixed $\L_{x^\perp}$, the error term $\varepsilon ( \L_{x^\perp}, L) $ remains bounded as $L \to \infty$. However, we need to know, whether this error term is integrable over $x^\perp$. Thus, the dependency on $\L_{x^\perp}$ is relevant. 
 
To derive the  $o(L^2 \ln(L))$ error term, we subtract the volume term, divide by $L^2 \ln(L)$ and use dominated convergence in order to exchange the limit $L \to \infty$ with the integral over $x^\perp$. Thus, instead of an estimate for the error term that is of a lower order in $L$ than $\ln(L)$, we only need an upper bound for the difference to the volume term, which is of order $\ln(L)$. This upper bound is provided by \autoref{lem: 1d bound}. As any interval in $L \L_{x^\perp}$ has length at most $CL$, we arrive at
\begin{align}
  \Big|\operatorname{tr}& \left(\mathds{1}_{L \L_{x^\perp} }\mathds{1}[(-\mathrm{i}\nabla^\parallel)^2 \le \mu(\ell)]  \mathds{1}_{L \L_{x^\perp} } \right)^m  -  \frac {\sqrt{ \mu(\ell)}} \pi L \lvert \L_{x^\perp} \rvert \Big|
\\
&= \Big \lvert \operatorname{tr}\Big[ \left( \mathds{1}_{L \L_{x^\perp} }\mathds{1}[(-\mathrm{i}\nabla^\parallel)^2 \le \mu(\ell)]  \mathds{1}_{L \L_{x^\perp} } \right)^m  - \mathds{1}_{L \L_{x^\perp} }\mathds{1}[(-\mathrm{i}\nabla^\parallel)^2 \le \mu(\ell)]  \mathds{1}_{L \L_{x^\perp} } \Big]\Big\rvert \\
& \le \left \lVert  \left( \mathds{1}_{L \L_{x^\perp} }\mathds{1}[(-\mathrm{i}\nabla^\parallel)^2 \le \mu(\ell)]   \mathds{1}_{L \L_{x^\perp} } \right)^m  -  \mathds{1}_{L \L_{x^\perp} }\mathds{1}[(-\mathrm{i}\nabla^\parallel)^2 \le \mu(\ell)]   \mathds{1}_{L \L_{x^\perp} } \right \rVert_1 \\
&\le  C \#( \partial \L_{x^\perp})  \ln(L) \,,
  \end{align}
where the constant $C$ depends on $m$, $\mu(\ell)$ and $\L$, but not on $x^\perp$. With this estimate, we apply dominated convergence to get
  \begin{align}
  &\lim_{L \to \infty} \frac 1 {L^2 \ln(L)} \Big(  \tr \mathrm{D}_\mu(L\L)^m  - BL^3 \lvert \L \rvert\sum_{\ell=0}^{\nu-1} \frac {\sqrt{\mu(\ell)}} {2 \pi^2} \Big) \\
  &= \sum_{\ell=0}^{\nu-1}  \lim_{L \to \infty} \frac 1 {L^2 \ln(L)}  \frac {B} {2\pi} L^2    \Big(\int_{\R^2} \mathrm{d}x^\perp \,\operatorname{tr} \big( \mathds{1}_{L \L_{x^\perp} }\mathds{1}[(-\mathrm{i}\nabla^\parallel)^2 \le \mu(\ell)]  \mathds{1}_{L \L_{x^\perp} } \big)^m  - \frac {\sqrt{\mu(\ell)}} \pi \lvert L \L_{x^\perp} \rvert \Big)
\\ 
  &= \sum_{\ell=0}^{\nu-1}   \frac {B} {2\pi} \int_{\mathbb R^2} \mathrm{d}x^\perp \lim_{L \to \infty} \frac 1 {\ln(L)} \Big(  \operatorname{tr} \big(\mathds{1}_{L \L_{x^\perp} }\mathds{1}[(-\mathrm{i}\nabla^\parallel)^2 \le \mu(\ell)]  \mathds{1}_{L \L_{x^\perp} } \big)^m  - \frac {\sqrt{\mu(\ell)}} \pi \lvert L \L_{x^\perp} \rvert \Big) 
\\
&=  \nu \frac{B}{2\pi} 2\,\mathsf{I}(m) \int_{\mathbb R^2} \mathrm{d}x^\perp \, \#( \partial \L_{x^\perp} )  = \nu B \,\mathsf{I}(m) \frac{1}{\pi}\int_{\p\L} \mathrm{d}\mathcal H^2(v) \, \lvert n(v)\cdot e_3 \rvert \,.
  \end{align}
We moved the sum over $\ell$ to the front, as every summand converges as $L \to \infty$. In the second line we used that $\int_{\R^2} \text d x^\perp \,|\L_{x^\perp}| = |\L|$. Finally we inserted \eqref{B.3: f=1} to obtain the expansion with error term $o(L^2\ln(L))$ as claimed in the theorem.

\medskip
For the second part, we need to show that the error term for polynomials can be bounded by $CL^2$, if $\L$ is a piecewise $\mathsf{C}^{1,\alpha}$ region for some $0< \alpha <1$, as defined in \autoref{def ps}. This time, we use \autoref{1D final asym} to deal with the trace of the one-dimensional operator. For that, we arrange each $\partial \L_{x^\perp}\coloneqq (\p \L)_{x^\perp}=\{w_{x^\perp1}, \dots, w_{x^\perp \#( \partial \L_{x^\perp})}\}\subset\R$ in the order of increasing third components and write
\begin{align} 
\Big|\operatorname{tr} &\left( \mathds{1}_{L \L_{x^\perp} }\mathds{1}[(-\mathrm{i}\nabla^\parallel)^2 \le \mu]  \mathds{1}_{L \L_{x^\perp} } \right)^m - \frac{\sqrt \mu}{\pi}  L \lvert \L_{x^\perp} \rvert - 2 \,\mathsf{I}(m) \#( \partial \L_{x^\perp} ) \ln(1+ L ) \Big| 
\\
&\le  C \sum_{i=1}^{\#( \partial \L_{x^\perp}) -1}\big(1+\lvert  \ln(\lvert w_{x^\perp i}-w_{x^\perp i+1} \rvert) \rvert\big) \\ 
&\le  C \sum_{i=1}^{\#( \partial \L_{x^\perp}) } \Big(1+ \lvert \ln\big( \inf_{v \in \partial \L_{x^\perp}\setminus w_{x^\perp i}} \lvert  w_{x^\perp i} -v\rvert \big)\rvert\Big)\,.
\end{align} 
In the last step, we used that the distance between any two points in $\p\L$ is bounded from above, as $\L$ is bounded to conclude that only short distances $\lvert v- v_i\rvert$ can lead to an error term larger than the $O(\#( \partial \L_{x^\perp}))$-term we have in front. A lower bound for the infimum is provided by \autoref{lx estimate}. This bound is zero in some cases, which leads to the logarithm being infinite. This just means that our integrand in the integral over $x^\perp$ attains infinity. The integral can still exist and we will show that it does.

As the terms of order $L^3$ and $L^2\ln(L)$ work just like in the previous case, we will only consider the error term. Hence, we need to estimate
\begin{align}
\int_{\R^2} \mathrm{d}x^\perp   & \sum_{i=1}^{ \#( \partial\L_{x^\perp})} \Big(1+ \lvert \ln( \inf_{v \in \partial \L_{x^\perp}\setminus w_{x^\perp i}} \lvert w_{x^\perp i} -v\rvert )\rvert \Big)
\\
&\le C \int_{\R^2} \mathrm{d}x^\perp   \sum_{i=1}^{\#(\partial \L_{x^\perp})} \Big(1+ \lvert \ln( \min\{ \operatorname{dist}(w_{x^\perp i},\Gamma), \lvert n((x^\perp, w_{x^\perp i})) \cdot e_3 \rvert^{\frac 1 \alpha})\rvert\}\Big) \\
&\le  C \int_{\R^2} \mathrm{d}x^\perp   \sum_{w \in  \{ x^\perp\} \times \partial \L_{x^\perp}} \Big(1+ \lvert \ln(\operatorname{dist}(w,\Gamma))\rvert + \lvert \ln(\lvert n(w) \cdot e_3 \rvert) \rvert \Big)\,.
\end{align}
In the first step, we applied \autoref{lx estimate} with the vectors $v_1 \coloneqq (x^\perp,w_{x^\perp i})$ amd $v_2 \coloneqq (x^\perp, v)$  noting that $\frac {v_1-v_2}{\|v_1 -v_2\|} = \pm e_3$. We now want to rewrite this integral as an integral over the boundary $\partial \L$. This is possible by \autoref{Lipschitz cov}. Hence, we have (recall that $\mathcal H^2$ is the canonical surface measure on $\p\L$),

\begin{align}
\int_{\R^2} \mathrm{d}x^\perp   & \sum_{i=1}^{\#( \partial\L_{x^\perp})} \Big(1+ \lvert \ln( \inf_{v \in \partial \L_{x^\perp}\setminus w_{x^\perp i}} \lvert w_{x^\perp i} -v\rvert )\rvert \Big)\\
&\le C \int_{\partial \L} \mathrm{d}\mathcal H^2(w)\, \big[1+ \lvert \ln(\operatorname{dist}(w,\Gamma))\rvert + \lvert \ln(\lvert n(w) \cdot e_3 \rvert) \rvert \big]\, \lvert n(w) \cdot e_3\rvert \\
&\le  C  +C \int_{\partial \L} \mathrm{d}\mathcal H^2(w)\,\lvert \ln(\operatorname{dist}(w,\Gamma))\rvert  \le C\,.
\end{align}
In the second step, we used that $0 \le \lvert n(w) \cdot e_3 \rvert \le 1$ and that for $0 \le t \le 1$, we have $ 0\le \lvert t \ln(t) \rvert \le 1/\mathrm{e}$. The last step is a rather lengthy, not particularly insightful calculation, which can be found in \autoref{dist to Gamma int lem}.

Once we put the factor $L^2$ back in front of this, we arrive at the error term $O(L^2)$ which completes the proof of the second part of this theorem.
\end{proof}

\section{Entanglement entropy}\label{Section:Entanglement}

Here is the main result of this paper.

\begin{thm}\label{entropy of ground state} Suppose that $\L\subset\R^3$ is a piecewise Lipschitz region and let $\mu> B$. Let $\nu\coloneqq\lceil \frac12(\mu/B -1)\rceil$ and let $h\colon [0,1] \to \R$ be a continuous function, which is $\b$-H\"older continuous at $0$ and $1$ for some $1 \ge \b>0$, and assume that $h(0)=h(1)=0$. Then, we have the asymptotic expansion
\be
\tr h(\mathrm D_\mu(L\L) )= L^2 \ln(L) \nu B \frac1 \pi \, \mathsf{I}(h) \int_{\p\L}\mathrm{d}\mathcal H^2(v)\,|n(v)\cdot e_3| + o(L^2\ln(L))  \,.
\ee
In particular, as the $\g$-R\'enyi entropy function $h_\g$ is $\b$-H\"older continuous for any $\b < \min(\g,1)$, the $\g$-R\'enyi entanglement entropy, $\mathrm S_\g(L\L)$, of the ground state at Fermi energy $\mu$ localized to $L\L$, satisfies the asymptotic expansion
\be 
\mathrm S_\g(L\L) = L^2\ln(L)\nu B\,\frac{1+\g}{24\g\pi}\int_{\p\L}\mathrm{d}\mathcal H^2(v)\,|n(v)\cdot e_3| + o(L^2\ln(L)) 
\ee
as $L\to\infty$. 
\end{thm}

We use certain estimates on traces. To this end, let us denote by $s_n({T}), n\in\N$, the singular values of the compact operator ${T}$ on a (separable) Hilbert space, arranged in decreasing order. The standard notation $\mathfrak S_p, 0<p<\infty$ is used for the class of operators with a finite \textit{Schatten--von Neumann quasi-norm}: 
\begin{align*}
\|{T}\|_p \coloneqq \bigg[\sum_{n=1}^\infty s_n({T})^p\bigg]^{\frac{1}{p}}<\infty\,.
\end{align*}
If $p\ge 1$, then $\| \cdot\|_p$ defines a norm. For $0 < p < 1$ it is a quasi-norm that satisfies the \textit{$p$-triangle inequality}
\begin{align}\label{p-tri:eq}
\| {T}_1+{T}_2\|_p^p\le \| {T}_2 \|_p^p + \|{T}_2\|_p^p\,.
\end{align}
The class $\mathfrak S_1$ is the standard trace-class. The class $\mathfrak S_2$ is the ideal of Hilbert--Schmidt operators. The $p$-Schatten quasi-norm estimate required for this proof is shown in \autoref{Hank:thm}.

\begin{proof}[Proof of \autoref{entropy of ground state}]
The proof goes along the same line of arguments as presented in \cite{LSS1} and \cite{LSSmagn}. We recall  that $\mathsf{I}(h_\g) = (1+\g)/(24\g)$ and thus we are left to show the claim for the function $h$. Let $r = \b/2$ and $\varepsilon>0$. We choose a smooth cutoff function  $\zeta_\varepsilon$ such that $0\le \zeta_\varepsilon\le 1$ and such that $\zeta$ vanishes on $[\varepsilon,1-\varepsilon]$ and equals $1$ on $[0,\varepsilon/2] \cup [1-\varepsilon/2,1]$. As $h$ is continuous and $\b$-H\"older continuous at $0$ and $1$, there is a constant $C$ such that
\be h(t) \le Ct^\b (1-t)^\b \,, \quad t \in [0,1]\,. \ee
This implies 
\begin{align}\label{zetah:eq}
|(\zeta_\varepsilon h)(t)| \le C \varepsilon^r t^r(1-t)^r\,\quad t \in [0,1]\,.
\end{align}
As the function $t \mapsto \frac {(1-\zeta_\varepsilon(t) )h(t) } {t(1-t)}$ is continuous, we can infer from the Stone--Weierstrass approximation theorem that there is a polynomial $p$ and a function $\delta_\varepsilon \colon [0,1] \to \R$ with $\lVert \delta_\varepsilon \rVert_{\Lp^\infty([0,1])} \le \varepsilon ^r $ and
\be \frac {(1-\zeta_\varepsilon(t)) h(t) } {t(1-t)} = p(t) + \delta_\varepsilon(t)\,, \quad  t \in [0,1] \,.\ee
Thus, we have
\begin{align}
h(t) = p(t) t(1-t) + \delta_\varepsilon(t) t(1-t)  + \zeta_\varepsilon(t) h(t) \eqqcolon p(t)t(1-t) + \phi_\varepsilon(t)\, .
\end{align}
As $t(1-t) \le t^r (1-t)^r$, we observe
\be \lvert \phi_\varepsilon(t) \rvert \le C\varepsilon^r t^r(1-t)^r\,, \quad  t \in [0,1] \,. \label{phi eps est} \ee
Thus, using \autoref{Hank:thm}, \eqref{Dmu exp} and  \eqref{p-tri:eq}, we arrive at
\begin{align}
 \lvert \tr \phi_\varepsilon(\mathrm D_\mu(L \L) ) \rvert &\le   C\varepsilon^r\tr ( \mathrm D_\mu(L\L)^r(1-\mathrm D_\mu(L\L))^r ) \\
 &=C\varepsilon^r \lVert \mathds {1}_{L \L} \mathrm D_\mu  \mathds {1}_{L \L^\complement} \mathrm D_\mu \mathds {1}_{L \L} \rVert_r^r \label{4.10} \\
 &=C\varepsilon^r \lVert  \mathds {1}_{L \L} \mathrm D_\mu  \mathds {1}_{L \L^\complement}  \rVert_{2r}^{2r} \\
 &\le C \varepsilon^r \sum_{\ell=0}^{\nu-1} \lVert  \mathds {1}_{L \L} (\mathrm{P}_\ell \otimes \mathds1( (-\mathrm{i} \nabla^\parallel)^2 \le \mu(\ell)))  \mathds {1}_{L \L^\complement}  \rVert_{2r}^{2r} \\
 &\le C\varepsilon^r C L^2 \ln(L)\,.  \label{phi eps trace est}
 \end{align}
 In \eqref{4.10}, we used that $\mathrm D_\mu$ is a projection. 
Let $q(t) \coloneqq p(t)t(1-t)$. Now, by linearity of $\mathsf {I}$ and the estimate \eqref{phi eps est}, we have
\be \lvert \mathsf{I}(h) - \mathsf{I}(q) \rvert = \lvert \mathsf{I} (\phi_\varepsilon) \rvert \le C \varepsilon^r \,\mathsf{I} (t \mapsto t^r(1-t)^r) \le C \varepsilon^r \,.\label{phi eps I est}\ee
\autoref{thm:polynomial}(i) applied for the polynomial $q$ with $q(0)=q(1)=0$ yields
\be \tr q(\mathrm D_\mu(L\L)) = L^2 \ln(L) B \nu \,\mathsf{I} (q) \frac 1 \pi\int_{\p\L}\mathrm{d}\mathcal H^2(v)\,|n(v)\cdot e_3| + o(L^2\ln(L)) \label{q asym exp}\,.  \ee
Now, combining \eqref{phi eps trace est}, \eqref{phi eps I est} and \eqref{q asym exp}, we arrive at
\be \limsup_{L\to\infty} \left|\frac{\tr h(\mathrm{D}_\mu(L\L))}{L^2\ln(L)} - \nu B \,{\sf I}(h)\,\frac{1}{\pi} \int_{\p\L} \mathrm{d} \mathcal H^2(v) \,|n(v)\cdot e_3| \right|\le C \varepsilon^r\,.
\ee
As $\varepsilon>0$ is arbitrary, we have proved the claim.
\end{proof}

\section{Schatten--von Neumann quasi-norm estimates}\label{Section: Schatten}

By a box in $\mathbb R^d$ we mean a Cartesian product of $d$ intervals. These intervals do not have to be bounded. We will denote subsets of $\mathbb R$ by $I$, of $\mathbb R^2$ by $\Upsilon$, and of $\mathbb R^3$ by $\Lambda$. We will combine known estimates for the two-dimensional magnetic Hamiltonian from \cite{LSSmagn} and for the one-dimensional Hamiltonian \cite{LSS1,Sob:Schatten} without a magnetic field.

Let $\Upsilon,\Upsilon'\subset \mathbb R^2$ be Lipschitz regions and let $I,I' \subset \mathbb R$ be finite unions of closed intervals. 
 Then we have
\begin{align}
\mathds{1}_{\Upsilon \times I}\big( \mathrm{P}_{ \ell } \otimes  \mathds{1}[(-\mathrm{i}\nabla^\parallel)^2 \le \mu]\big) \mathds{1}_{\Upsilon'\times I'} = \big(\mathds{1}_{\Upsilon} \mathrm{P}_{ \ell } \mathds{1}_{\Upsilon'} \big) \otimes \big(\mathds{1}_{I}  \mathds{1}[(-\mathrm{i}\nabla^\parallel)^2 \le \mu] \mathds{1}_{I'} \big) \label{product1}\,.
\end{align}
As the singular values of the tensor product of two operators are given by all possible products of pairs of the individual singular values, we have for any $0<p \le \infty \colon$
\begin{align}
\left \lVert  \mathds{1}_{\Upsilon \times I}\left( \mathrm{P}_{ \ell } \otimes  \mathds{1}[(-\mathrm{i}\nabla^\parallel)^2 \le \mu]\right) \mathds{1}_{\Upsilon'\times I'}\right \rVert_p = \left \lVert \mathds{1}_{\Upsilon} \mathrm{P}_{ \ell } \mathds{1}_{\Upsilon'} \right\rVert_p \left \lVert \mathds{1}_{I}  \mathds{1}[(-\mathrm{i}\nabla^\parallel)^2 \le \mu] \mathds{1}_{I'} \right\rVert_p \label{productnorm1}\,.
\end{align}
The following general properties will be useful:

\begin{lemma} \label{localization properties}
For any self-adjoint bounded operators $S,T\colon \mathsf{L}^2(\mathbb R^d) \to \mathsf{L}^2(\mathbb R^d)$, any measurable sets $\Omega_1,\Omega_2$, $\Omega_1',\Omega_2' \subset \mathbb R^d$ and any $0<p \le 1$, we have 
\begin{description}
\item[Symmetry] $ \left \lVert \mathds{1}_{\Omega_1} T \mathds{1}_{\Omega_2} \right \rVert_p = \left \lVert \mathds{1}_{\Omega_2} T \mathds{1}_{\Omega_1} \right \rVert_p$,
\item[Monotonicity I] $ \left \lVert \mathds{1}_{\Omega_1} T \mathds{1}_{\Omega_2} \right \rVert_p \le  \left \lVert \mathds{1}_{\Omega_1 \cup \Omega_1'} T \mathds{1}_{\Omega_2\cup \Omega_2'} \right \rVert_p$,
\item[Monotonicity II] If $0 \le S \le T$, then  $\lVert S \rVert_p \le \lVert T \rVert_p$,
\item[Subadditivity] $\left \lVert \mathds{1}_{\Omega_1 \cup \Omega_1'} T \mathds{1}_{\Omega_2} \right \rVert_p^p \le  \left \lVert \mathds{1}_{\Omega_1} T \mathds{1}_{\Omega_2} \right \rVert_p^p + \left \lVert \mathds{1}_{\Omega_1'} T \mathds{1}_{\Omega_2} \right \rVert_p^p$.
\end{description}
\end{lemma}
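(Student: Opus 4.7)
All four statements are standard consequences of the singular-value representation $\|A\|_p^p = \sum_n s_n(A)^p$ together with the $p$-triangle inequality \eqref{p-tri:eq}. For \emph{Symmetry}, I would use self-adjointness of $T$ to write $(\mathds{1}_{\Omega_1} T \mathds{1}_{\Omega_2})^* = \mathds{1}_{\Omega_2} T \mathds{1}_{\Omega_1}$, and then invoke the identity $s_n(A) = s_n(A^*)$, which holds for any compact operator and implies $\|A^*\|_p = \|A\|_p$ immediately from the definition. For \emph{Monotonicity I}, I would factor $\mathds{1}_{\Omega_1} T \mathds{1}_{\Omega_2} = \mathds{1}_{\Omega_1} \bigl( \mathds{1}_{\Omega_1 \cup \Omega_1'} T \mathds{1}_{\Omega_2 \cup \Omega_2'} \bigr) \mathds{1}_{\Omega_2}$, which is legitimate because $\mathds{1}_{\Omega} \mathds{1}_{\Omega \cup \Omega'} = \mathds{1}_{\Omega}$, and then apply the two-sided ideal inequality $\|ABC\|_p \le \|A\| \, \|B\|_p \, \|C\|$ together with $\|\mathds{1}_{\Omega}\| \le 1$.

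For \emph{Monotonicity II}, since $S$ and $T$ are self-adjoint and nonnegative their singular values coincide with their eigenvalues; the min--max (Weyl) principle applied to the quadratic forms $\langle \varphi, S \varphi \rangle \le \langle \varphi, T \varphi \rangle$ then delivers $s_n(S) \le s_n(T)$ for every $n$, and summing $p$-th powers yields $\|S\|_p \le \|T\|_p$. For \emph{Subadditivity}, I would use the disjoint pointwise decomposition $\mathds{1}_{\Omega_1 \cup \Omega_1'} = \mathds{1}_{\Omega_1 \setminus \Omega_1'} + \mathds{1}_{\Omega_1'}$, apply \eqref{p-tri:eq} to the resulting splitting of $\mathds{1}_{\Omega_1 \cup \Omega_1'} T \mathds{1}_{\Omega_2}$, and bound the first summand by $\|\mathds{1}_{\Omega_1} T \mathds{1}_{\Omega_2}\|_p^p$ via Monotonicity~I using $\Omega_1 \setminus \Omega_1' \subset \Omega_1$.

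The main obstacle, if any, is bookkeeping in the quasi-norm regime $0 < p \le 1$: duality arguments are unavailable and one must work directly with singular values. Fortunately, each ingredient above --- the invariance $s_n(A) = s_n(A^*)$, the min--max monotonicity of singular values for nonnegative operators, the ideal property with operator-norm factors, and the $p$-triangle inequality --- is valid in this regime without modification. I therefore expect the entire lemma to be routine provided one is careful to invoke \eqref{p-tri:eq} rather than the ordinary triangle inequality when combining pieces in the subadditivity step.
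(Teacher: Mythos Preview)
Your proposal is correct and the arguments are the standard ones for these facts. The paper itself does not supply a proof of this lemma; it simply writes ``A proof of these properties can be found for example in \cite{P2021}.'' Your outline therefore goes beyond what the paper presents, and each of your four ingredients (the adjoint identity $s_n(A)=s_n(A^*)$ for Symmetry, the ideal inequality $\|ABC\|_p\le\|A\|\,\|B\|_p\,\|C\|$ for Monotonicity~I, min--max for Monotonicity~II, and the disjoint splitting plus \eqref{p-tri:eq} for Subadditivity) is valid in the quasi-norm range $0<p\le 1$ exactly as you say.
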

A proof of these properties can be found for example in \cite{P2021}.

We assume now that the magnetic-field strength has been ``scaled out" so that $B=1$ for the remainder of this section. The effective scale in the planar coordinates is $L\sqrt{B}$ and in the perpendicular it is $L\sqrt{\mu}$.

Next, we collect some more specific (quasi-)norm estimates for both the one dimensional free Hamiltonian and the constant magnetic field Hamiltonian in two dimensions. 
\begin{prop} \label{input prop}
Let $0< p \le 1, \ell \in \mathbb N_0$ and let $\mu > 0$. Then there is a constant $C$ such that for any $x \in \mathbb R^2, t \in \mathbb R, h\ge 2, \d \ge 1 $, any measurable set $\Upsilon \subset \R^2 $ such that $[-\d,1+\d]^2+x \subset \Upsilon$ and any measurable set $I\subset \R$ such that $[t,t+h] \subset I$, we have the estimates 
\begin{align}
\left \lVert\mathds{1}_{[0,1]^2+x} \mathrm{P}_{ \ell } \right \rVert_p^p &\le C\,,\\
\left \lVert\mathds{1}_{[0,1]^2+x} \mathrm{P}_{ \ell } \mathds{1}_{\Upsilon^\complement} \right \rVert_p^p &\le  C \exp( - p\d^2 /{18})\,, \label{5.4}\\
\left \lVert\mathds{1}_{[t,t+h]}  \mathds{1}[(-\mathrm{i}\nabla^\parallel)^2 \le \mu] \right \rVert_p^p &\le  C h \label{1Dvolume}\,,\\
\left \lVert\mathds{1}_{[t,t+h]}  \mathds{1}[(-\mathrm{i}\nabla^\parallel)^2 \le \mu] \mathds{1}_{I^\complement } \right \rVert_p^p &\le  C\ln (h) \label{1Dsurface}\,.
\end{align}
\end{prop}

\begin{proof}
The first two inequalities follow by \cite[Lemma 12]{LSSmagn}, monotonicity I in \autoref{localization properties}, and the unitary translation invariance of $\mathrm{P}_\ell$. The $8$ in the denominator was increased to $18$ in \eqref{5.4} as we switched from circles to squares\footnote{The choice of the value $18=3^2\times 2$ is convenient for this paper.}. To prove the last inequality, we first use monotonicity I and the translation invariance, then the standard unitary equivalence, see for example \cite[(7--10)]{LW}, and finally \cite[Corollary 4.7]{Sob:Schatten}. Thus
\begin{align}
\left \lVert\mathds{1}_{[t,t+h]}  \mathds{1}[(-\mathrm{i}\nabla^\parallel)^2 \le \mu] \mathds{1}_{I^\complement } \right \rVert_p^p 
&\le \left \lVert\mathds{1}_{[0,h]}  \mathds{1}[(-\mathrm{i}\nabla^\parallel)^2 \le \mu] \mathds{1}_{[0,h]^\complement } \right \rVert_p^p \\
&=  \left \lVert\mathds{1}_{[0,1]}  \mathds{1}[(-\mathrm{i}\nabla^\parallel)^2 \le h^2 \mu] \mathds{1}_{[0,1]^\complement } \right \rVert_p^p \\
&\le  C \ln(h)\,.
\end{align}
For the third inequality, we will reduce to the case $h=2, t=0$ by subadditivity, monotonicity I and translation invariance. Let $m \coloneqq \lceil h/2 \rceil \in\N$ be the smallest integer larger or equal to $h/2$. Thus, as $h\ge 2$, we have $m \le h$. We observe that
\begin{align}
\left \lVert\mathds{1}_{[t,t+h]}  \mathds{1}[(-\mathrm{i}\nabla^\parallel)^2 \le \mu] \right \rVert_p^p \le & \sum_{k=0}^{m-1} \left \lVert\mathds{1}_{[t+2k,t+2k+2]}  \mathds{1}[(-\mathrm{i}\nabla^\parallel)^2 \le \mu] \right \rVert_p^p \\
&= m \left \lVert\mathds{1}_{[0,2]}  \mathds{1}[(-\mathrm{i}\nabla^\parallel)^2 \le \mu] \right \rVert_p^p\\
&\le 2h \left \lVert\mathds{1}_{[0,2]}  \mathds{1}[(-\mathrm{i}\nabla^\parallel)^2 \le \mu] \right \rVert_p^p\,.
\end{align}
Using subadditivity once more we now estimate 
\begin{align}
\Big\lVert\mathds{1}_{[0,2]}  &\mathds{1}[(-\mathrm{i}\nabla^\parallel)^2 \le \mu] \Big\rVert_p^p \\
&\le  \left \lVert\mathds{1}_{[0,2]}  \mathds{1}[(-\mathrm{i}\nabla^\parallel)^2 \le \mu] \mathds{1}_{[0,2]} \right \rVert_p^p  + \left \lVert\mathds{1}_{[0,2]}  \mathds{1}[(-\mathrm{i}\nabla^\parallel)^2 \le \mu] \mathds{1}_{[0,2]^\complement}\right \rVert_p^p \\
&\le   \left \lVert\mathds{1}_{[0,2]}  \mathds{1}[(-\mathrm{i}\nabla^\parallel)^2 \le \mu] \mathds{1}_{[0,2]} \right \rVert_p^p  +  C \\
& = \left \lVert\mathds{1}_{[0,2]}  \mathds{1}[(-\mathrm{i}\nabla^\parallel)^2 \le \mu] \right \rVert_{2p}^{2p} +C  \,.
\end{align} 
The second summand was bounded by \eqref{1Dsurface} and the last quasi-norm identity is derived by the singular value identity $s_n(A)^2=s_n(A^*A)$. Define $Q \coloneqq \mathds{1}_{[0,2]}  \mathds{1}[(-\mathrm{i}\nabla^\parallel)^2 \le \mu ]$. Our claim is $Q \in \mathfrak S_p$ for all $0<p \le 1$. The last estimate shows that $Q \in \mathfrak S_p$, if  $Q \in \mathfrak S_{2p}$ for $p \le 1$. We now observe
\be \lVert Q \rVert_2^2= \int_0^1 \mathrm{d}s \int_0^1 \mathrm{d} t \, k_{\mu}^2(s-t) < \frac {\mu}{\pi^2}\,. \ee
Thus, we have $Q \in \mathfrak S_2$ and hence $Q \in \mathfrak S_{2^{1-n}}$ for any $n \in \mathbb N$. Lastly, as $\mathfrak S_p \subset \mathfrak S_q$ for $p<q$, we arrive at $Q \in \mathfrak S_p$ for any $0<p \le \infty$, which finishes the proof.
\end{proof}

After all these preparations we finally state the crucial local estimates that are needed in the proof of \autoref{Hank:thm}. 

\begin{lemma}\label{box3}
Let $0< p \le 1$. Then there is a constant $C$, such that for any $x \in \mathbb R^2, t \in \R, h\ge 2$, $\delta\ge 1$, any measurable $\Upsilon \subset \R^2$ such that $[-\delta,1+\delta]^2+x \subset \Upsilon \subset$ and any interval $I\subset \R$ such that $[t,t+h] \subset I \subset $, we have the estimates 
\begin{align}
\left \lVert \mathds{1}_{([0,1]^2+x) \times [t,t+h]}\left( \mathrm{P}_{ \ell } \otimes  \mathds{1}[(-\mathrm{i}\nabla^\parallel)^2 \le \mu]\right)\right \rVert_p^p  &\le C h\,,  \label{box3volume}\\
\left \lVert \mathds{1}_{([0,1]^2+x) \times [t,t+h]}\left( \mathrm{P}_{ \ell } \otimes  \mathds{1}[(-\mathrm{i}\nabla^\parallel)^2 \le \mu]\right) \mathds{1}_{\left( \Upsilon \times I\right)^\complement }  \right \rVert_p^p 
&\le  C h \exp \left( - p\delta^2 /{18} \right) + C \ln(h)  \label{box3surface}\,.
\end{align}
\end{lemma}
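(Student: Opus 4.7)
The plan is to exploit the tensor product factorization \eqref{product1}--\eqref{productnorm1} to reduce both estimates to the one-dimensional and two-dimensional bounds collected in \autoref{input prop}. Indeed, since $\mathds{1}_{([0,1]^2+x)\times[t,t+h]} = \mathds{1}_{[0,1]^2+x}\otimes\mathds{1}_{[t,t+h]}$, the operator in question factorizes as
\begin{align*}
\mathds{1}_{([0,1]^2+x)\times[t,t+h]}\bigl(\mathrm{P}_\ell \otimes \mathds{1}[(-\mathrm{i}\nabla^\parallel)^2 \le \mu]\bigr) = \bigl(\mathds{1}_{[0,1]^2+x}\mathrm{P}_\ell\bigr) \otimes \bigl(\mathds{1}_{[t,t+h]}\mathds{1}[(-\mathrm{i}\nabla^\parallel)^2 \le \mu]\bigr).
\end{align*}
As the singular values of a tensor product are products of the individual singular values, the $p$-quasi-norm of the whole operator factorizes as a product, exactly as in \eqref{productnorm1}. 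Applying the first bound of \autoref{input prop} to the planar factor and \eqref{1Dvolume} to the longitudinal factor yields \eqref{box3volume}.

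For \eqref{box3surface}, I would first decompose the complement set as the disjoint union
\begin{align*}
(\Upsilon \times I)^\complement = (\Upsilon^\complement \times \R) \cup (\Upsilon \times I^\complement),
\end{align*}
and then apply the subadditivity property from \autoref{localization properties} (combined with symmetry) to split
\begin{align*}
\bigl\lVert \mathds{1}_{([0,1]^2+x)\times[t,t+h]}\bigl(\mathrm{P}_\ell \otimes \mathds{1}[\cdots]\bigr)\mathds{1}_{(\Upsilon\times I)^\complement}\bigr\rVert_p^p
\le T_1 + T_2,
\end{align*}
where $T_1$ is the same norm with $\mathds{1}_{\Upsilon^\complement \times \R}$ on the right and $T_2$ the same with $\mathds{1}_{\Upsilon \times I^\complement}$ on the right. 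Each of these is again a tensor product, so by \eqref{productnorm1},
\begin{align*}
T_1 &= \bigl\lVert\mathds{1}_{[0,1]^2+x}\mathrm{P}_\ell\mathds{1}_{\Upsilon^\complement}\bigr\rVert_p^p \,\bigl\lVert\mathds{1}_{[t,t+h]}\mathds{1}[(-\mathrm{i}\nabla^\parallel)^2\le\mu]\bigr\rVert_p^p, \\
T_2 &= \bigl\lVert\mathds{1}_{[0,1]^2+x}\mathrm{P}_\ell\bigr\rVert_p^p \,\bigl\lVert\mathds{1}_{[t,t+h]}\mathds{1}[(-\mathrm{i}\nabla^\parallel)^2\le\mu]\mathds{1}_{I^\complement}\bigr\rVert_p^p.
\end{align*}
Plugging in \eqref{5.4} and \eqref{1Dvolume} gives $T_1 \le C h \exp(-p\delta^2/18)$, while the first and fourth estimates of \autoref{input prop} give $T_2 \le C\ln(h)$. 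Adding these two contributions yields \eqref{box3surface}.

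I do not anticipate any real obstacles; the only small subtleties are (i) verifying that the $\mathds{1}_{\Upsilon}$ (resp.\ $\mathds{1}_I$) factor on the left is harmless in $T_1$ (resp.\ $T_2$), which follows from monotonicity~I since one can first enlarge $\mathds{1}_{[0,1]^2+x}$ to $\mathds{1}_{\Upsilon}$ (resp.\ $\mathds{1}_{[t,t+h]}$ to $\mathds{1}_I$) and then factorize via \eqref{productnorm1}, and (ii) ensuring that the splitting of the complement is compatible with the $p$-subadditivity, which is immediate from the disjoint-union representation. The geometric assumptions $[-\delta,1+\delta]^2+x \subset \Upsilon$ and $[t,t+h]\subset I$ are exactly what is needed in order to apply the hypotheses of \autoref{input prop} after the factorization.
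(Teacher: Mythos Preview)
Your proposal is correct and follows essentially the same route as the paper: factorize via \eqref{productnorm1} and feed in the four estimates of \autoref{input prop}. The only cosmetic difference is in the complement decomposition: the paper enlarges $(\Upsilon\times I)^\complement$ to $(\Upsilon^\complement\times\R)\cup(\R^2\times I^\complement)$ via monotonicity~I before splitting, which makes both pieces factor exactly as written, whereas your disjoint decomposition $(\Upsilon^\complement\times\R)\cup(\Upsilon\times I^\complement)$ leaves an extra $\mathds{1}_\Upsilon$ in the planar factor of $T_2$ that you then remove by monotonicity~I (so your displayed equality for $T_2$ is really an inequality, as you acknowledge in your remark~(i)). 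Note that your parenthetical in~(i) has the labels swapped---the stray $\mathds{1}_\Upsilon$ appears in $T_2$, not $T_1$---but this is a wording slip, not a mathematical gap.
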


\begin{proof} 
For the first inequality, we use \eqref{productnorm1} and \autoref{input prop}. For the second inequality, we first observe
\begin{align}
(\Upsilon\times  I)^\complement = (\Upsilon^\complement\times  I) {\cup} ( \mathbb R^2\times  I^\complement) \subset (\Upsilon^\complement \times \R) \cup (\R^2 \times I^\complement)
\end{align}
and then we use the $p$-triangle inequality,  \eqref{productnorm1} and \autoref{input prop}. 
\end{proof}

We now fix a region $\Lambda \subset \mathbb R^3$ and define the signed distance function 
\begin{align} \label{def signed dist}
\operatorname{d}_\L(x) \coloneqq \begin{cases} +\operatorname{dist} (x, \partial \Lambda )  &\text{ for } x \not \in \Lambda \\
  -\operatorname{dist} (x, \partial \Lambda )  &\text{ for }  x  \in \Lambda   \end{cases}  \,,
\end{align}
where $\operatorname{dist}$ is the Euclidean distance. The signed distance function is Lipschitz-continuous with Lipschitz constant $1$.

In order to utilize \autoref{box3}, we need to essentially cover $L \Lambda$ with a lot of very long boxes (of dimensions $1 \times 1 \times O(L)$). This boils down to choosing appropriate intervals that cover most of $\Lambda_x$ (as defined in \autoref{def L_x}), for any $x \in \R^2$. Let  $G(x,\varepsilon)$ be the number of these intervals. The following lemma explicitly constructs  such intervals and lists the properties that $G(x,\varepsilon)$ and the intervals satisfy, which we need for our estimates. The basic idea is to collect  connected components of $\L_x$, which go sufficiently deep inside $\L$.
\begin{lemma} \label{Def G} 
For any $x \in \R^2$ and $\varepsilon >0$, there is a finite (possibly empty) set of intervals $A(x,\varepsilon)=\{ I_{1,x,\varepsilon}, \dots , I_{G(x, \varepsilon)),x,\varepsilon} \}$, satisfying the following conditions:
\begin{enumerate}
\item{ We have $\operatorname{d}_\L(I_{k,x,\varepsilon}) \subset (-\infty, -\varepsilon)$ and $\operatorname{dist}(I_{k,x,\varepsilon}, \partial \L)= \varepsilon$.}
\item{ For any $\lambda \in \Lambda_x$, there exists a $j$ with $1 \le j \le G(x,\varepsilon) \colon \lambda \in I_{j,x,\varepsilon}$, or $\operatorname{d}_\L((x,\lambda))>-2\varepsilon$.} 
\item{ We have $G(x, \varepsilon )= \# A(x,\varepsilon) \le \mathcal H^1\left(  \operatorname{d}_\L^{-1}((-2 \varepsilon, -\varepsilon)) \cap (\{x \} \times \R )\right)/\varepsilon$. }
\end{enumerate}
The signed distance function $\operatorname{d}_\L$, dependent on the piecewise Lipschitz region $\L$, is defined in \eqref{def signed dist}.
\end{lemma}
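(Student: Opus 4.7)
The plan is to construct $A(x,\varepsilon)$ from the connected components of the one-dimensional slice $T(x,\varepsilon) \coloneqq \{\lambda \in \R : \operatorname{d}_\L((x,\lambda)) < -\varepsilon\}$, retaining exactly those components that reach the ``deeper'' level set $\{\operatorname{d}_\L \le -2\varepsilon\}$. Since $\operatorname{d}_\L$ is $1$-Lipschitz (in particular continuous) and $\L$ is bounded, $T(x,\varepsilon)\subset \L_x$ is an open, bounded subset of $\R$, hence a disjoint union of bounded open intervals. I would declare $A(x,\varepsilon)$ to consist of those components $J$ satisfying $J\cap \{\lambda\in\R : \operatorname{d}_\L((x,\lambda))\le -2\varepsilon\}\neq \emptyset$; the finiteness of this collection will fall out of the proof of~(iii).

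For (i), each chosen $J=(a,b)$ satisfies $\operatorname{d}_\L((x,\lambda))<-\varepsilon$ on $J$ by construction, which gives the inclusion $\operatorname{d}_\L(\{x\}\times J)\subset(-\infty,-\varepsilon)$ and the lower bound $\operatorname{dist}(\{x\}\times J,\p\L)\ge \varepsilon$. For the matching equality, maximality of $J$ as an open interval inside $T(x,\varepsilon)$ together with continuity of $\operatorname{d}_\L$ forces $\operatorname{d}_\L((x,a))=\operatorname{d}_\L((x,b))=-\varepsilon$ at the endpoints, so $\operatorname{dist}((x,\lambda),\p\L)=-\operatorname{d}_\L((x,\lambda))\to \varepsilon$ as $\lambda\downarrow a$. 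For (ii), if $\lambda\in \L_x$ satisfies $\operatorname{d}_\L((x,\lambda))\le -2\varepsilon$, then $\lambda\in T(x,\varepsilon)$, and the component of $T(x,\varepsilon)$ containing $\lambda$ is by construction one of the chosen intervals.

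The most substantial part is (iii), which I plan to prove using the $1$-Lipschitz property of $\operatorname{d}_\L$. For each $J=(a,b)\in A(x,\varepsilon)$, set $\lambda_1\coloneqq \inf\{\lambda\in J : \operatorname{d}_\L((x,\lambda))\le -2\varepsilon\}$, which is finite because $J\in A(x,\varepsilon)$. Continuity then yields $\operatorname{d}_\L((x,\lambda_1))=-2\varepsilon$, while $\operatorname{d}_\L((x,a))=-\varepsilon$ from the endpoint analysis above, so the Lipschitz bound gives $\lambda_1-a\ge \varepsilon$. By minimality of $\lambda_1$ and the defining inequality of $J$, the entire subinterval $(a,\lambda_1)$ lies inside $\operatorname{d}_\L^{-1}((-2\varepsilon,-\varepsilon))\cap (\{x\}\times\R)$. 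Since distinct components of $T(x,\varepsilon)$ are disjoint, so are these ``ramp'' subintervals, and summing their lengths yields $\varepsilon\cdot \#A(x,\varepsilon)\le \mathcal H^1\bigl(\operatorname{d}_\L^{-1}((-2\varepsilon,-\varepsilon))\cap (\{x\}\times\R)\bigr)$; since the right-hand side is bounded by $|\L_x|<\infty$, this simultaneously proves~(iii) and the finiteness claim. I do not expect a serious obstacle: all three conditions reduce to continuity and the $1$-Lipschitz property of the signed distance function together with elementary properties of connected components. The only technical subtlety is the equality (not merely $\ge \varepsilon$) in (i), which is why the intervals must be taken open and their endpoint values of $\operatorname{d}_\L$ tracked explicitly via the $1$-Lipschitz property.
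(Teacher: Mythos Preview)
Your construction and argument are essentially the same as the paper's: both take $A(x,\varepsilon)$ to be the connected components of $\{\lambda:\operatorname{d}_\L((x,\lambda))<-\varepsilon\}$ that meet the deeper level $\{\operatorname{d}_\L\le-2\varepsilon\}$, verify (i) and (ii) directly from this definition, and prove (iii) via a ``ramp'' interval near an endpoint using the $1$-Lipschitz property of $\operatorname{d}_\L$. The only cosmetic difference is that the paper uses both endpoint ramps $(\lambda_1,\lambda_2)$ and $(\lambda_3,\lambda_4)$ per component, obtaining $2\varepsilon\cdot G(x,\varepsilon)$ (a spare factor of $2$), whereas you use a single ramp to get exactly the stated bound; your endpoint analysis for the equality in (i) is also spelled out slightly more carefully than in the paper.
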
 
We regard the lemma and its proof as the definitions of $A(x,\varepsilon),I_{j,x,\varepsilon}$ and $G(x,\varepsilon)$.
\begin{proof}
We consider the set $A_0(x,\varepsilon)$ of all connected components of $\left(\operatorname{d}_\L^{-1}((-\infty,-\varepsilon)) \right)_x \subset \R$ (with the convention that the empty set has no connected components). The set $A(x,\varepsilon)$ is defined as the set of all $I \in A_0(x,\varepsilon)$, such that there is a $\lambda \in I$ with $\operatorname{d}_\L((x,\lambda)) \le -2 \varepsilon$. The first point is already satisfied for all $ I \in A_0(x,\varepsilon)$ and thus holds for all $I$ in the smaller set $A(x,\varepsilon)$. For the second claim, we observe that if $\lambda \in \L_x$ with $\operatorname{d}_\L((x,\lambda)) \le -2 \varepsilon$, then $\lambda \in  \left(\operatorname{d}_\L^{-1}((-\infty,-\varepsilon)) \right)_x $ and thus there is an $I \in A_0(x,\varepsilon)$ with $\lambda \in I$. By definition of $A(x,\varepsilon)$, this ensures $I \in A(x,\varepsilon)$.

For the third claim, if $A(x, \varepsilon) \neq \emptyset$,  let $I =(\lambda_1, \lambda_4) \in A(x, \varepsilon)$ and define $\lambda_2 \coloneqq \inf \{ \lambda \in I \colon \operatorname{d}_\L( (x, \lambda) \le -2 \varepsilon \}, \lambda_3 \coloneqq \sup \{ \lambda \in I \colon \operatorname{d}_\L( (x, \lambda) \le -2 \varepsilon \}$. Thus, $\lambda_1 < \lambda_2<\lambda_3<\lambda_4$,  
\be \operatorname{d}_\L(\{x \} \times (\lambda_1,\lambda_2)) = \operatorname{d}_\L(\{x \} \times (\lambda_3,\lambda_4)) =(-2\varepsilon, -\varepsilon) \,,\ee
and, as $\operatorname{d}_\L$ has Lipschitz constant $1$, this means that 
 \be \mathcal H^1(  ( \{x \} \times I) \cap \operatorname{d}_\L^{-1}((-2 \varepsilon,-\varepsilon)) ) \ge \mathcal H^1(  \{x \} \times ( (\lambda_1,\lambda_2) \cup (\lambda_3,\lambda_4) ) ) \ge 2 \varepsilon\,. \ee
As $(\lambda_1, \lambda_2) \subset I$, $(\lambda_3, \lambda_4) \subset I$ and different elements of $A(x,\varepsilon)$ are disjoint (as they are connected components), we can sum the inequality over all elements of $A(x, \varepsilon)$ and arrive at 
 \be \mathcal H^1(  (\{x  \} \times \R )\cap  \operatorname{d}_\L^{-1} ((-2 \varepsilon, -\varepsilon))) \ge \sum_{I \in A(x, \varepsilon)}  \mathcal H^1(  (\{x \} \times I) \cap \operatorname{d}_\L^{-1}((-2 \varepsilon\,,\varepsilon)) )  \ge 2G(x,- \varepsilon ) \varepsilon\,, \ee
which implies the last claim (with an additional factor $1/2$).
\end{proof}

\begin{thm} Let $\L$ be a piecewise Lipschitz region (see \autoref{def ps}) and let $0<p \le1, \ell\in \N, \mu \in \R^+$. Then there are constants $L_0=L_0(\Lambda,p,\ell,\nu ) > 3$ and $C=C(\Lambda ,p,\ell,\nu)$ such that for all $L>L_0$ \label{Hank:thm}
\begin{align}
\left \lVert \mathds{1}_{L\Lambda } \mathrm{P}_{ \ell } \otimes  \mathds{1}[(-\mathrm{i}\nabla^\parallel)^2 \le \mu] \mathds{1}_{L\Lambda ^\complement} \right \rVert_p^p \le  C(\Lambda ,p) L^2 \ln(L)\,.
\end{align}
\end{thm}

\begin{proof}
We want to cover most of $L\Lambda $ with translates of cubes $[0,1]^2\times [0,h]$, where $h$ grows like $L$ and will use \autoref{box3} on these\footnote{The choice of $18$ in \autoref{input prop} makes the definition of $\delta$ quite nice to work in the estimate \eqref{interior cube estimate}.}. We set $\delta \coloneqq 6 p^{-1/2} \sqrt{ \ln (L)}$. Let $L_0$ be large enough to ensure that $\delta\ge 1 $ for $L>L_0$. Hence these cubes need to keep a distance of at least $\delta$ from the boundary. Set $\varepsilon \coloneqq \frac { 2(\delta+1)} L$. We also define the shorthand 
\begin{align}
\mathrm{P} \coloneqq  \mathrm{P}_{ \ell } \otimes  \mathds{1}[(-\mathrm{i}\nabla^\parallel)^2 \le \mu]\,.
\end{align}
Let $h_0$ be the length of the longest straight line contained in $\Lambda $.

Consider any $x \in \mathbb R^2$ with $G(x, \varepsilon)\ge1$, as defined in \autoref{Def G}. 
For $k \in \{1, \dots, G(x,\varepsilon)\}$, we define the boxes
\begin{align}
Q_{x,k}&\coloneqq  ([0,1]^2+Lx) \times (LI_{k,x,\varepsilon})& \subset L\Lambda\,,  \\
Q'_{x,k}&\coloneqq  ([-\delta,1+\delta]^2+Lx ) \times (LI_{k,x,\varepsilon} ) & \subset L\Lambda \,.
\end{align}
These inclusions hold because $\sqrt 2 < \sqrt 2 (\delta +1) <L \varepsilon =L \operatorname{dist} ( \{  x \} \times I_{k,x,\varepsilon} ,  \partial \L)  = \operatorname{dist} ( \{ L x \} \times(L I_{k,x,\varepsilon}) , L \partial \L)$.  

We assume $L>h_0$, $Lh_0>1$ and $L>2$. Now we have by monotonicity I in \autoref{localization properties} and by   \autoref{box3}
\begin{align}
\left \lVert \mathds{1}_{Q_{x,k}} \mathrm{P} \mathds{1}_{L\Lambda^\complement} \right \rVert _p^p & \le  \left \lVert \mathds{1}_{Q_{x,k}} \mathrm{P} \mathds{1}_{(Q'_{x,k})^\complement} \right \rVert _p^p \\
&\le   C L |I_{k,x,\varepsilon}| \exp \left( -p \delta^2 /18 \right) + C \ln(L|I_{k,x,\varepsilon}| ) \\
&\le  C h_0 L^{-1} +C \ln(L)+C \ln(h_0) \le C \ln(L)\,. \label{interior cube estimate}
\end{align}
The constant $C$ depends only on $p$ and $h_0$.

Now we consider some offset parameter $s \in [0,1)^2$, and we define 
\begin{align}
\Lambda _{\varepsilon,s} \coloneqq \Lambda  \setminus \bigcup _{z \in \mathbb Z^2} \bigcup_{k =1}^{G\left( \frac{ z+s}L ,\varepsilon \right)} \frac 1 LQ_{\frac{z+s} L ,k} \subset  \operatorname{d}_\L^{-1}\left( \left(  - 3\varepsilon ,0 \right) \right) \label{covering}\,.
\end{align}
The inclusion is based on the fact that for each $y \in \R^3$, there is a $\frac{z+s}L \in \mathbb R^2$ with $z\in\Z^2$ such that $y \in \left( \frac{z+s}L + \frac 1 L[0,1]^2 \right) \times \mathbb R$. If $\operatorname{d}_\L(y) \le -3 \varepsilon$, then the point $\left( \frac{z+s}L , y_3 \right)$ is at most $\frac{\sqrt 2}L < \varepsilon$ away from $y$ and hence at least $2\varepsilon$ away from the boundary. Therefore, there is a $k$ such that $y \in \frac 1 LQ_{\frac{z+s}L,k}$. 

We further define 
\begin{align}
Z_{\varepsilon} \coloneqq \left  \{u \in \mathbb Z^3 \colon \left(u+[0,1]^3\right) \cap L\operatorname{d}_\L^{-1}\left( \left(  - 3\varepsilon ,0 \right) \right) \not = \emptyset \right\}\,,
\end{align}
so that $L\operatorname{d}_\L^{-1}\left( \left(  - 3\varepsilon ,0 \right) \right) \subset \bigcup_{{u} \in Z_\varepsilon} \left({u}+ [0,1]^3\right)$. As $\varepsilon>\frac{\sqrt 3}L$, the length of the diagonal in a cube $\frac 1 L [0,1]^3$, we have (second inclusion)
\begin{align}
L\operatorname{d}_\L^{-1}\left( \left(  - 3\varepsilon ,0 \right) \right) \subset \bigcup_{u \in Z_\varepsilon} \left({u}+ [0,1]^3\right) \subset L\operatorname{d}_\L^{-1}\left( \left(  - 4\varepsilon , \varepsilon\right) \right)\,. \label{covering boundary}
\end{align}
Hence the volume of the middle term, which is the cardinality, $\# Z_\varepsilon$, of $Z_\varepsilon$, can be bounded by the volume of the right-hand side. For $\varepsilon<1$, using \autoref{tubular nbh small}, this is bounded by $L^3C(\Lambda )  \varepsilon$. Hence for $L > L_0$: 
\begin{align}
\# Z_\varepsilon \le L^3 C(\Lambda ) \varepsilon \le C(\Lambda ,p) L^2 \sqrt{ \ln (L)} \,.\label{Ze estimate}
\end{align}
Using the monotonicity I and subadditivity properties in \autoref{localization properties} and the covering \eqref{covering}, we can finally estimate,
\begin{align}
\left \lVert \mathds{1}_{L\Lambda } \mathrm{P} \mathds{1}_{L\Lambda^\complement} \right \rVert_p^p \le  \sum_{z \in \Z^2} \sum_{k=1}^{G\left( \frac{ z+s}L ,\varepsilon \right)} \left \lVert \mathds{1}_{Q_{\frac{z+s}L ,k} } \mathrm{P} \mathds{1}_{L\Lambda ^\complement} \right \rVert_p^p +\left \lVert \mathds{1}_{L\Lambda _{\varepsilon,s}} \mathrm{P} \mathds{1}_{L\Lambda ^\complement}\right \rVert_p^p\,.
\end{align}

The summands in the first sum can be bounded by $C\ln(L)$ using  \eqref{interior cube estimate}. The second term will be bounded using monotonicity I, \eqref{covering} and  \eqref{covering boundary} in the first step and using monotonicity II, subadditivity, \eqref{box3volume} and \eqref{Ze estimate} in the second step. Hence, we have
\begin{align}
\left \lVert \mathds{1}_{L\Lambda } \mathrm{P} \mathds{1}_{L\Lambda^\complement} \right \rVert_p^p &\le  \sum_{z \in \Z^2}G\left( \frac{ z+s}L ,\varepsilon \right)C \ln(L) + \sum_{{u} \in Z_\varepsilon} \left \lVert \mathds{1}_{[0,1]^3+{u}} \mathrm{P} \mathds{1}_{L\Lambda^\complement} \right \rVert_p^p \\
&\le   C \sum_{z \in \Z^2}G\left( \frac{ z+s}L ,\varepsilon \right) \ln(L)  + C(\Lambda ,p) L^2 \sqrt{ \ln(L)} {C} \,. 
\end{align}
For any fixed $L>L_0$ and $s \in [0,1)^2$, this is finite. Hence, we can integrate this over $s \in [0,1)^2$ and get a different upper bound. As the volume of $[0,1)^2$ is $1$, the left-hand side and the last term do not change, as it is an integral over a constant in both cases.
\begin{align}
\left \lVert \mathds{1}_{L\Lambda } \mathrm{P} \mathds{1}_{L\Lambda^\complement} \right \rVert_p^p \le C \int_{[0,1)^2} \mathrm{d}s \sum_{z \in \Z^2}G\left( \frac{ z+s}L ,\varepsilon \right) \ln(L)  + C(\Lambda ,p) L^2 \sqrt{ \ln(L)} \,.
\end{align}
Now we can use Fubini on the product $\mathbb Z^2 \times [0,1)^2= \mathbb R^2$. Hence we have
\begin{align}
\left \lVert \mathds{1}_{L\Lambda } \mathrm{P} \mathds{1}_{L\Lambda^\complement} \right \rVert_p^p &\le C \ln(L) \int_{\mathbb R^2} G\left( \frac xL ,\varepsilon \right) \,\mathrm{d}x  + C(\Lambda ,p) L^2 \sqrt{ \ln(L)} \\
&=  C \ln(L) L^2 \int_{\mathbb R^2} G(x,\varepsilon) \,\mathrm{d}x +C(\Lambda ,p) L^2 \sqrt{ \ln(L)} \\
&  \le C \ln(L) L^2 \int_{\R^2}  \left \lvert \left( \operatorname{d}_\L^{-1}((-2\varepsilon, -\varepsilon) ) \right)_x \right \rvert/ \varepsilon  \mathrm dx+ C(\Lambda ,p) L^2 \sqrt{ \ln(L)}  \\
&=  C \ln(L) L^2 \lvert \operatorname{d}_\L^{-1}((-2\varepsilon, -\varepsilon) ) \rvert/\varepsilon + C(\Lambda ,p) L^2 \sqrt{ \ln(L)}  \le C( \L, p) L^2 \ln(L)\,.
\end{align}
In the first step, we did a change of variables, in the third step we used \autoref{Def G}, in the last but one step Fubini and in in the final step we applied \eqref{tub L}.
\end{proof}

\section{The error term can be large and not smaller than \texorpdfstring{$o(L^2\ln(L))$}{\textit{o(L\texttwosuperior}ln(\textit{L}))}} \label{Section: lower order}

Without loss of generality we assume throughout this section that $\nu=1$ and $B=1$ because the precise values are not relevant now. The non-asymptotic bound in the following lemma is simple and useful in the proof of the main theorem in this section.
\begin{lemma} \label{lem: 1d bound}
Let $\Omega \subset \mathbb R$ be a finite union of intervals of finite lengths $\ell_1, \dots, \ell_n$ with disjoint closures. Let $m \in \mathbb N$ with $m \ge 2$, $\mu >0$, and $\D = \mathrm{d}^2/\mathrm{d}^2 x$ the one-dimensional Laplacian. Then we have the estimate
\begin{align}
\lVert (\mathds{1}_\Omega \mathds{1}(- \D \le \mu)\mathds{1}_\Omega)^m - \mathds{1}_\Omega \mathds{1}(- \D \le \mu)\mathds{1}_\Omega \rVert_1 \le \frac{m-1}{\pi^2} \sum_{j=1} ^n \ln( 1+ \sqrt\mu  \ell_j) + Cm n\,,
\end{align}
where $C$ is an entirely independent constant.
\end{lemma}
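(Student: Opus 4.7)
The plan is to reduce the trace-norm bound to an explicit Hilbert--Schmidt computation involving the sine kernel $k_\mu$ from \eqref{k mu}. Set $P \coloneqq \mathds{1}(-\D \le \mu)$ and $T \coloneqq \mathds{1}_\Omega P \mathds{1}_\Omega$. Then $T$ is self-adjoint with spectrum in $[0,1]$, and the elementary pointwise inequality
\begin{align*}
|t^m - t| = t(1-t^{m-1}) \le (m-1)\, t(1-t) \quad \text{for } t \in [0,1]
\end{align*}
combined with the spectral theorem yields the operator bound $|T^m - T| \le (m-1)(T - T^2)$, and hence $\lVert T^m - T\rVert_1 \le (m-1)\operatorname{tr}(T-T^2)$. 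Next, using $P^2 = P$ and $\mathds{1}_\Omega + \mathds{1}_{\Omega^c} = I$, I would expand $T = \mathds{1}_\Omega P^2 \mathds{1}_\Omega = T^2 + \mathds{1}_\Omega P \mathds{1}_{\Omega^c} P \mathds{1}_\Omega$, so that cyclicity of the trace gives $\operatorname{tr}(T - T^2) = \lVert \mathds{1}_\Omega P \mathds{1}_{\Omega^c}\rVert_2^2$. Thus the task reduces to
\begin{align*}
\lVert \mathds{1}_\Omega P \mathds{1}_{\Omega^c}\rVert_2^2 = \frac{1}{\pi^2}\int_\Omega \int_{\Omega^c} \frac{\sin^2(\sqrt\mu(x-y))}{(x-y)^2}\,dy\,dx \le \frac{1}{\pi^2}\sum_{j=1}^n \ln(1 + \sqrt\mu\,\ell_j) + Cn\,.
\end{align*}

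To bound this double integral, write $\Omega = \bigcup_{j=1}^n [a_j, b_j]$ and, for each $j$, split the $y$-range into $y<a_j$ and $y>b_j$. Non-negativity of the integrand lets me enlarge $\Omega^c \cap (-\infty, a_j)$ to all of $(-\infty, a_j)$ (and symmetrically on the right), which only weakens the bound. The substitutions $u = x - a_j \in [0,\ell_j]$, $v = a_j - y \in [0,\infty)$ followed by $s = u+v$ then reduce each single-endpoint contribution to
\begin{align*}
\frac{1}{\pi^2}\int_0^\infty \frac{\sin^2(\sqrt\mu\,s)}{s^2}\,\min(s, \ell_j)\,ds \le \frac{1}{\pi^2}\int_0^{\sqrt\mu\,\ell_j} \frac{\sin^2 t}{t}\,dt + \frac{1}{\pi^2}\,,
\end{align*}
where the constant $1/\pi^2$ absorbs the tail $\ell_j\int_{\ell_j}^\infty \sin^2(\sqrt\mu\,s)/s^2\,ds \le 1$ (via $\sin^2 \le 1$). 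The classical estimate $\int_0^Y \sin^2(t)/t\,dt \le \tfrac12\ln(1+Y) + C$ --- obtained by bounding $\sin^2(t) \le t^2$ on $[0,1]$ and integrating $\int_1^Y \cos(2t)/t\,dt$ by parts --- thus gives $\tfrac{1}{2\pi^2}\ln(1+\sqrt\mu\,\ell_j) + C'$ per endpoint. Summing the two endpoint contributions for each $I_j$, then summing over $j$, and finally multiplying by $m-1$ (absorbing $(m-1)n$ into the final $Cmn$) produces the claimed inequality.

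The entire argument is elementary; the one place that requires any genuine care is verifying that the factor $\tfrac12$ in the universal logarithmic bound combines with the \emph{two} endpoints per interval to yield the coefficient $1/\pi^2$ (rather than $2/\pi^2$) in front of $\ln(1+\sqrt\mu\,\ell_j)$ in the final statement. Everything else is routine bookkeeping and standard one-dimensional integration.
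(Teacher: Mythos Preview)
Your proof is correct. The reduction $\lVert T^m-T\rVert_1\le(m-1)\operatorname{tr}(T-T^2)=(m-1)\lVert\mathds{1}_\Omega P\mathds{1}_{\Omega^\complement}\rVert_2^2$ and the subsequent enlargement of the $y$-domain (for $x\in I_j$, replacing $\Omega^\complement$ by $I_j^\complement$) are exactly what the paper does. The one genuine difference is the final step: the paper recognises $\int_{I_j}\int_{I_j^\complement}k_\mu(x-y)^2\,\mathrm{d}y\,\mathrm{d}x=\operatorname{tr}\big[A_j-A_j^2\big]$ with $A_j=\mathds{1}_{I_j}P\mathds{1}_{I_j}$ and then invokes the improved Landau--Widom expansion (\autoref{1D final asym}, via $\mathsf I(2)=-1/(4\pi^2)$) to obtain $\frac{1}{\pi^2}\ln(1+\sqrt\mu\,\ell_j)+O(1)$, whereas you compute the same integral by hand, using only the elementary asymptotics $\int_0^Y\frac{\sin^2 t}{t}\,\mathrm{d}t=\tfrac12\ln Y+O(1)$. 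Your route is more self-contained and avoids appealing to the Landau--Widom machinery; the paper's route is shorter once that machinery has already been established for other purposes in the paper.
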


For $m=2$, this estimate is sharp in the sense that the prefactor $1/\pi^2$ equals the coefficient of the leading asymptotic behavior of $\tr (\mathds{1}_{L\Omega} \mathds{1}(- \D \le \mu)\mathds{1}_{L\Omega})^2$ for large $L$.

\begin{proof} By scaling we can assume $\mu=1$ since $\mathds{1}_{\Omega} \mathds{1}(- \D \le \mu)$ is unitarily equivalent to $\mathds{1}_{\sqrt{\mu}\Omega} \mathds{1}(- \D \le 1)$. In other words, we may set $\mu=1$ and eventually replace the lengths $\ell_j$ by $\sqrt{\mu}\ell_j$.

Then, we use the geometric series $a^m-a=a(a-1) (a^{m-2} + \cdots + a+1)$ with $a \coloneqq \mathds{1}_\Omega \mathds{1}(- \D \le 1)\mathds{1}_\Omega$. As $a$ has on operator norm of at most $1$, we can estimate
\begin{align}
\lVert (\mathds{1}_\Omega \mathds{1}(- \D \le 1)\mathds{1}_\Omega)^m &- \mathds{1}_\Omega \mathds{1}(- \D \le 1)\mathds{1}_\Omega \rVert_1 \nonumber
\\&\le  (m-1) \lVert a(a-1) \rVert_1 \nonumber\\
&= (m-1) \lVert \mathds{1}_\Omega \mathds{1}(- \D \le 1) \mathds{1}_{\Omega^\complement} \mathds{1}(- \D \le 1) \mathds{1}_\Omega \rVert_1 \nonumber\\
&= (m-1) \lVert \mathds{1}_\Omega \mathds{1}(- \D \le 1) \mathds{1}_{\Omega^\complement} \rVert_2 ^2 \nonumber\\
&= (m-1) \int_\Omega \mathrm{d} x \int_{\Omega^\complement} \mathrm{d}y \, k(x-y)^2 \label{A.6}\,,
\end{align}
with the function $k=k_1$ defined in \eqref{k mu}.

For a fixed $x \in \Omega$, we now enlarge the domain of integration in $y$ by allowing $y \in \Omega$, as long as $x$ and $y$ are in different intervals in $\Omega$. In a formula, with $\pi_0(\Omega)$ denoting the connected components (subintervals) of $\Omega$, the new domain of integration in \eqref{A.6} is
\be \bigcup _{ I \in \pi_0(\Omega)} \big\{(x,y) \colon x \in I, y \not \in I \big\}\,. \ee
As the integrand only depends on $x-y$, we may translate $I$ to be of the form $(0,\ell_j)$. Hence, with $n\coloneqq \#\pi_0(\O)$ the number of connected components of $\O$, we have
\begin{align}
\lVert (\mathds{1}_\Omega \mathds{1}&(- \D \le 1) \mathds{1}_\Omega)^m -  \mathds{1}_\Omega \mathds{1}(- \D \le 1)\mathds{1}_\Omega \rVert_1\\
&\le  (m-1) \sum_{j=1}^n \int_0^{\ell_j} \mathrm{d}x \int_{(0,\ell_j)^\complement} \mathrm{d}y \,k(x-y)^2 \\
&= (m-1) \sum_{j=1}^n \lVert \mathds{1}_{(0,\ell_j)} \mathds{1}(- \D \le 1) \mathds{1}_{(0,\ell_j)^\complement} \rVert_2^2 \\
&= (m-1) \sum_{j=1}^n \operatorname{tr} \Big[\mathds{1}_{(0,\ell_j)} \mathds{1}(- \D \le 1) \mathds{1}_{(0,\ell_j)} -\big(\mathds{1}_{(0,\ell_j)} \mathds{1}(- \D \le 1) \mathds{1}_{(0,\ell_j)}\big)^2\Big]  \\
&\le (m-1) \sum_{j=1}^n \frac1 {\pi^2} \ln(1+ \ell_j) + C mn \,.
\end{align}
The last step relies on an improved result by Landau and Widom with $L=1$, see \autoref{1D final asym}. 
\end{proof}

In \autoref{thm:polynomial}, we obtained for a general Lipschitz region $\L\subset\R^3$ an error term $o(L^2 \ln(L))$ and not of the order $L^2$. Specifically, using $\mathsf{I}(2) = -1/(4\pi^2)$, we have the asymptotic expansion
\begin{align}
\tr \big(\mathrm{D}_\mu(L\L)-\mathrm{D}_\mu(L\L)^2\big) &=  \frac{L^2 \ln(L)}{4 \pi^3}  \int_{\partial \L} \mathrm{d} \mathcal H^2( v)\, \lvert n(v) \cdot e_3 \rvert  +o(L^2\ln(L)) \,.
\end{align}
This allows us to define the error term $\varepsilon(L,\L)$ by the identity
\begin{align}
\tr \big(\mathrm{D}_\mu(L\L)-\mathrm{D}_\mu(L\L)^2\big) &=  \frac{ L^2 \ln(L)}{4 \pi^3}  \int_{\partial \L} \mathrm{d}{ \mathcal H^2}( v)\, \lvert n(v) \cdot e_3 \rvert  -L^2\ln(L) \,\varepsilon(L, \L)\,. \label{def error term}
\end{align}
In this notation, \autoref{thm:polynomial} states that $\lim_{L \to \infty}\varepsilon(L,\L)=0$ for a piecewise Lipschitz region $\L$ and we have $\sup_{L \ge 2} |\varepsilon(L, \L)|\ln(L) < \infty$, if $\L$ is a piecewise $\mathsf{C}^{1,\alpha}$ region. The main result of this section, which is the next theorem, shows that the estimate for Lipschitz regions is sharp and the error term can be large and just $o(L^2\ln(L))$. The negative sign in front of the error term does not necessarily mean that it has a definite sign although in our example it will be.  Although our result only deals with the error term for the simplest, non-trivial  polynomial, namely $t\mapsto t(1-t)$, we believe that also for the entropy the error term can be as large and only $o(L^2\ln(L))$ for a Lipschitz region.  

\begin{thm}\label{thm error term}
Let $\varphi \colon \R^+ \to \R^+$ be a bounded function with $\lim_{L \to \infty} \varphi(L)=0$. Then there is a piecewise Lipschitz region $\L$ and an $L_0$, such that for any $L \ge L_0$, the error term defined in \eqref{def error term} satisfies
\be \varepsilon(L, \L) \ge \varphi(L)\,. \ee
\end{thm}

\begin{remark}
Let $\mathcal A$ be the subset of the space of all polynomials vanishing at $0$ and $1$ such that the error term in \autoref{thm:polynomial} for $f \in \mathcal A$ is of order $O(L^2)$ for any Lipschitz domain $\L$. This is clearly a linear subspace and the theorem tells us that $t \mapsto t(1-t) \not \in \mathcal A$. Thus, the subspace has at least codimension one, which means that it satisfies (at least ) one linear constraint.  We conjecture that this constraint might be $f \in \mathcal A \implies \mathsf{I}(f)=0$. That is, the error term can only achieve the order $O(L^2)$, if the leading term of order $L^2 \ln(L)$ vanishes.
\end{remark}

\begin{proof}[Proof of \autoref{thm error term}]
We begin with a non-negative, summable sequence $(a_i)_{i \in \N}$ with $\sum_{i \in \N} a_i=1$, which we will choose later. Let $g_0 : [0,1] \to \R^+$ be the zigzag function defined by $g_0(0)=1$ and for $t>0$,
\begin{align}
g_0'(t)= \begin{cases} +1 & \text{if } \exists j \in \N \colon 0< t  -\sum_{i<j} a_i \le \frac 1 2 a_j \\
-1 & \text{if } \exists j \in \N \colon \frac 1 2 a_j < t  -\sum_{i<j} a_i \le  a_j
\end{cases}\,.
\end{align}
If $j=1$ then we use the convention that $\sum_{j<1}a_j \coloneqq 0$. Clearly, $g_0$ is Lipschitz continuous with Lipschitz bound $1$. We expand $g_0$ to $[-1,2]$ by setting $g_0(t)=t+1$ for $t < 0$ and $g_0(t)=2-t$ for $t>1$. This extension is still Lipschitz continuous and satisfies $g_0(-1)=g_0(2)=0$. Now, we can define the region $\L$,
\be \L \coloneqq \big\{ (x_1,x_2,x_3) \in \R^3  \colon x_1 \in (0,1), x_3 \in (-1,2), -g_0(x_3) < x_2 < g_0(x_3)\big\} \,.\ee
This clearly defines a piecewise Lipschitz region. We will now sketch why this is even a strong Lipschitz domain (see \cite[Pages 66--67]{Adams} for the definition.)

\begin{figure}[h] \label{ex domain}
\caption{Example of a $x_2$-$x_3$-plot of the domain $\L$ for any $x_1\in (0,1)$ and some sequence $(a_i)_{i \in \N}$. The upper half is the graph of $g_0$. In green one can see two sets $\L_{x^\perp}$. In the middle one can see the ball of all points, with respect to which $\L$ is star-shaped.}
\includegraphics[scale=0.6]{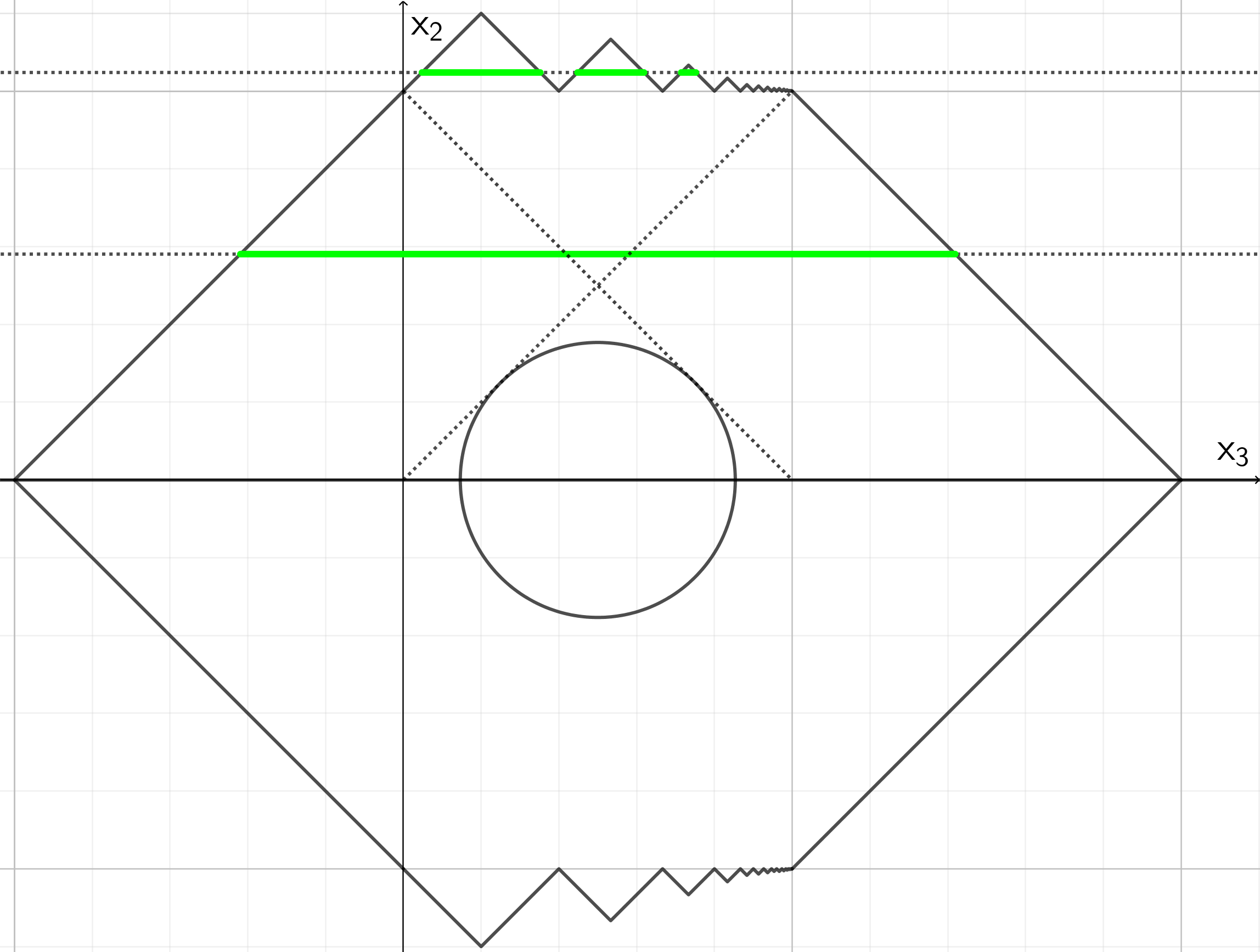}
\end{figure}

For any $x_0 \in B_{1/ (2\sqrt 2)} (1/2,0,1/2)$, the region $\L$ is star shaped with respect to $x_0$. For the definition of a strong Lipschitz domain, we need to choose an open cover of $\p \L$ and a projection with a certain direction on every set of the cover. For any orthogonal (rank 2) projection $\pi \colon \R^3 \to \R^2$, on the two connected components of the set $\pi^{-1} ( \pi (B_{1/  (4\sqrt 2 )} (1/2,0,1/2))) \cap \p \L$, one can define the chart as the inverse of $\pi$, which has a Lipschitz constant less than $10$. This leads to an open cover of $\p \L$ and one can then choose a finite subcover.

The boundary $\partial \L$ can be covered by the sets $\partial_1 \L  \coloneqq \big\{x \in \partial \L \colon x_1 \in \{0,1\}\big\}$ and $\partial_2 \L \coloneqq\big\{ x \in \partial \L\colon x_1 \in [0,1]\big\}$. These two boundary sets have a non-empty intersection, but $\partial_1 \L\cap\partial_2 \L$ is a ``one-dimensional" set with two-dimensional Hausdorff measure zero, that is, $\mathcal H^2(\partial_1 \L\cap\partial_2 \L)=0$.

For almost every $x \in \partial_1 \L$, the outward normal vector $n(x)$ is given by $\pm e_1$, while for almost every $x \in \partial_2 \L$, the outward normal vector is given by $\frac 1 {\sqrt 2} (\pm e_2 \pm e_3)$; the vectors ${e_1,e_2,e_3}$ are the usual unit vectors in the positive $x_1,x_2,x_3$ directions. Hence, we observe
\begin{align}
{\mathcal H^2}(  \partial_1 \L ) &= 4 \int_{-1}^2 g_0(t) \,\mathrm{d}t \le 9\,, \\
{\mathcal H^2}(  \partial_2 \L )&= 2 \int_{-1}^2 \sqrt {1+ (g_0')^2(t)} \,\mathrm{d}t = 6 \sqrt 2\,,\\
\int_{\partial \L} \lvert n(x) \cdot e_3 \rvert \,\mathrm{d}\mathcal H^2(x)&= \frac 1 {\sqrt 2} {\mathcal H^2}(  \partial_2 \L ) = 6\,. \label{ex cross volume}
\end{align}

It is important that ${\mathcal H^2}(  \partial \L ) = {\mathcal H^2}( \p_1\L) + {\mathcal H^2}( \p_2\L)$ is bounded independently of the sequence $(a_i)_{i \in \N}$ and that the surface integral in \eqref{ex cross volume} is completely independent of the sequence. 

The leading asymptotic term for the trace on the left-hand side of \eqref{def error term} is provided by \autoref{thm:polynomial}. Here, $\mathsf{I}(2) = -1/(4\pi^2)$ and hence
 \begin{align}
\tr \big(\mathrm{D}_\mu(L\L)-\mathrm{D}_\mu(L\L)^2\big) &=  \frac{L^2 \ln(L)}{4 \pi^3}  \int_{\partial \L} \mathrm{d}{ \mathcal H^2}( v)\, \lvert n(v) \cdot e_3 \rvert  +o(L^2\ln(L)) 
\\
&= \frac{L^2 \ln(L)}{\pi^3} \,\frac 3 2 + o(L^2\ln(L))\,,
 \end{align}
where we used \eqref{ex cross volume}.

We need an upper bound for the constant $\mathcal K( \L)$ defined in \autoref{tubular nbh small}, which is independent of the function $g$. We observe that  $\partial_1 \L$ is the image of the Lipschitz functions $f_j \colon [0,1]^2 \to \partial_1\L; (x_1,x_2) \mapsto (j,3x_1-1,g(3x_1-1)x_2)$ for $j=0,1$ and $\partial_2 \L$ is in the image of the Lipschitz functions $\tilde{f}_\pm \colon [0,1]^2 \to \partial_2 \L; (x_1,x_2) \mapsto (x_1,3x_2-1,\pm g(3x_2-1))$. Thus, the set $\{f_0,f_1\tilde {f}_+, \tilde{f}_-\}$ defines a piecewise Lipschitz atlas of $\p \L$. Hence, as $C_{\text{lip}}(f_j)=3 \sqrt 2$ and $C_{\text{lip}}(\tilde{f}_\pm)=3$, we observe 
\be \mathcal K(\L) \le (16 \sqrt 2)^2 \times 2 \times (1+(3\sqrt 2)^2 +1+3^2)< \infty \,.\ee
Hence, by \autoref{lem:reduction}, we have 
\begin{align}
 \tr \mathrm{D}_\mu(L\L)^m =\frac {L^2} {2 \pi} \int_{\R^2} \mathrm{d}x^\perp \,\operatorname{tr} \left( \mathds{1}_{L \L_{x^\perp} }\mathds{1}[(-\mathrm{i}\nabla^\parallel)^2 \le \mu]  \,\mathds{1}_{L \L_{x^\perp} } \right)^m  + O( L^2 ) \, \label{6.18}
\end{align}
with $x^\perp=(x_1,x_2)$. To get to the polynomial $f(t)=t(1-t)$ we have to subtract this term with $m=2$ from the term with $m=1$. 
Now, we intend to use \autoref{lem: 1d bound}. To do so, we need to describe the lengths of the (sub)intervals of $\L_{x^\perp}$ depending on $x^\perp=(x_1,x_2)$. 

We can ignore the case $x_1 \in \{0,1\}$, as this is a null set with respect to the Lebesgue measure on $\R^2$. If $x_1 \in (0,1)$ and $\lvert x_2 \rvert \le 1$ then the set $\L_{x^\perp}$ is a single interval of length $\ell_1(x_2)=3-2 \lvert x_2 \rvert$. The interesting case is $x_1 \in (0,1)$ and $1 < \lvert x_2 \rvert <2$. Here, for any $i \in \mathbb N$ with $a_i>2(\lvert x_2 \rvert-1)$, there is an interval of size $\ell_i(x_2)= a_i - 2 (\lvert x_2 \rvert -1)$, as illustrated in \autoref{ex domain}. For any $\lvert x_2 \rvert>1$, this will only lead to finitely many intervals, as the sequence $(a_i)_{i \in \N}$ is a null sequence. Now, we apply \eqref{6.18} for $m=1$ and $m=2$, and then \autoref{lem: 1d bound} and see that
\begin{align}
\frac{ 2 \pi} {L^2} &\Big| \tr \big( \mathrm{D}_\mu(L\L)-\mathrm{D}_\mu(L\L)^2\big) \Big|\\
 &\le {C+} \int_0^1 \mathrm{d}x_1 \int_\R \mathrm{d}x_2 \, \lVert (\mathds{1}_{L\L_{x^\perp}} \mathds{1}(-\Delta \le \mu)\mathds{1}_{L\L_{x^\perp}})^2 - \mathds{1}_{L\L_{x^\perp}}\mathds{1}(- \Delta \le \mu) \mathds{1}_{L\L_{x^\perp}} \rVert_1 \\
&\le  \frac 2 {\pi^2} \int_0^1 \mathrm{d} x_2\, \big(\ln(1+L\sqrt{\mu}(3-2x_2)) + C\big)  \\
 & + \frac 2 {\pi^2}\int_1^2 \mathrm{d}x_2\, \sum_{i \in \N \colon a_i > 2 (x_2-1)} \big(\ln(1+  L \sqrt{\mu}(a_i-2(x_2-1))+C\big) \\
&=\frac 2 {\pi^2} \int_0^1  \mathrm{d} x_2 \,\Big(\ln(1+L\sqrt{\mu}(3-2x_2)) +C\Big) + \frac 2 {\pi^2}\sum_{i \in \N} \int_{0}^{\frac 1 2 a_i} \mathrm{d}t\, \Big(\ln(1+  L \sqrt{\mu}2t) +C\Big) \,.
\end{align}
In the second step we also used that the set $\L_{x^\perp}$ is independent of $x_1 \in (0,1)$. The third step uses Fubini 
to exchange the sum and the integral and then transforms the integration variable to $t \coloneqq \frac 1 2 a_i+1- x_2$. The lower bound $0$ in the last integral stems from the condition $a_i>2(x_2-1)$, respectively from $t>0$.

We intend to show, that this upper bound is significantly smaller than the known asymptotics. The difference between the asymptotics and this upper bound can then be used as a lower bound for the error term. This is why it is very important that the coefficient in front of the upper bound is equal to the asymptotic coefficient and is thus the reason why we can only do this here for the polynomial $f(t)=t(1-t)$. 
 
We now allow our constants to depend on $\mu$ (for general $\nu\ge1$, they depend on all values of $\mu(\ell)$) and use the trivial inequality $\ln(1+ab) \le \ln(1+a) + \ln(1+b)$ for $a,b\ge0$ to arrive at
\begin{align}
\frac{\pi^3} {L^2} &\left|\tr \mathrm{D}_\mu(L\L)-\mathrm{D}_\mu(L\L)^2 \right| \\
&\le  \int_0^1  (\ln(1+L) + C )\,{\mathrm{d} x_2 \, }  + \sum_{i \in \N} \Big[\int_0^{\frac 1 2 a_i} \ln(1+  L t)\, \mathrm{d}t + C a_i \Big] {+ C}\\
&= \ln(1+L) + \frac 1 L \sum_{i \in \N} {\Big(}\Big(1+ \frac 1 2 a_iL\Big) \Big( \ln\Big(1+\frac 12 a_i L\Big) -1\Big) -(-1) \Big) + C   \\
&=  \ln(1+L) + \sum_{i \in \N} \Big[\frac 1 2 a_i \ln\Big(1+\frac 12 a_i L\Big)+\frac {\ln(1+\frac 12 a_i L)} L -\frac 1 2 a_i \Big] +C   \\
&\le  \ln(L) + \sum_{i \in \N} \frac 1 2 a_i \ln\Big(1+\frac 12 a_i L\Big)  +C \,,
\end{align}
or equivalently,
\begin{align} \label{non asympt bound}
\big(0\le\big)\,\tr \big(\mathrm{D}_\mu(L\L)-\mathrm{D}_\mu(L\L)^2\big) \le \frac{L^2\ln(L)}{\pi^3} \left[1+ \frac{1}{\ln(L)}\sum_{i \in \N} \frac 1 2 a_i \ln\Big(1+\frac 12 a_i L\Big)  + \frac{C}{\ln(L)}\right]\,.
\end{align}
In the first step, we used $1 \le 3- x_2 \le 3$, and in the fourth step, we used $\ln(1+L)\le \ln(L)+1$ for $L \ge 1$ and $\ln(1+\frac 12 a_i L) \le \frac 12 a_i L$.

Now we rewrite \eqref{def error term} and use \eqref{ex cross volume} and \eqref{non asympt bound} to obtain
\begin{align}
2 \pi^3 \varepsilon(L, \L) &= \frac{1}{2} \int_{\p\L} \mathrm{d}\mathcal H^2(v) \, |n(v)\cdot e_3| - \frac{2\pi^3}{L^2\ln(L)} \tr\big(\mathrm{D}_\mu(L\L) - \mathrm{D}_\mu(L\L)^2\big)
\\
&\ge 3   - {\left( 2+ \frac{1}{\ln(L)} \sum_{i \in \N}  a_i \ln\Big(1+\frac 12 a_i L\Big)     + \frac{2C}{\ln(L)}    \right) } \\
& =1  - \frac 1 {\ln(L)} \sum_{i \in \N} a_i \ln\Big(1+\frac 12 a_i L\Big)      - \frac C { \ln(L)}  \\
& = -\frac 1 {\ln(L)} \sum_{i \in \N}  a_i \ln\left( \frac 1 L+\frac 12 a_i \right)      - \frac C { \ln(L)}  \\  
&\ge -\frac 1 {\ln(L)} \sum_{i \in \N, a_i < \frac 1 L }  a_i \ln\left( \frac {3} {2L} \right)      - \frac C { \ln(L)}  \\ 
&\ge \sum_{i \in \N, a_i <  \frac 1 L }  a_i      - \frac C { \ln(L)} \eqqcolon \varepsilon_0(L)\,. \label{def eps0(L)}
 \end{align}
 The fourth step uses $\sum_i a_i=1$ and the fifth step relies on $L \ge 2, a_i \le 1$ to get $\ln( \frac 1 L+\frac 12 a_i ) \le 0 $. In the last step, $C$ changed. Now, we just need to find a good sequence $(a_i)_{i \in \N}$. To show our claim, it suffices to find a sequence $(a_i)_{i \in \N}$ such that
 \begin{align}
 \lim_{L \to \infty} \varphi(L) / \varepsilon_0(L) =0\,,
 \end{align}
since then the quotient $\varphi(L)/\varepsilon(L,\L) \le 2 \pi^3\varphi(L)/\varepsilon_0(L)\to0$ is less than $1$ for large $L$, that is, $\varepsilon(L,\L)\ge \varphi(L)\ge0$ for $L\ge L_0$, where $L_0$ is chosen below.

The construction of the sequence $a_i$ relies on \autoref{technical function existence}, and we apply this Lemma with $f$ as $\varphi$. With the resulting function $\mathrm{Env}(\varphi)$ we define the sequence of real numbers $a_i \coloneqq \mathrm{Env}(\varphi)(i-1)-\mathrm{Env}(\varphi)(i)$ for $i \in \N$. 
 As $\lim_{L \to \infty} \mathrm{Env}(\varphi)(L)=0$, we have $\sum_{i \ge L} a_i=\mathrm{Env}(\varphi)(L)$ and in particular $\sum_{i \in \N}a_i=\mathrm{Env}(\varphi)(0)=1$. As $\mathrm{Env}(\varphi)$ is non-increasing and convex, the $a_i$ are non-negative and non-increasing.  
 As the sequence defined this way is non-increasing and $\sum_{i\in\N} a_i =1$, we have $a_i \le \frac 1 i  \sum_{j \le i}a_j \le \frac 1 i$. Hence, we know that $i \ge L$ implies $a_i \le \frac 1 L$. Thus, we have the estimate
 \be \sum_{i \in \N, a_i  \le \frac 1 L }  a_i\ge \sum_{i \ge L} a_i= \mathrm{Env}(\varphi)(L) \,.\ee 
Furthermore, as $\mathrm{Env}(\varphi)(L) \ge C/\sqrt{\ln(2+L)}$, for $L$ large enough, we have
 \be \mathrm{Env}(\varphi)(L) - \frac C{\ln(L)} \ge \frac 1 2 \mathrm{Env}(\varphi)(L) \,. \ee
Hence, we conclude that
  \begin{align}
0\le  \lim_{L \to \infty} \frac{\varphi(L)}{\varepsilon_0(L)} =  \lim_{L \to \infty} \frac{\varphi(L)}{\sum_{i \in \N, a_i \le \frac 1 L } a_i - \frac C { \ln(L)}} \le 2\lim_{L\to \infty}\frac{\varphi(L)}{\mathrm{Env}(\varphi)(L)}=0\,. 
 \end{align}
One choice of $L_0$ could be that $4\varphi(L) \le \mathrm{Env}(\varphi)(L)$ for $L>L_0$ is satisfied. Thus, by this and \eqref{def eps0(L)}, there is an $L_0 >0$ such that for any $L>L_0$, we have
 \be \varepsilon(L,\L) \ge \varepsilon_0(L) /(2\pi^3) \ge \varphi(L)\,. \ee 
This finishes the proof.
\end{proof}

\begin{appendix}

\section{Some geometric results}\label{Appendix:B}

Here, we assemble a few geometric statements that we used.
\begin{lemma} \label{lx estimate}
Let $\L \subset \R^{d+1}$ be a piecewise $\mathsf{C}^{1,\alpha}$ region for some $0 < \alpha <1$. Let  $(\Psi_{\mathrm{pC},i})_{i \in I}$ be a piecewise $\mathsf{C}^{1,\alpha}$ atlas of $\p\L$  and $\Gamma$ as defined in \autoref{def ps}. Then there is a constant $C$ depending only on $\L$ such that for all unequal $v_1$ and $v_2$ in  $\partial \L$, we have
 \be \|v_1 -v_2 \| \ge C \min \left \{ \left \lvert n(v_1) \cdot \frac {v_1-v_2}{\|v_1-v_2\|} \right \rvert^{\frac 1 {\alpha}} , \operatorname{dist}(v_1, \Gamma)\right\} \,.\ee
\end{lemma}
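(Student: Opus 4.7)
The plan is to split into two regimes based on the ratio of $\|v_1 - v_2\|$ to $d_1 \coloneqq \operatorname{dist}(v_1, \Gamma)$. Fix a small constant $c_0 > 0$ (depending only on $\L$) to be determined later. If $\|v_1 - v_2\| \ge c_0 d_1$, then the desired inequality holds with constant $C = c_0$, since $c_0 d_1 \ge c_0 \min\{|n(v_1)\cdot(v_1-v_2)/\|v_1-v_2\||^{1/\alpha}, d_1\}$. I would therefore concentrate on the opposite regime $\|v_1 - v_2\| < c_0 d_1$, where the $1/\alpha$ bound must be extracted from the H\"older regularity of the boundary.

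In that regime, the first step is a geometric claim: for $c_0$ chosen small enough (depending only on $\L$), both $v_1$ and $v_2$ lie in a common open $\mathsf{C}^{1,\alpha}$ chart $\Psi = \Psi_{\mathrm{pC},i_0}$, so that $v_j = \Psi(x_j)$ with $x_j \in (0,1)^d$ for $j=1,2$. Since $d_1 > 0$, $v_1$ lies in some $\Psi_{\mathrm{pC},i_0}((0,1)^d)$, and the bi-Lipschitz bound yields $\operatorname{dist}(x_1, \partial([0,1]^d)) \ge C_{\text{lip}}^{-1} d_1$. To conclude that $v_2$ also lies in that chart, I would combine this rescaling with a Lebesgue-number argument applied to the open cover $\{\Psi_{\mathrm{pC},i}((0,1)^d)\}_{i \in I} \cup \{B_\rho(\Gamma) \cap \partial\L\}$ of the compact set $\partial\L$: the Lebesgue number covers points with $d_1$ bounded below by some fixed $\rho$, while bi-Lipschitz rescaling inside the chart handles arbitrarily small $d_1$.

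Once $v_1$ and $v_2$ share a common $\mathsf{C}^{1,\alpha}$ chart $\Psi$, the result follows from Taylor's theorem with H\"older remainder. By bi-Lipschitz continuity, $\|x_2 - x_1\| \le C_{\text{lip}} \|v_1 - v_2\|$, and the H\"older condition \eqref{C1a cond} on $D\Psi$ gives
\begin{equation*}
  v_2 - v_1 = D\Psi(x_1)(x_2 - x_1) + R,\qquad \|R\| \le C\|x_2 - x_1\|^{1+\alpha} \le C'\|v_1 - v_2\|^{1+\alpha},
\end{equation*}
where $R=\int_0^1 [D\Psi(x_1+t(x_2-x_1)) - D\Psi(x_1)](x_2-x_1)\,\mathrm dt$. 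The outward normal $n(v_1)$ is orthogonal to the tangent space $D\Psi(x_1)(\R^d)$, so it annihilates the linear term, leaving $|n(v_1) \cdot (v_1 - v_2)| = |n(v_1) \cdot R| \le C'\|v_1 - v_2\|^{1+\alpha}$. Dividing by $\|v_1 - v_2\|$ and raising to the power $1/\alpha$ yields $\|v_1 - v_2\| \ge (C')^{-1/\alpha} |n(v_1) \cdot (v_1-v_2)/\|v_1-v_2\||^{1/\alpha}$, which together with the trivial regime completes the proof.

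The hardest part is the chart-selection step. Distinct open pieces $\Psi_{\mathrm{pC},i}((0,1)^d)$ and $\Psi_{\mathrm{pC},j}((0,1)^d)$ may overlap, and a single bi-Lipschitz chart does not by itself pull back a Euclidean ball in $\partial\L$ into its parameter cube; so one must simultaneously balance the Lebesgue number (valid at a positive macroscopic scale) against the bi-Lipschitz rescaling inside a chart (valid as $d_1 \downarrow 0$), and verify that the resulting $c_0$ depends only on $\L$.
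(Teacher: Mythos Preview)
Your Taylor/H\"older argument in the second half is correct and matches the paper almost line for line: once $v_1=\Psi(x_1)$ and $v_2=\Psi(x_2)$ lie in a common $\mathsf{C}^{1,\alpha}$ chart, orthogonality of $n(v_1)$ to the image of $D\Psi(x_1)$ together with the H\"older bound on $D\Psi$ and the bi-Lipschitz estimate $\|x_1-x_2\|\le C_{\text{lip}}\|v_1-v_2\|$ gives $|n(v_1)\cdot(v_1-v_2)|\le C\|v_1-v_2\|^{1+\alpha}$, exactly as needed.

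The gap is in the chart-selection step, and you have already flagged it as the hardest part. Your sketch there is circular. The ``bi-Lipschitz rescaling'' argument shows that $x_1$ sits at distance $\ge C_{\text{lip}}^{-1}d_1$ from $\partial[0,1]^d$ and hence that any point $v_2\in\Psi([0,1]^d)$ with $\|v_1-v_2\|<c_0 d_1$ has preimage in the open cube. But it does \emph{not} show that $v_2$ lies in $\Psi([0,1]^d)$ to begin with: closeness of $v_2$ to $v_1$ in the ambient Euclidean metric says nothing about $\Psi^{-1}(v_2)$ unless you already know $v_2$ is in the image. Your Lebesgue-number cover $\{\Psi_{\mathrm{pC},i}((0,1)^d)\}\cup\{B_\rho(\Gamma)\cap\partial\L\}$ handles the regime $d_1\ge\rho$, but for $d_1<\rho$ the Lebesgue ball around $v_1$ may fall inside $B_\rho(\Gamma)$ and you learn nothing. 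There is no uniform way to let $\rho\downarrow 0$ here, because the Lebesgue number $\varepsilon(\rho)$ need not stay comparable to $\rho$.

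The paper closes this gap by invoking the \emph{global Lipschitz atlas} $(\Psi_{\mathrm{gL},j})_{j\in J}$, which the definition of a piecewise $\mathsf{C}^{1,\alpha}$ region guarantees and which you never use. A Lebesgue number for the global open cover $\{\Psi_{\mathrm{gL},j}((0,1)^d)\}_j$ gives a fixed $\varepsilon>0$ so that whenever $\|v_1-v_2\|<\varepsilon$, both points lie in a common \emph{global} chart $\Psi_{\mathrm{gL},j}$; the image of the segment $[y_1,y_2]$ under $\Psi_{\mathrm{gL},j}$ is then a path in $\partial\L$ of length $\le C\|v_1-v_2\|$ joining $v_1$ to $v_2$. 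If $v_1$ and $v_2$ fail to share a piecewise $\mathsf{C}^{1,\alpha}$ chart, that path must exit $\Psi_{\mathrm{pC},i_0}((0,1)^d)$ and hence meet $\Gamma$, yielding $\operatorname{dist}(v_1,\Gamma)\le C\|v_1-v_2\|$. This is precisely the contrapositive of your desired chart-selection claim, and it is where the global atlas is indispensable.
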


\begin{remark} The normal vector $n(v_1)$ is well-defined if $v_1 \not \in \Gamma$. In the case $v_1 \in \Gamma$, the minimum on the right-hand side is meant to be $0=\operatorname{dist}(v_1, \Gamma)$, which turns it into a trivial statement.
\end{remark}

\begin{proof}  We begin with the case $v_1 \in \Gamma$ or $v_2 \in \Gamma$; $v_1\in\Gamma$ is explained in the above remark. If $v_2\in\Gamma$, then we trivially have $\lVert v_1 - v_2 \rVert \ge \operatorname{dist}(v_1, \Gamma)$ and thus the claim holds for any $C \le 1$.

Let us now consider the case that there is an $i\in I$ such that both $v_1$ and $v_2$ are in $\Psi_{\text{pC},i}((0,1)^d)$. As $\Psi\coloneqq \Psi_{\text{pC},i}$ is injective, there are unique $x_k\in (0,1)^d$ such that $\Psi(x_k)=v_k\in\R^{d+1}$ for $k=1,2$. We observe that $n(v_1) \cdot D\Psi(x_1)=0 \in \R^d$, as the image of the matrix $D\Psi(x_1)$ is the tangent space to $\partial \L$ at $v_1$ and hence is orthogonal to the outward normal vector $n(v_1)$. Thus, using \eqref{C1a cond}, we see 
\begin{align}
\lvert n(v_1) \cdot (v_1-v_2) \rvert &= \big\lvert n(v_1) \cdot \big(\Psi(x_1)-\Psi(x_2)\big)  \big\rvert \\
&\le \lVert x_1 - x_2 \rVert \sup_{t \in [0,1]} \big\lvert n(v_1) \cdot D\Psi(tx_1+ (1-t)x_2) \big\rvert \\
&=  \lVert x_1 - x_2 \rVert \sup_{t \in [0,1]} \big\lvert n(v_1) \cdot (D\Psi(tx_1+ (1-t)x_2) - D\Psi(x_1) \big\rvert \\
&\le \lVert x_1 -x_2 \rVert \, C \lVert x_1-x_2 \rVert^\alpha = C \| x_1-x_2 \|^{1+\alpha}\,.
\end{align}
As $\Psi$ is bi-Lipschitz, we know that $\|v_1- v_2 \| = \|\Psi(x_1)-\Psi(x_2) \| \ge C \|x_1 -x_2 \|$. Using this and dividing both sides by $\|v_1-v_2\|$, we arrive at
\be \left \lvert n(v_1) \cdot \frac {v_1-v_2}{\|v_1-v_2\|} \right \rvert \le C\|v_1-v_2 \|^\alpha\,. \ee

We are now in the remaining case that $v_1$ and $v_2$ lie in different $\Psi_{\text{pC},i}((0,1)^d)$'s since $\p\Lambda = \Gamma \cup \bigcup_{i\in I} \Psi_{\text{pC},i}((0,1)^d)$.

Let $(\Psi_{\text{gL},j})_{j\in J}$ be a global Lipschitz atlas of $\p \L$. As  $\p\L = \bigcup_{j\in J} \Psi_{\text{gL},j}((0,1)^d)$ is a cover by (relatively) open sets and $\p\L$ is a compact metric space, by Lebesgue's number lemma, there is an $\varepsilon>0$ such that for all $v \in \p \L$ there is an $j\in J$ with $B_{\varepsilon}(v) \cap \p \L \subset \Psi_{\text{gL},j}((0,1)^d)$, where $B_{\varepsilon}(v)\subset \R^{d+1}$ is the open ball of radius $\varepsilon$ at $v$.

If $\lVert v_1- v_2 \rVert \ge\varepsilon$, we can choose $C= \varepsilon$ to get the statement, as the first expression inside the minimum is at most $1$. 

Hence, we are left with the case $\| v_1- v_2 \| < \varepsilon$. Now, we get an $j \in J$ such that $v_1,v_2 \in \Psi_{\text{gL},j}((0,1)^d)$.  Again, we define $y_k$ by $\Psi_{\text{gL},j}(y_k)=v_k$ for $k=1,2$. The image $\gamma$ of the linear path from $y_1$ to $y_2$ is at most $C \|y_1-y_2\|\le C \|v_1 -v_2 \|$ long. As $v_1$ and $v_2$ are in the images of two different $\Psi_{\text{pC},i}$'s, the path $\gamma$ has to intersect some edge $\Psi_{\text{pC},i}(\p (0,1)^d)$ which implies $\gamma \cap \Gamma \neq \emptyset$. Hence, we have
\be \operatorname{dist}(v_1, \Gamma) \le {\mathcal H^1}(  \gamma ) \le C \|y_1- y_2 \| \le C \|v_1 -v_2 \|\,. \ee
The last inequality follows since $\Psi_{\text{gL},j}$ is bi-Lipschitz. This finishes the proof.
\end{proof}  

\begin{lemma} \label{tubular nbh small}
For $d\ge 1$, let $f \colon [0,1]^d \to \R^{d+1}$ be a Lipschitz continuous function with Lipschitz constant $C_{\mathrm{lip}}(f)$ and let $\L \subset \R^{d+1}$ be a piecewise Lipschitz region. Then for any $r>0$, the $(d+1)$-dimensional Lebesgue volume of the $r$-neighborhood (see \eqref{def: ball}) of the set $f([0,1]^d)$ in $\R^{d+1}$ satisfies
\be \lvert B_r (f([0,1]^d)) \rvert \le {(16 \sqrt d)^d (C_{\mathrm{lip}}(f)^d+1)   }(r+r^{d+1} )\,,\ee
and the set $\p\L$ satisfies the bounds
\begin{align}
 \lvert B_r (\partial \L) \rvert &\le {\mathcal K(\L)}(r+r^{d+1} )\,,\label{tub L} \\
 \mathcal H^d(\p \L) &\le \mathcal K(\L)\,,
\end{align}
where $\mathcal K(\L)$ is described  as follows: Let $\mathcal A$ be the set of all piecewise Lipschitz atlases of $\partial \L$, as defined in \autoref{def ps}. Then, we define
\be \mathcal K (\L) \coloneqq \inf_{(\Psi_{\mathrm{pL},i})_{i \in I}  \in \mathcal A} \sum_{i \in I}  (16 \sqrt d)^d (C_{\mathrm{lip}}(\Psi_{\mathrm{pL},i})^d+1) \,.\ee 
\end{lemma}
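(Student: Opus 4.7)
\medskip

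The plan is to prove the first (chart) estimate by a direct covering argument, and then deduce both bounds on $\p\L$ by summing over an arbitrary piecewise Lipschitz atlas and taking the infimum. I will proceed in three steps.

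\emph{Step 1 (chart bound).} Fix $f:[0,1]^d\to\R^{d+1}$ Lipschitz with constant $C\coloneqq C_{\mathrm{lip}}(f)$, and fix $r>0$. I choose an integer $k$ satisfying $C\sqrt d/r\le k\le C\sqrt d/r+1$ and partition $[0,1]^d$ into $k^d$ congruent closed subcubes of side $1/k$, each of diameter $\sqrt d/k\le r/C$. Since $f$ is $C$-Lipschitz, the image of each subcube has diameter at most $r$ and is therefore contained in a closed ball of radius $r$ centered at any of its points. Consequently $B_r(f([0,1]^d))$ is covered by $k^d$ balls in $\R^{d+1}$ of radius $2r$, each of volume at most $(4r)^{d+1}$ (they sit in a cube of side $4r$). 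The convexity inequality $(a+b)^d\le 2^{d-1}(a^d+b^d)$ applied to $k\le C\sqrt d/r+1$ gives $k^d\le 2^{d-1}\bigl(C^d d^{d/2}/r^d+1\bigr)$, and multiplying through by $(4r)^{d+1}$ separates into a term proportional to $C^d d^{d/2}r$ and one proportional to $r^{d+1}$. A routine check of the constants ($2^{d-1}\cdot 4^{d+1}=2\cdot 8^d\le 16^d d^{d/2}$ for $d\ge 1$) then yields
\[
|B_r(f([0,1]^d))|\le(16\sqrt d)^d(C^d+1)(r+r^{d+1}).
\]

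\emph{Step 2 (volume bound for $\p\L$).} By the definition of a piecewise Lipschitz atlas given in \autoref{def ps}, any such atlas $(\Psi_{\mathrm{pL},i})_{i\in I}$ satisfies $\p\L=\bigcup_{i\in I}\Psi_{\mathrm{pL},i}([0,1]^d)$. Hence $B_r(\p\L)=\bigcup_{i\in I}B_r(\Psi_{\mathrm{pL},i}([0,1]^d))$, and subadditivity of Lebesgue measure together with Step~1 gives
\[
|B_r(\p\L)|\le\sum_{i\in I}(16\sqrt d)^d(C_{\mathrm{lip}}(\Psi_{\mathrm{pL},i})^d+1)(r+r^{d+1}).
\]
Taking the infimum over all piecewise Lipschitz atlases of $\p\L$ produces exactly $\mathcal K(\L)(r+r^{d+1})$.

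\emph{Step 3 (Hausdorff measure bound).} I invoke the standard fact that any $C$-Lipschitz map $f:\R^d\to\R^{d+1}$ satisfies $\mathcal H^d(f(A))\le C^d\mathcal H^d(A)$ for every Borel set $A$. Applying this with $A=[0,1]^d$ (so $\mathcal H^d(A)=1$) yields $\mathcal H^d(\Psi_{\mathrm{pL},i}([0,1]^d))\le C_{\mathrm{lip}}(\Psi_{\mathrm{pL},i})^d$, and the trivial estimate $C^d\le(16\sqrt d)^d(C^d+1)$ lets me sum over any atlas to bound $\mathcal H^d(\p\L)$; taking the infimum gives $\mathcal H^d(\p\L)\le\mathcal K(\L)$.

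I do not anticipate any genuine obstacle here: the argument is a standard covering estimate in the spirit of Minkowski-content bounds. The only delicate point is booking the constants cleanly enough to land exactly at $(16\sqrt d)^d$ in Step~1, which is arranged by choosing $k=\lceil C\sqrt d/r\rceil$ (ensuring each subcube's image fits in a ball of radius $r$) and applying the convexity inequality above to linearize $k^d$.
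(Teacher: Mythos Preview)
Your argument is correct and follows essentially the same covering strategy as the paper: a fine grid (your subcubes versus the paper's lattice points $A_r=(r/(C\sqrt d)\,\Z)^d\cap[0,1]^d$) to cover $f([0,1]^d)$ by $O((1+C\sqrt d/r)^d)$ balls of radius $\sim r$, then summation over an atlas. The only slight variation is in Step~3, where you invoke the standard Lipschitz bound $\mathcal H^d(f(A))\le C^d\mathcal H^d(A)$, whereas the paper reuses its covering $f([0,1]^d)\subset B_r(f(A_r))$ and passes to the limit $r\to 0$ in the definition of $\mathcal H^d$; both routes are equally valid and equally short.
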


\begin{proof}
 We consider the set
\begin{align}
A_r \coloneqq  \left(\frac {r}{C_{\text{lip}}(f)\sqrt d} \Z \right)^d \cap [0,1]^d\,.
\end{align}
The maximum distance a point in $[0,1]^d$ can have from $A_r$ is less than $\frac {r}{C_{\text{lip}}(f)}$.  For the cardinality $\#A_r$ of $A_r$, we observe
\be \#A_r \le \left( 1 +\frac{C_{\text{lip}}(f)\sqrt d} r \right)^d\le 2^{d-1} \left( 1+ \frac{C_{\text{lip}}(f)^d \sqrt d^d}{ r^d} \right) \le 2^{d-1} \sqrt d ^d  (1+C_{\text{lip}}(f)^d )(1+r^{-d})\,.\ee 
For any $x \in [0,1]^d$, there is a $z \in A_r$ such that $\| x-z\| \le r /C_{\text{lip}}(f)$ and thus $\|f(x)- f(z) \|  \le r$. This implies $B_r(f(A_r)) \supset f([0,1]^d)$, which leads to $B_{2r}(f(A_r)) \supset B_r(f([0,1]^d))$. Hence, we get
\begin{align*}  \lvert B_r (f([0,1]^d)) \rvert &\le \lvert B_{2r}(f(A_r)) \rvert \le {\lvert B_1(0) \rvert} \, \# A_r (2r)^{d+1} \le  4^{d+1} \,\#A_r r^{d+1} 
\\
&\le (16 \sqrt d)^d(1+C_{\text{lip}}(f)^d )    (r+r^{d+1})\,. \end{align*} 
This finishes the proof of the first statement. The second statement is trivially implied by the first one. Furthermore, as $B_r(f(A_r)) \supset f([0,1]^d)$ due to the definition of the Hausdorff measure (see e.g. \cite[Definition 2.1]{Evans}) we observe
\begin{equation*}
\mathcal H^d(f([0,1]^d) \le \lim_{r \to 0} \lvert B_1^{(d)}(0) \rvert \#A_r r^d  \le \lim_{r \to 0} (4 \sqrt d)^d (1+C_{\text{lip}}(f)^d )(1+r^{-d}) r^d =  (4 \sqrt d)^d (1+C_{\text{lip}}(f)^d ) \, .
\end{equation*}
The final statement is a corollary of this inequality.  We want to note that $\mathcal K(\L)<\infty$ for any piecewise Lipschitz region $\L$, as we require in this paper our atlases to be a finite collection of charts.
\end{proof}

\begin{lemma} \label{normal vector exists lem}
Let $\L \subset \R^3$ be a piecewise Lipschitz region with piecewise Lipschitz atlas $(\Psi_{\mathrm{pL},i})_{i \in I}$. Let $v_0 \in \p\L$ satisfy that there are $i_0 \in I$ and $x_0 \in (0,1)^2$ such that $\Psi_{\mathrm{pL},i_0}(x_0)=v_0$ and the Jacobi matrix $D \Psi_{\mathrm{pL},i_0}(x_0)$ exists. Then the signed distance function $\operatorname{d}_\L$ is differentiable at $v_0$, the outward unit normal vector $n(v_0)$ is well-defined, orthogonal to the image of $D \Psi_{\mathrm{pL},i_0}(x_0)$, and $D\operatorname{d}_\L(v_0)=n(v_0)$.
\end{lemma}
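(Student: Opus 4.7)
My plan is to derive all four assertions from a single first-order expansion, $\operatorname{d}_\L(v)=n(v_0)\cdot(v-v_0)+o(\|v-v_0\|)$ as $v\to v_0$, where $n(v_0)$ is the unit vector orthogonal to $T\coloneqq\operatorname{Im}(D\Psi_{\mathrm{pL},i_0}(x_0))$ oriented outward using the topological manifold-with-boundary hypothesis.  With $\Psi\coloneqq\Psi_{\mathrm{pL},i_0}$, the bi-Lipschitz property, evaluated on the secant $x_0+th$ and passed to $t\downarrow 0$, yields $C_{\mathrm{lip}}^{-1}\|h\|\le\|D\Psi(x_0)h\|\le C_{\mathrm{lip}}\|h\|$ for every $h\in\R^2$, so $D\Psi(x_0)$ is a linear isomorphism onto a genuine $2$-plane $T\subset\R^3$; the two unit vectors normal to $T$ are the only candidates for $n(v_0)$.

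\emph{Distance expansion.}  Decompose $v-v_0=tn_0+w$ with $w\in T$, where $n_0$ is an arbitrary unit normal to $T$.  I claim $\operatorname{dist}(v,\p\L)=|t|+o(\|v-v_0\|)$.  For the upper bound, set $u\coloneqq(D\Psi(x_0))^{-1}w\in\R^2$ (so $\|u\|\le C_{\mathrm{lip}}\|w\|$); by differentiability $\Psi(x_0+u)-v_0=w+o(\|w\|)$, whence $\|v-\Psi(x_0+u)\|\le|t|+o(\|v-v_0\|)$.  For the lower bound I use that $\Psi((0,1)^2)$ is relatively open in $\p\L$, so every $y\in\p\L$ close to $v_0$ has the form $y=\Psi(x)$ with $x$ close to $x_0$.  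Projecting $v-y$ onto $n_0$ and using $n_0\perp T$ gives $|n_0\cdot(v-y)|\ge|t|-o(\|x-x_0\|)$; combining with the bi-Lipschitz bound $\|x-x_0\|\le C_{\mathrm{lip}}(\|v-v_0\|+\|v-y\|)$ yields $\|v-y\|\ge|t|-o(\|v-v_0\|)$ uniformly in the relevant $y$.

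\emph{Orientation and conclusion.}  For small $t>0$ the open segment $\{v_0+sn_0:s\in(0,t]\}$ cannot intersect $\p\L$: any crossing $\Psi(x)=v_0+sn_0$ with $s\neq 0$ would force $|s|=|n_0\cdot(\Psi(x)-v_0)|=o(\|x-x_0\|)$ by the tangent expansion, while bi-Lipschitzness gives $|s|=\|\Psi(x)-v_0\|\asymp\|x-x_0\|$, a contradiction.  Hence each of the two normal segments lies entirely in $\L$ or entirely in $\L^\complement$.  The topological manifold-with-boundary hypothesis on both $\overline\L$ and $\L^\complement$ forces $\p\L$ to locally separate $\R^3$ into exactly two components, one contained in $\L$ and one in $\L^\complement$; the $o$-proximity of $\p\L$ to $T$ then places the $+n_0$ and $-n_0$ rays in different components.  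I define $n(v_0)$ as the unit normal with $v_0+tn(v_0)\in\L^\complement$ for small $t>0$.  A cone argument---showing that in the region $|t|\ge\varepsilon\|v-v_0\|$ the straight line from $v$ to $v_0+tn(v_0)$ avoids $\p\L$---propagates this sign to generic $v$; in the complementary region $|t|<\varepsilon\|v-v_0\|$ both $\operatorname{d}_\L(v)$ and $n(v_0)\cdot(v-v_0)$ are of size $O(\varepsilon\|v-v_0\|)+o(\|v-v_0\|)$, so letting $\varepsilon$ decrease slowly with $\|v-v_0\|$ delivers the advertised expansion.  This simultaneously establishes the well-definedness of $n(v_0)$, its orthogonality to $T$, the differentiability of $\operatorname{d}_\L$ at $v_0$, and the identity $D\operatorname{d}_\L(v_0)=n(v_0)$.

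The principal obstacle is the orientation step.  Bi-Lipschitzness of $\Psi$ controls $\p\L$ only up to $o(\|v-v_0\|)$, which is enough to pin down the unsigned distance to $\p\L$ but not which side of $T$ belongs to $\L$.  Identifying the sides requires the topological manifold-with-boundary hypothesis on both $\overline\L$ and $\L^\complement$---to ensure local separation of $\R^3$ by $\p\L$ into exactly two components---together with the cone argument above, which is needed to transfer the sign from the normal rays to arbitrary approach directions and so arrive at a signed first-order expansion rather than just an unsigned one.
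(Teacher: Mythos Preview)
Your overall strategy coincides with the paper's: reduce everything to the first-order expansion $\operatorname{d}_\L(v)=n(v_0)\cdot(v-v_0)+o(\|v-v_0\|)$, establish the unsigned distance estimate $\operatorname{dist}(v,\p\L)=|t|+o(\|v-v_0\|)$ from differentiability plus bi-Lipschitzness, and then determine the sign. Your unsigned estimate and your cone argument (propagating the sign from the normal ray to all $v$ with $|t|\ge\varepsilon\|v-v_0\|$) are correct and match the paper's computations, just phrased coordinate-free rather than after an affine normalisation.

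The genuine gap is the sentence ``the $o$-proximity of $\p\L$ to $T$ then places the $+n_0$ and $-n_0$ rays in different components.'' You have correctly argued that each open normal segment misses $\p\L$ and that (via the manifold-with-boundary hypothesis) $B_r(v_0)\setminus\p\L$ has exactly two components, one in $\L$ and one in $\L^\complement$. But nothing you have written excludes both normal rays from landing in the \emph{same} component: the cones $U_\pm$ are each connected and avoid $\p\L$, yet a priori a path in $B_r(v_0)\setminus\p\L$ could join them by going ``around'' through the thin equatorial region $U_0$. The $o$-proximity of $\p\L$ to $T$ does not by itself prevent this---$U_0\cap B_r(v_0)$ is still a genuinely three-dimensional region and could in principle contain all of one component while the other absorbs both $U_+$ and $U_-$.

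The paper treats exactly this point as the heart of the proof and devotes a separate lemma to it (an intersection-theoretic statement: any continuous path in a closed ball joining the north and south poles must meet the image of any continuous map from a disk that agrees with the equatorial embedding on its boundary). The proof compactifies to spheres and uses invariance of the oriented intersection number under homotopy. Your argument needs either this lemma or an equivalent degree-theoretic substitute; as written, the orientation step is asserted rather than proved.
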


To prove this statement, we need the following result from intersection theory.
\begin{lemma} \label{intersection lem}
Let $R>0$, $f_1 \colon [-1,1] \to \overline{ B}_R^{(3)} (0), f_2 \colon \overline{ B}_R^{(2)} (0) \to \overline{ B}_R^{(3)} (0)$ be continuous functions such that $f_1( \pm 1 ) = \pm R e_3$ and $f_2$ restricted to the boundary is the equatorial embedding, that is, $f_2(x)=(x,0)$ for $\|x \|=R$. Then, the images of $f_1$ and $f_2$ intersect.
\end{lemma}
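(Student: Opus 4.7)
The plan is to argue by contradiction using a degree-theoretic argument on the solid cylinder $C \coloneqq [-1,1] \times \overline{B}_R^{(2)}(0)$, which is homeomorphic to the closed 3-ball and has boundary $\partial C \cong S^2$. First I would carry out two harmless reductions: if $f_2(x_0) = \pm R e_3$ for some $x_0$, the images already meet at $f_1(\pm 1) = \pm R e_3$; and if $f_1(t_0) \in E \coloneqq \{(x,0):\|x\|=R\}$ for some $t_0$, they meet there through the prescribed equatorial boundary values of $f_2$. So I may assume $f_2(x) \neq \pm R e_3$ for every $x$ and $f_1(t) \notin E$ for every $t$, and the task becomes to contradict the assumed disjointness of the two images.

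Next I would introduce the continuous map $F \colon C \to \R^3$ defined by $F(t,x) \coloneqq f_1(t) - f_2(x)$. Under the disjointness hypothesis $F$ is nowhere zero, so $F/\|F\|\colon C \to S^2$ is continuous, and its restriction to $\partial C$ is null-homotopic, giving $\deg(F/\|F\||_{\partial C}) = 0$. To derive a contradiction I would compare $F$ with the standard model
\[ F_{\mathrm{std}}(t,x) \coloneqq t R e_3 - (x,0) = (-x_1,-x_2,tR), \]
which, as an orientation-preserving linear map on $C$ with Jacobian determinant $R>0$, possesses exactly one transverse zero at the origin of $C$, yielding $\deg(F_{\mathrm{std}}/\|F_{\mathrm{std}}\||_{\partial C}) = +1$. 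Crucially, $F$ and $F_{\mathrm{std}}$ \emph{already agree} on the ``corner'' $\{\pm 1\} \times \partial \overline{B}_R^{(2)}(0)$, both reducing there to the nonzero vector $(-x,\pm R)$ of norm $R\sqrt{2}$.

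The technical heart of the proof is then to verify that the straight-line homotopy $F_s \coloneqq (1-s)F + s F_{\mathrm{std}}$ is nowhere zero on $\partial C$ for every $s\in[0,1]$, which by homotopy invariance of degree gives $0 = \deg(F/\|F\||_{\partial C}) = \deg(F_{\mathrm{std}}/\|F_{\mathrm{std}}\||_{\partial C}) = 1$, the desired contradiction. I split $\partial C$ into three pieces. On the corner, $F_s \equiv F$ is the nonzero value noted above. On the top or bottom disk $\{\pm 1\} \times \overline{B}_R^{(2)}(0)$, a zero of $F_s$ would force the $z$-equation $(1-s) f_2(x)_z = \pm R$; combined with $|f_2(x)_z| \le R$, this requires $s=0$ and $f_2(x) = \pm R e_3$, contradicting the first reduction. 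On the side $[-1,1] \times \partial \overline{B}_R^{(2)}(0)$, a zero forces $(1-s) f_1(t)_{xy} = x$ with $\|x\| = R$, so $\|f_1(t)_{xy}\| = R/(1-s)$; this with $\|f_1(t)\| \le R$ forces $s=0$ and $f_1(t) \in E$, contradicting the second reduction. I expect these three case checks, which rely essentially on the ball-valued constraints $\|f_1(t)\| \le R$ and $\|f_2(x)\| \le R$ together with the boundary data matching $F_{\mathrm{std}}$ exactly at the corner, to be the main obstacle of the proof.
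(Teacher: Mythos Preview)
Your argument is correct, and the reductions together with the three boundary checks go through exactly as you outline. It is, however, a genuinely different route from the paper's.

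The paper also argues by contradiction, but instead of forming the difference map on the solid cylinder it extends each $f_j$ to all of $\R^j$ by the linear inclusion $\eta_j$ outside the ball, mollifies to obtain smooth maps whose images are still disjoint, and then one-point compactifies to maps $\hat f_j\colon\mathbb S^j\to\mathbb S^3$ sending $\infty\mapsto\infty$. Near $\infty$ the maps become the orthogonal linear inclusions, so they intersect transversely there with local intersection number $\pm1$; but $\pi_1(\mathbb S^3)=\pi_2(\mathbb S^3)=0$ makes both maps null-homotopic, whence the oriented intersection number must be $0$ by homotopy invariance --- a contradiction.

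Your approach trades this differential-topological machinery (mollification, transversality, oriented intersection numbers of submanifolds) for the more elementary Brouwer degree of a continuous map $\partial C\cong S^2\to S^2$, together with an explicit straight-line homotopy whose nonvanishing is verified by coordinate inequalities using only $\|f_j\|\le R$. The upshot is a shorter and more self-contained argument that avoids smoothing and compactification; the paper's approach, on the other hand, makes the underlying intersection-theoretic content (a $1$-cycle and a $2$-cycle in $\mathbb S^3$ with nonzero linking) more transparent.
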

\begin{proof}
Without loss of generality, we assume $R=1$. Assume $f_1$ and $f_2$ were two such functions such that their images do not intersect. Let $\eta_1 \colon \R^1 \to \R^3, t \mapsto (0,0,t)$ and $\eta_2 \colon \R^2 \to \R^3, x \mapsto (x,0)$ be the natural orthogonal inclusions. The assumptions on $f_j$ can now be stated as $f_j(x_j)= \eta_j(x_j)$ for $x_j \in \R^j$ with $\|x_j \|=1$ for $j\in\{1,2\}$. We extend the maps $f_j$ to the $\R^j$ by setting
\begin{align}
 f_j(x_j) \coloneqq \begin{cases} f_j(x_j) &\quad \text{ if } \| x_j \| \le 1 \\ \eta_j(x_j) &\quad \text{ if } \| x_j \|  >1 \end{cases} \,,\quad x_j \in \R^j\,,\,j\in\{1,2\}\,.
 \end{align} 
Trivially, these extensions are still continuous and their images still do not intersect. As the images do not intersect and only get close to each other in the compact set $\overline{B}_1^{(3)}(0)$, they have a positive distance. We can now mollify $f_j$ by convolution with an appropriately chosen, compactly supported smooth function to get $\hat f_j$ such that the images of $\hat f_1$ and $\hat f_2$ still have positive distance and $\hat f_j (x_j)=\eta_j(x_j)$ for any $x_j \in \R^j$ with $\| x_j \| \ge 2$.

For $d=1,2,3$, consider the  sphere $\mathbb S^d= \R^d \cup \{\infty\}$. With the charts $\text{id} \colon \R^d \to \mathbb S^d, x \mapsto x$ and $\iota_d \colon \R^d \to \mathbb S^d, x \mapsto x / \lVert x \rVert^2, 0 \mapsto \infty$, it becomes a differentiable manifold. We now extend $\hat f_j$ to a function from $\mathbb S^j$ to $\mathbb S^3$ by setting $\hat f_j(\infty)\coloneqq\infty$ for $j\in\{1,2\}$. The point $\infty$ is now an intersection point of $\hat f_1$ and $\hat f_2$. We want to show that the extended functions are still smooth and that they intersect transversely at $\infty$, see for instance \cite[Page 113]{DifftopGP}. For $j\in\{1,2\}$ and $x_j \in \R^j$ with $\| x_j \| < \frac 1 2$, we observe
\be 
(\iota_3^{-1} \circ \hat f_j \circ \iota_j)(x_j)= (\iota_3^{-1} \circ \eta_j )(x_j \lVert x_j \rVert^{-2})=\eta_j(x_j)\,.
\ee
Thus, in the charts $\iota_j ,\iota_3$ the maps $\hat f_j$ are linear and orthogonal at $0$ (which corresponds to $\infty \in \mathbb S^j$). The maps $\hat f_1,\hat f_2$ are therefore smooth and intersect transversely at $\infty$. In conclusion, we have just constructed two smooth maps $\hat f_j \colon \mathbb S^j \to \mathbb S^3$, which intersect transversely and have a unique intersection point. Thus, their oriented intersection number is equal to the local intersection number at this intersection point, which is $+1$ or $-1$ (in fact, it is $+1$). However, both maps are contractible (homotopic to a constant map) and thus, as oriented intersection numbers are homotopy invariant (see \cite[Page 115]{DifftopGP}), they should have intersection number $0$. This is a contradiction. Hence, the assumption that $f_1$ and $f_2$ do not intersect was wrong. 

\end{proof}

\begin{proof}[Proof of \autoref{normal vector exists lem}]
Let $C_{\text{lip}}$ be a bi-Lipschitz constant of $\Psi_{\text{pL},i_0}$, as in \autoref{biLip def}. Then, for any $x \in \R^2$ with $\lVert x \rVert=1$, we observe $C_{\text{lip}}^{-1} \le \lVert D\Psi_{\text{pL},i_0}(x_0) x \rVert \le C_{\text{lip}}$. This means that the Jacobi matrix is invertible. Thus, using affine linear transformations on $\R^2$ and on $\R^3$, we can transform the function $\Psi_{\text{pL},i_0}$ into a function $\Psi$ such that $x_0,v_0$ are mapped to $0$\footnote{in $\R^2$ resp. $\R^3$} and the Jacobi matrix turns into the standard inclusion $J \colon \R^2 \to \R^3, x \mapsto (x,0)$. The function $\Psi$ is now defined on some closed parallelogram $P$ containing $0$ in its interior $P^{\text{int}}$. Let $C_{\text{lip}}\ge 2$ be a bi-Lipschitz constant for $\Psi$. Let $0<\varepsilon <1/2$. Then, there is an $r>0$ such that 
\begin{itemize}
\item For any $x \in B_{2C_{\text{lip}}r}^{(2)}(0)$, we have $\lVert \Psi(x)-(x,0) \rVert \le \varepsilon \lVert x \rVert$, as $D \Psi(0)=J$;
\item $B_{3r}^{(3)}(0)\cap \p \L \subset \Psi(P^{\text{int}})$, as $\Psi^{\text{int}}$ is relatively open in $\p \L$;
\item The set $B_{r}^{(3)}(0)\cap (\p \L)^\complement$ has exactly two connected components, as $\overline{\L}$ and $\L^\complement$ are topological manifolds with common boundary $\p\L$.
\end{itemize}
As $\Psi$ is bi-Lipschitz, we observe
\begin{align}
B_{2r}^{(3)}(0) \cap \p \L &\subset \Psi\left(B_{2 C_{\text{lip}}r}^{(2)}(0) \right) \subset \bigcup_{x \in  B_{2 C_{\text{lip}}r}^{(2)}(0)} 
 \overline{B}_{\varepsilon \lVert x \rVert}^{(3)} ((x,0))\\
  &\subset  \bigcup_{x \in \R^2 } \overline{ B}_{\varepsilon \lVert x \rVert}^{(3)} ((x,0))  =\big\{ v \in \R^3 \colon \lvert v \cdot e_3 \rvert \le \varepsilon \| v \| \big\} \, . \label{U0 inclusion}
\end{align} 
We define
\begin{align}
	U_0 &\coloneqq \big\{ v \in \R^3 \colon \lvert v \cdot e_3 \rvert \le \varepsilon \| v \| \big\}\,, \\
	U_\pm &\coloneqq \big\{ v \in \R^3 \colon \pm  v \cdot e_3 >  \varepsilon \| v \| \big\}\,.
\end{align}
The sets $U_\pm \cap B_r^{(3)}(0)$ are open, convex and do not intersect $\p\L$ due to \eqref{U0 inclusion}. We will now use \autoref{intersection lem} to show that, up to a binary choice, we may assume $U_- \cap B_r^{(3)}(0) \subset \L$ and $U_+ \cap B_r^{(3)}(0) \subset \L^\complement$. As $B_r^{(3)}(0) \cap (\p\L)^\complement$ has exactly two connected components,  it is sufficient to prove that any (continuous) path $p \colon [-1/2 , 1/2 ] \to B_r^{(3)}(0)$ with $p(\pm 1/2) \in U_{\pm}$ intersects $\p\L$. 

We first use the convexity to extend $p$ by an (affine) linear path at both ends to get a path $f_1 \colon [-1,1] \to \overline{B}_{ 5 r}^{(3)}(0)$ with $f_1(\pm 1)= \pm 5 r e_3$. Then we define $f_2  \colon \overline{B}_{ 5  r}^{(2)}(0) \to \R^3$ by 
\begin{align}
	f_2(x) \coloneqq
	\begin{cases}
		\Psi(x) \quad & \text{ if } \lVert x \rVert < 2 r  \\
		\frac{\lVert x \rVert -3r }{r } (x,0) + \frac{ 3 r - \lVert x \rVert}{r } \Psi(x) \quad & \text{ if } 2 r\le \lVert x \rVert < 3   r \\
		(x,0) & \text{ if } 3r  \le  \lVert x \rVert \le   5   r
	\end{cases} \,.\label{hat Psi eps def}
\end{align}
We see that $f_2 $ is Lipschitz continuous. The middle case is just a convex combination between the two other cases. 
Let $x \in \R^2$ with $\lVert x \rVert \le 3 r$. We observe
\begin{align}
	\lVert f_2 (x) - (x,0) \rVert \le \sup_{t \in [0,1] } \lVert t \Psi(x) + (1-t) (x,0)  -(x,0) \rVert \le \sup_{t \in [0,1]}  t \lVert \Psi(x)- (x,0) \rVert 
	<  \varepsilon \lVert x \rVert\,.	 \label{hat Psi eps - x}
\end{align}
This implies 
\[ |f_2(x)\cdot e_3| = |f_2(x)\cdot e_3 -(x,0)\cdot e_3| \le \|f_2(x)-(x,0)\|<\varepsilon \lVert x \rVert\,,
\]
that is, $f_2\left(\overline{B}_{3r}^{(2)}(0)\right) \subset U_0$ and thus, by the definition of $f_2(x)$ for $\|x\|\ge 3r$, $f_2 \left(\overline {B}_{5r}^{(2)}(0) \right) \subset U_0$. By the triangle inequality and with $\varepsilon < \frac 12 $ we obtain
\be 
	\frac 1 2 \lVert x \rVert < \lVert f_2(x) \rVert  <\frac 3 2 \lVert x \rVert \, .
\ee
These inequalities yield $ f_2^{-1} \left(B_{r}^{(3)} (0) \right) \subset  B_{2r}^{(2)}(0)$ and $f_2 \left( \overline{B}_{3r}^{(2)} (0)\right)  \subset B_{5r}^{(3)}(0)$. The latter inclusion together with the definition of $f_2$ outside $B_{3r}^{(2)}(0)$ implies $f_2\left( B_{5r}^{(2)}(0) \right) \subset B_{5r}^{(3)}(0)$. Thus, $f_1$ and $f_2$ satisfy the assumptions of \autoref{intersection lem} (with $R \coloneqq 5 r$) and consequently, they have an intersection point $s \in \R^3$. We have $s \in f_2\left(\overline{  B}_{5r}^{(2)}(0)\right) \subset U_0$ and $f_1^{-1}(U_0) \subset (-\frac 12, \frac 12)$, which means $s$ is in the image of the original path $p$. Thus, $s \in B_r^{(3)}(0)$, which implies that $s \in f_2\left(B_{2r}^{(2)}(0)\right) = \Psi \left(B_{2r}^{(2)}(0)\right)\subset \p \L$. Therefore, the path $p$ intersects $\p\L$, which was our claim.

As a result, we know that the sets $U_\pm \cap B_r^{(3)}(0)$ lie on opposite sides of $\p\L$. Without loss of generality, we assume $U_- \cap B_r^{(3)}(0) \subset \L$ and  $U_+ \cap B_r^{(3)}(0) \subset \L^\complement$. In terms of the signed distance function $\operatorname{d}_\L$, this means that $\pm \operatorname{d}_\L(v) >0$ for $v \in  U_\pm \cap B_r^{(3)}(0)$.

We are left to estimate $\operatorname {dist}(v, \p \L)$ for $v \in B_r^{(3)}(0)$. We start with the case $v \in U_0 \cap B_r^{(3)}(0)$. For that, we consider the map
\be
\Phi \colon B_r^{(2)}(0) \times [-2 \varepsilon , 2 \varepsilon] \to \R^3\,, \quad (y,t) \mapsto (\sqrt{1-t^2} y, t \|y\|) \,.
\ee
We see that $\Phi\left( B_r^{(2)}(0) \times [- \varepsilon ,  \varepsilon]\right) = U_0 \cap B_r^{(3)}(0)$ and $\lVert \Phi(y,t) \rVert = \lVert y \rVert$. Furthermore, for a fixed $y$, the map $t \mapsto \Phi(y,t)$ defined on $[-2\varepsilon,2\varepsilon]$ is a path between $U_+$ and $U_-$ inside $B_r^{(3)}(0)$ and must thus intersect $\p\L$. This path has a length of $2 \lVert y \rVert \sin^{-1}(2 \varepsilon) \le 2 \pi \varepsilon \lVert y \rVert$. Hence, as each point $v \in U_0 \cap B_r^{(3)}(0)$ is on such a path for a $y$ with $\lVert y \rVert= \lVert v \rVert$, we get $\lvert \operatorname{d}_\L(v) \rvert \le 2 \pi \varepsilon \lVert v \rVert$. Therefore, for $v \in U_0 \cap B_r^{(3)}(0)$, we get
\be
\lvert \operatorname{d}_\L(v) - v \cdot e_3 \rvert \le \lvert \operatorname{d}_\L(v) \rvert+ \lvert v \cdot e_3 \rvert \le (2\pi + 1 ) \varepsilon \lVert v \rVert\,. \label{diff quotient}
\ee
For $v \in U_{\pm} \cap B_r^{(3)}(0)$, we know that $\pm \operatorname{d}_\L(v) >0$ and only need upper and lower bounds for the distance to $\p\L$. For the lower bound, as $\p\L\cap B_{2r}^{(3)}(0) \subset U_0$, we have
\be
	\lvert \operatorname{d}_\L(v) \rvert \ge \operatorname{dist}(v,U_0) = \sqrt{1-\varepsilon^2} \lvert v \cdot e_3 \rvert - \varepsilon \lVert v^\perp \rVert  \ge \lvert v \cdot e_3 \rvert  - \varepsilon^2
	\lvert v \cdot e_3 \rvert - \varepsilon \lVert v^\perp \rVert \ge \lvert v \cdot e_3 \rvert - 2\varepsilon \lVert v \rVert\,.
\ee
For the upper bound, we just use
\be
	\lvert \operatorname{d}_\L(v) \rvert \le \lvert v \cdot e_3 \rvert+ \lvert \operatorname{d}_\L((v^\perp,0))| \le  \lvert v \cdot e_3 \rvert+2 \pi \lVert v^\perp \rVert \le  \lvert v \cdot e_3 \rvert+ 2 \pi \lVert v \rVert\,.
\ee
As the signs align, we finally get
\be
	\lvert \operatorname{d}_\L(v) - v \cdot e_3 \rvert \le 2 \pi \lVert v \rVert\,.
\ee
Thus, \eqref{diff quotient} holds for all $v \in B_{r}^{(3)}(0)$, which, by definition, says that $\operatorname{d}_\L$ is differentiable at $0$ and its differential is $e_3$. We also see that $e_3$  is orthogonal to the image of $J$ and points towards $\L^\complement$, which means that it is the outward normal vector to $\p \L$ at $0$. This finishes the proof.
\end{proof}

\begin{lemma} \label{null set1}
Let $\L\subset \R^3$ be a \textit{piecewise Lipschitz region}. Then the outward normal vector $n(v)$ exists for $\mathcal H^2$ almost every $v \in \p \L$ and the set
\be\label{A.37}
	\mathcal N \coloneqq \left \{ x^\perp \in \R^2 \colon \partial \overline{\left( \L_{x^\perp}\right) } \neq \left( \p \L\right)_{x^\perp} \right\} 
\ee
is a (two-dimensional) Lebesgue null set, where $\L_{x^\perp}$ and $\left(\p \L\right)_{x^\perp}$ are defined in \autoref{def L_x}.
\end{lemma}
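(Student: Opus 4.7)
The plan is to fix a piecewise Lipschitz atlas $(\Psi_i)_{i\in I}$ of $\p\L$, introduce the projection $\pi\colon\R^3\to\R^2$, $(x^\perp,x^\parallel)\mapsto x^\perp$, and set $F_i\coloneqq\pi\circ\Psi_i\colon [0,1]^2\to\R^2$ (these are Lipschitz). The overall strategy for both claims will be to control negligible sets via Rademacher's theorem and the area formula for Lipschitz maps, then invoke \autoref{normal vector exists lem} to pass from chart-differentiability to concrete local geometry of $\L$.

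For the first statement I would apply Rademacher to each $\Psi_i$ to obtain a Lebesgue-null set $N_i\subset(0,1)^2$ off of which $\Psi_i$ is differentiable, and use that Lipschitz images preserve $\mathcal H^d$-nullity in dimension $d$. This makes $\bigcup_i[\Psi_i(N_i)\cup\Psi_i(\p[0,1]^2)]$ an $\mathcal H^2$-null subset of $\p\L$ (the boundary piece enters since $\p[0,1]^2$ is $1$-dimensional), and every point $v$ in its complement has the form $\Psi_i(x)$ with $x\in(0,1)^2$ and $D\Psi_i(x)$ existing, so \autoref{normal vector exists lem} provides $n(v)$.

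For the second statement I first observe that the inclusion $\partial\overline{\L_{x^\perp}}\subset (\p\L)_{x^\perp}$ is automatic for every $x^\perp$: as $\L_{x^\perp}$ is open,
$\partial\overline{\L_{x^\perp}}\subset\overline{\L_{x^\perp}}\setminus\L_{x^\perp}\subset(\overline\L)_{x^\perp}\setminus\L_{x^\perp}=(\p\L)_{x^\perp}$. Hence I only need to confine the potential failure of the reverse inclusion to a Lebesgue-null set of $x^\perp$. To this end I would define
$E_i\coloneqq\{x\in(0,1)^2\colon F_i\text{ is not differentiable at }x\text{ or }\det DF_i(x)=0\}$,
then apply Rademacher together with the Lipschitz area formula (equivalently, Sard's theorem for Lipschitz maps: $\int_{\R^2}N(F_i|_{E_i},y)\,dy=\int_{E_i}|\det DF_i|\,dx=0$) to conclude that $F_i(E_i)$ has planar Lebesgue measure zero; likewise $F_i(\p[0,1]^2)$ is Lebesgue-null as the Lipschitz image of a $1$-dimensional set. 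Then
$\mathcal N_0\coloneqq \bigcup_{i\in I}(F_i(E_i)\cup F_i(\p[0,1]^2))$
has planar Lebesgue measure zero, and the key claim reduces to $\mathcal N\subset\mathcal N_0$.

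To verify this inclusion, I would take $y\notin\mathcal N_0$ and $t\in(\p\L)_y$, and pick $i$ and $x$ with $\Psi_i(x)=(y,t)$; then $x\in(0,1)^2$ (since $y\notin F_i(\p[0,1]^2)$) and $\det DF_i(x)\neq 0$ (since $y\notin F_i(E_i)$). This forces the tangent plane $\operatorname{image} D\Psi_i(x)$ to be transverse to $e_3$, i.e.\ $n(v)\cdot e_3\neq 0$ at $v\coloneqq(y,t)$. Applying \autoref{normal vector exists lem}, $\operatorname{d}_\L$ is differentiable at $v$ with gradient $n(v)$, so $\partial_s \operatorname{d}_\L(y,s)|_{s=t}\neq 0$. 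Combined with $\operatorname{d}_\L(v)=0$ and the identities $\L=\{\operatorname{d}_\L<0\}$, $\overline\L=\{\operatorname{d}_\L\le 0\}$, this forces $\operatorname{d}_\L(y,\cdot)$ to change strict sign at $s=t$, so $\L_y$ locally fills a one-sided neighborhood of $t$ while $\overline{\L_y}$ misses the other side; hence $t\in\partial\overline{\L_y}$. The main technical content is this last passage from nonsingularity of $DF_i$ to the local one-sided structure of $\L_y$ near $t$, but this is really already packaged in \autoref{normal vector exists lem} (which delivers both differentiability of $\operatorname{d}_\L$ and the fact that $\L$ sits strictly on one side of its tangent plane); the remainder is routine null-set bookkeeping.
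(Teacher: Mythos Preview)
Your approach is essentially the same as the paper's: Rademacher on the charts, a Sard/area-formula step for the projected charts $F_i=\pi\circ\Psi_i$, and then \autoref{normal vector exists lem} to translate transversality into a sign change of $\operatorname{d}_\L$ along $e_3$. The paper splits your $E_i$ into two pieces, $\mathcal N_i$ (where $D\Psi_i$ does not exist) and $\mathcal M_i$ (where $D(\pi\circ\Psi_i)$ exists but is singular), but the logic is identical.

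There is one small bookkeeping gap you should fix. You define $E_i$ via differentiability of $F_i=\pi\circ\Psi_i$, not of $\Psi_i$. Differentiability of $\Psi_i$ implies that of $F_i$, but not conversely: for instance, a graph chart $\Psi_i(x)=(x,g(x))$ with $g$ Lipschitz but nowhere differentiable has $F_i=\mathrm{id}$ smooth with $\det DF_i\equiv 1$, yet $D\Psi_i$ does not exist. Thus from $x\notin E_i$ you cannot conclude that $D\Psi_i(x)$ exists, which is exactly the hypothesis of \autoref{normal vector exists lem} that you invoke in the last paragraph. The fix is immediate: enlarge $E_i$ to also contain the Rademacher null set $N_i$ from part one (the paper does exactly this via its $\mathcal N_i$). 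Since $F_i(N_i)$ is still Lebesgue-null as a Lipschitz image of a null set, $\mathcal N_0$ remains null and the rest of your argument goes through unchanged.
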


\begin{proof}
We observe
\be
	 \partial \overline{\left( \L_{x^\perp}\right) } \subset  \partial \left( \L_{x^\perp}\right)  \subset \left( \p \L\right)_{x^\perp}\,.
\ee
The first inclusion is trivial. The second inclusion can be seen as follows. Let $t\in \partial \left( \L_{x^\perp}\right)$. Then for all $r>0$, $B_r^{(1)}(t)\cap \L_{x^\perp}\not=\emptyset$ and $B_r^{(1)}(t)\cap (\L_{x^\perp})^\complement\not=\emptyset$. Therefore, $B_r^{(3)}(x^\perp,t)\cap \L\not=\emptyset$ and $B_r^{(3)}(x^\perp,t)\cap \L^\complement\not=\emptyset$. Therefore, $(x^\perp,t)\in\p\L$ and $t\in \left( \p \L\right)_{x^\perp}$.

Let $\pi \colon \R^3 \to \R^2$ be the projection with $\pi(e_3)=0$ and let $\left( \Psi_{\text{pL},i}\right)_{i \in I}$ be a piecewise Lipschitz atlas of $\p \L$. For $i \in I$, we define the sets
\be
	\mathcal N_i \coloneqq \p [0,1]^2 \cup \big\{ x \in (0,1)^2 \colon D \Psi_{\text{pL},i}(x) \text{ does not exist} \big\}\,,
\ee
which are Lebesgue null sets due to Rademacher's theorem\footnote{see e.g.~\cite[Theorem 3.2]{Evans}}. Thus, the set $\bigcup_{i\in I}\Psi_{\text{pL},i}(\mathcal N_i)$ is an $\mathcal H^2$ null set, see \cite[Theorem 2.8(i)]{Evans}. Combining this with \autoref{normal vector exists lem}, we now know that the outward normal  vector $n(v)$ is well-defined for $\mathcal H^2$ every $v \in \p\L$. As $\pi \circ \Psi_{\text{pL},i} \colon [0,1]^2 \to \R^2$ is Lipschitz, this implies that $(\pi \circ \Psi_{\text{pL},i}) (\mathcal N_i)$ is a Lebesgue null set~\cite[Lemma 3.2(iii)]{Evans}. Furthermore, we define the sets
\be
	\mathcal M_i \coloneqq \big\{ x \in (0,1)^2 \colon D (\pi \circ \Psi_{\text{pL},i})(x) \text{ exists, but is not invertible} \big\}\,.
\ee
By \cite[Theorem 3.8]{Evans}, we know that $(\pi \circ \Psi_{\text{pL},i}) (\mathcal M_i)$ is a Lebesgue null set. Let
\be
	\Omega_i \coloneqq  [0,1]^2 \setminus (\mathcal N_i \cup \mathcal M_i)\,,
\ee
and 
\be 
	\mathcal M \coloneqq \p \L \setminus \bigcup_{i \in I} \Psi_{\text{pL},i}(\Omega_i)\,.
\ee
Let $v \in \p \L \setminus \mathcal M$. Hence, there is an $i \in I$ and a $y \in \Omega_i$, such that $v = \Psi_{\text{pL},i}(y)$. Thus, $D\Psi_{\text{pL},i}(y)$  exists, has full rank and does not have $e_3$ in its image. By \autoref{normal vector exists lem}, we know that $n(v)$ exists and that $n(v) \cdot e_3 \neq 0$. Thus, the function $p \colon \R \to \R$ given by $p(t)=\operatorname{d}_\L(v+te_3)$ with $\operatorname{d}_\L$ being the signed distance function to the boundary $\p\L$ has non-vanishing differential at $0$ and satisfies $p(0)=0$. Hence, $p$ changes sign at $0$, which means that $v^\parallel \in   \partial \overline{\left( \L_{v^\perp}\right) }$. Conversely, this means that for $v \in \p\L$, the property $v^\parallel \not \in  \partial \overline{\left( \L_{v^\perp}\right) }$ implies that $v \in \mathcal M$. Thus, we have for the set $\mathcal N$ defined in \eqref{A.37},
\be 
	\mathcal N \subset \pi \left(  \mathcal M\right)\,.
\ee
Finally, we observe
\begin{align}
	\mathcal N \subset  \pi\left(\bigcup_{i \in I} \Psi_{\text{pL},i}(\mathcal N_i \cup \mathcal M_i)\right) = \bigcup_{i \in I} \left((\pi \circ \Psi_{\text{pL},i}) (\mathcal N_i) \cup  (\pi \circ \Psi_{\text{pL},i}) (\mathcal M_i)\right)\,,
\end{align}
which shows that $\mathcal N$ is a Lebesgue null set.\end{proof}

\begin{lemma} \label{Lipschitz cov}
Let $\L \subset \R^3$ be a piecewise Lipschitz region, $\pi \colon \R^3 \to \R^2$ be the canonical projection and $f \colon \p \L \to \R^+$ be measurable. Let $n \colon \partial \L \to \R^3$ be the outward normal vector field, which is defined almost everywhere (see \autoref{null set1}). Then we have
\be \int_{\partial \L} \mathrm{d}{ \mathcal H^2}( v)\, f(v) \lvert n(v) \cdot e_3 \rvert  = \int_{\R^2}\mathrm{d}x \sum_{v \in \pi^{-1}(x) \cap \p \L} f(v)\,. \ee
\end{lemma}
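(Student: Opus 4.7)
The plan is to invoke the area formula for Lipschitz maps, chart by chart on $\p\L$, with the key geometric observation that the tangential Jacobian of $\pi|_{\p\L}$ equals $|n(v)\cdot e_3|$ at almost every $v$.

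First I would fix a piecewise Lipschitz atlas $(\Psi_i)_{i\in I}$ of $\p\L$ and decompose $\p\L$ into disjoint Borel pieces $E_i \coloneqq \Psi_i((0,1)^2)\setminus \bigcup_{j<i}\Psi_j((0,1)^2)$, writing $E'_i \coloneqq \Psi_i^{-1}(E_i)\subset (0,1)^2$. The leftover set $N\coloneqq \p\L\setminus\bigsqcup_i E_i$ sits inside $\bigcup_i \Psi_i(\partial[0,1]^2)$, a finite union of Lipschitz images of one-dimensional sets, so $\mathcal{H}^2(N)=0$ and $\pi(N)$ is Lebesgue null (since $\pi$ is Lipschitz); consequently $N$ does not contribute to either side of the claim.

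Next, on each piece the strategy is to pass through $E'_i$ twice. By \autoref{normal vector exists lem}, for almost every $y\in E'_i$ the matrix $D\Psi_i(y)$ has columns $u,w\in\R^3$ spanning the tangent plane of $\p\L$ at $\Psi_i(y)$; then $u\times w$ is parallel to $n(\Psi_i(y))$ with $\|u\times w\|=J_{\Psi_i}(y)$, while the third coordinate $(u\times w)\cdot e_3$ is precisely $\det D(\pi\circ\Psi_i)(y)$. This furnishes the pointwise identity
\[ J_{\Psi_i}(y)\,|n(\Psi_i(y))\cdot e_3| = |\det D(\pi\circ\Psi_i)(y)| = J_{\pi\circ\Psi_i}(y), \]
which is the geometric crux of the argument.

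Applying the area formula to the bi-Lipschitz $\Psi_i\colon E'_i\to E_i$ then converts $\int_{E_i} f(v)|n(v)\cdot e_3|\,d\mathcal{H}^2(v)$ into $\int_{E'_i} f(\Psi_i(y))\,J_{\pi\circ\Psi_i}(y)\,dy$; a second application of the area formula, now to the Lipschitz map $\pi\circ\Psi_i\colon E'_i\to\R^2$, rewrites this as $\int_{\R^2}\sum_{v\in\pi^{-1}(x)\cap E_i} f(v)\,dx$. Summing over $i\in I$ and using the disjointness of the $E_i$'s together with the negligibility of $N$ yields the claim. The only nontrivial step is the pointwise Jacobian identity above, a short linear-algebra calculation; the rest is routine bookkeeping of the area formula and of the null-set exceptions from Rademacher's theorem.
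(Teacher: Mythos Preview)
Your proposal is correct and follows essentially the same route as the paper: both arguments apply the area formula twice (once for the chart $\Psi_i$ and once for $\pi\circ\Psi_i$) and identify the ratio of Jacobians with $|n(v)\cdot e_3|$ via the cross-product computation $\det(\pi B)=(w_1\times w_2)\cdot e_3$, $|\det(B^*B)|^{1/2}=|w_1\times w_2|$. The only cosmetic difference is that you make the disjoint decomposition $E_i$ and the negligibility of the edge set $N$ explicit, whereas the paper simply assumes $\operatorname{supp}(f)$ lies in a single chart image and leaves the localization implicit.
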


\begin{proof}
The proof is based on a quite general form of changing variables. We use the following area-formula (see \cite[Theorem 3.9]{Evans}) with one slight modification. To this end, let $n,m \in \N$ with $n \le m$, $U \subset \R^n$ be open, $\Phi \colon U \to \R^m$ be Lipschitz continuous and let $g \colon U \to \R^+$ be measurable. Then we have the identity
\be \int_U \mathrm{d}y\, g(y) \lvert D\Phi(y) \rvert = \int_{\R^m} \mathrm{d}{\mathcal H^n}( x)\sum_{y \in \Phi^{-1}(x)} g(y) \,, \label{cov Evans}\ee 
where $\lvert D\Phi(y) \rvert^2= \det (D\Phi(y)^* D\Phi(y))$. In \cite{Evans}, this is stated for $g \in \Lp^1(U)$. However, their proof also applies to positive, measurable functions $g$ as an identity in $[0,\infty]$.
 
We cannot apply this directly to $\pi$, as it decreases the dimension and $\pi$ does not have an inverse. Thus, we have to introduce a new map. 
 
Let $(\Psi_{\text{pL},i})_{i\in I}$ be a piecewise Lipschitz atlas of $\p \L$  so that $\p\Lambda = \bigcup_{i\in I}\Psi_{\text{pL},i}([0,1]^2)$. We may assume that $\operatorname{supp}(f) \subset \Psi_{\text{pL},i} ((0,1)^2) $ for some $i \in I$. For the remainder of this proof, we write $\Psi$ for $\Psi_{\text{pL},i}$.

Now, we can apply \eqref{cov Evans} with $\Phi \coloneqq \pi \circ \Psi$ and $g \coloneqq f\circ \Psi$. Thus, we see
\begin{align}
 \int_{(0,1)^2} \mathrm{d}y\, f(\Psi(y)) \lvert D(\pi \circ \Psi)(y) \rvert &= \int_{\R^2}\mathrm{d}x \sum_{y \in (\pi \circ \Psi)^{-1}(x)} f(\Psi(y))\\
& = \int_{\R^2} \mathrm{d}x \sum_{v \in \pi^{-1}(x)\cap \p \L} f(v) \,.  \label{cov proofeq1}
 \end{align}
We used that $\Psi$ is bijective. So, we already have the right-hand side of the claim. We will apply again  \eqref{cov Evans} with the functions $\Phi \coloneqq \Psi$ and $g$ given by
\be g(y)\coloneqq f(\Psi(y))\frac{ \lvert D(\pi \circ \Psi)(y) \rvert} { \lvert D\Psi(y) \rvert}\,,\quad y\in (0,1)^2 \,. \ee
Thus, using that $\Psi$ is bijective and that the measure $\mathcal H^2$ on $\p \L$ is the $2$-dimensional Hausdorff measure, we have
\begin{align}
 \int_{(0,1)^2} \mathrm{d}y\,f(\Psi(y)) \lvert D(\pi \circ \Psi)(y) \rvert = \int_{\p \L} \mathrm{d}{ \mathcal H^2}(v)\, f(v) \frac {\lvert D(\pi \circ \Psi)(\Psi^{-1}(v)) \rvert}{\lvert D\Psi(\Psi^{-1}(v))\rvert} \,.  \label{cov proofeq2}
 \end{align}
To conclude the proof, we only need to show that the quotient of the functional determinants is given by $\lvert n(v) \cdot e_3 \rvert$ for almost every $v \in \p \L$. Let $v \in \p\L$ be such that $B \coloneqq D\Psi(\Psi^{-1}(v)) \in \R^{3 \times 2}$ is well-defined. We identify the two column vectors of the $3 \times 2$ matrix $B$ as $w_1$ and $w_2$. The image of $B$ is the tangent space to $\p \L$ at $v$. As $\Psi$ is bi-Lipschitz continuous, the matrix $B$ has full rank. The normal vector $n(v)$ is now orthogonal to the linear independent vectors $w_1,w_2$. Thus, $n(v) \lvert w_1 \times w_2 \rvert= \pm w_1 \times w_2$. 
 
As $\pi$ is linear, we have $D(\pi \circ\Psi)(\Psi^{-1}(v))= \pi B$. For the determinant of this $2 \times 2$ matrix, we get $\det(\pi B)=( w_ 1\times w_2) \cdot e_3$. For the denominator, we observe
 \be \det( B^* B)=  \lvert w_1 \rvert ^2 \lvert w_2 \rvert^2 - (w_1 \cdot w_2)^2=  \lvert w_1 \times w_2\rvert^2 \,.\ee
 In conclusion, we have
\begin{align}
\frac {\lvert D(\pi \circ \Psi)(\Psi^{-1}(v)) \rvert}{\lvert D\Psi(\Psi^{-1}(v))\rvert} = \frac {\lvert (w_1 \times w_2) \cdot e_3 \rvert} {\lvert w_1 \times w_2 \rvert} =\lvert n(v) \cdot e_3 \rvert \,.
\end{align}
In combination with \eqref{cov proofeq1} and \eqref{cov proofeq2} we have proved the statement.

\end{proof}

\begin{cl} \label{null sets2}
Let $\L\subset\R^3$ be a piecewise Lipschitz region. Then, for Lebesgue almost every $x^\perp \in \R^2$, the set $\L_{x^\perp}$ is a finite (possibly empty) union of intervals with disjoint closures.
\end{cl}
\begin{proof}
As $\mathcal H^2( \partial \L)$ is finite, see \autoref{tubular nbh small}, we have by \autoref{Lipschitz cov} (with $f=1$)
\be  \label{B.3: f=1}
\int_{\R^2} \mathrm{d} x^\perp \, \# (\partial (\L_{x^\perp})) = \int_{\partial \L} \mathrm{d} \mathcal H^2(v)\, \lvert n(v) \cdot e_3 \rvert \le \mathcal H^2(\partial \L ) < \infty\,.
\ee
This implies that the set $\p(\L_{x^\perp})\subset\R$ is finite for almost every $x^\perp$. Hence, $\L_{x^\perp}$ is almost everywhere a finite union of intervals.
 If, for some $x_0^\perp \in \R^2$, two different connected components of $\L_{x_0^\perp}$ share a boundary point, $t$, then $t \in \p( \L_{x_0^\perp}) \setminus \p  \overline{\left( \L_{x_0^\perp}\right) }$. Looking at \autoref{null set1}, we realize that this means $x_0^\perp \in \mathcal N$. Thus, we have proved the claim.
\end{proof}

\begin{lemma} \label{dist to Gamma int lem}
Let $\L \subset \R^3$ be a piecewise $\mathsf{C}^{1,\a}$ region with $\Gamma$ as in \autoref{def ps}. Then, there is a constant $C< \infty$ such that
\be \int_{\p \L}  \mathrm d \mathcal H^2(w) \lvert \ln(\operatorname{dist} (w, \Gamma) ) \rvert \le C\,. \ee
\end{lemma}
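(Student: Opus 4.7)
The plan is to reduce the bound to a linear-in-$r$ estimate for the surface tube around $\Gamma$, namely
\[
\mu(r) \coloneqq \mathcal H^2\bigl\{w \in \partial\Lambda : \operatorname{dist}(w,\Gamma) < r\bigr\} \le C r\,,\qquad r \in (0,1]\,,
\]
with $C$ depending only on $\Lambda$ (and on the chosen piecewise $\mathsf C^{1,\alpha}$ atlas used to define $\Gamma$). Given this, the logarithmic integral will be handled by a Cavalieri/layer-cake argument.

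To establish the tube bound, the first observation is that $\Gamma = \bigcup_{i\in I} \Psi_{\mathrm{pC},i}(\partial[0,1]^2)$ is a finite union of bi-Lipschitz images of $\partial[0,1]^2$, and is therefore $1$-rectifiable with $\mathcal H^1(\Gamma) \le 4\sum_{i\in I} C_{\mathrm{lip}}(\Psi_{\mathrm{pC},i}) < \infty$. Parametrizing each of the four sides of every $\partial[0,1]^2$ by a Lipschitz map from $[0,1]$ and sampling at parameter spacing of order $r$, one obtains $N \lesssim \mathcal H^1(\Gamma)/r + 1$ centers $p_k \in \Gamma$ whose balls $B_{2r}^{\mathbb R^3}(p_k)$ cover the ambient $r$-neighborhood of $\Gamma$ and, in particular, $\{w \in \partial\Lambda : \operatorname{dist}(w,\Gamma) < r\}$. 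For each such ball, pulling it back through any of the (finitely many) Lipschitz charts $\Psi_{\mathrm{pL},j}$ of $\partial\Lambda$ places the preimage inside a Euclidean disk of radius $\le 2 C_{\mathrm{lip},j} r$ in $[0,1]^2$; arguing as in the proof of \autoref{tubular nbh small} then yields $\mathcal H^2(\partial\Lambda \cap B_{2r}^{\mathbb R^3}(p_k)) \le C r^2$ with a constant uniform in $k$. Summing over $k$ gives $\mu(r) \le N \cdot C r^2 \le C' r$.

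With this in hand, I would split the target integral at $\operatorname{dist}(w,\Gamma) = 1$. The part with $\operatorname{dist}(w,\Gamma) \ge 1$ contributes at most $\ln(\operatorname{diam}\Lambda)\, \mathcal H^2(\partial\Lambda)$, which is finite by \autoref{tubular nbh small}. On the region $\{\operatorname{dist}(w,\Gamma) < 1\}$ the integrand $-\ln\operatorname{dist}(w,\Gamma)$ is positive, so the Cavalieri formula together with the substitution $r = e^{-t}$ gives
\[
\int_{\{\operatorname{dist}<1\}} \bigl[{-\ln\operatorname{dist}(w,\Gamma)}\bigr]\, d\mathcal H^2(w) \;=\; \int_0^\infty \mu(e^{-t})\, dt \;=\; \int_0^1 \mu(r)\, \frac{dr}{r} \;\le\; C\int_0^1 dr \;=\; C\,,
\]
which together with the previous bound closes the proof.

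The main obstacle is the surface tube estimate $\mu(r) \le Cr$: the linear decay requires combining the $1$-dimensional rectifiability of $\Gamma$ with the local $2$-dimensional area bound on $\partial\Lambda$ that the finite Lipschitz atlas provides; in particular, one must verify that $\mathcal H^2(\partial\Lambda \cap B_{2r}^{\mathbb R^3}(p_k)) \le C r^2$ with a constant independent of the location $p_k$, which is where the finiteness of the atlas enters decisively. Once this is secured, the split at $\operatorname{dist}=1$ and the layer-cake bookkeeping are routine.
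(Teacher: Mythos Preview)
Your proposal is correct and follows essentially the same strategy as the paper: both reduce the logarithmic integral to the linear surface-tube estimate $\mathcal H^2\{w\in\partial\Lambda:\operatorname{dist}(w,\Gamma)<r\}\le Cr$, and both obtain this estimate by exploiting the $1$-rectifiability of $\Gamma$ together with the finite bi-Lipschitz atlas of $\partial\Lambda$. The only differences are cosmetic: the paper uses a dyadic decomposition $\sum_k(|k|+1)\,\mathcal H^2(\{\operatorname{dist}\le 2^k\})$ where you use the continuous layer-cake formula, and the paper cuts $\Gamma$ into finitely many short Lipschitz arcs (applying Lebesgue's number lemma so that each arc's $r$-neighborhood lies in a single global chart, and then invoking \autoref{tubular nbh small} with $d=1$) where you cover $\Gamma$ by $\lesssim 1/r$ balls and bound each ball's surface intersection via pullback through the charts.
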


\begin{proof}
We start with
\begin{align}
 \int_{\partial \L} \mathrm{d}\mathcal H^2(w)\,\lvert \ln(\operatorname{dist}(w,\Gamma))\rvert 
&\le C \sum_{k \in \Z} (\lvert k \rvert +1) \cdot \mathcal H^2\big( \big\{ w \in \partial \L \colon2^{k-1} \le \operatorname{dist} (w,\Gamma) \le 2^k \big\} \big)\\
&\le C \sum_{k =- \infty}^{k_{\text{max}}} (\lvert k \rvert +1)\cdot \mathcal H^2\big( \big\{ w \in \partial \L \colon \operatorname{dist} (w,\Gamma) \le 2^k \big\} \big) \,. \label{piecewise C1a semifinal est}
\end{align}
 For the first step, we just bound the integrand by a step function from above. As $\L$ is bounded, the associated set is empty for $k>k_{\text{max}}$ with some finite $k_{\text{max}}$. We are left to estimate the volume of these sets. Specifically, we will show that there is an $r_0>0$ and a $C < \infty$ such that for any $r< r_0$, we have  
\be \mathcal H^2( B_r(\Gamma) \cap \partial \L ) \le Cr\,. \ee
We recall that $B_r(\Gamma)\subset \R^3$ is the $r$-neighborhood of $\Gamma$. By assumption, there is a global Lipschitz atlas $(\Psi_{\text{gL},j})_{j \in J}$ of $\p \L$, as in \autoref{def ps}.  For each $i \in I$, the set  $U_j  \coloneqq \Psi_{\text{gL},j} ((0,1)^d)\subset \p \L$ is a (relatively) open subset of the compact metric space $\p \L$ and we have $\p \L \subset \bigcup_{j \in J} U_j $. Thus, by Lebesgue's number lemma, there is a constant $r_0>0$ such that any $v\in \partial \L$ there is a $j \in J$ such that $B_{2r_0}(v) \cap \p \L \subset U_j$. 

Now, we need to understand the set $\Gamma$. We recall its definition
\be \Gamma \coloneqq \bigcup_{i \in I} \Psi_{\text{pC},i}(\p [0,1]^2)\,. \ee
Let $C_0$ be a Lipschitz constant for all $\Psi_{\text{pC},i}$'s which exists, as $I$ is finite. As $\p [0,1]^2$ is just the boundary of the unit square, there is a surjective (piecewise linear) function $\vartheta \colon [0,1] \to \p [0,1]^2$ with Lipschitz constant $4$. Let $N\in \N$ with $N>4C_0/r_0$ and $f_k \colon [0,1] \to[0,1]$ be the functions satisfying $f_k(t)=\frac{k-1+t}N$. Now, for any $1 \le k 
\le N$ and $i \in I$, we  define $g_{ik} \colon [0,1] \to \Gamma$ by $g_{ik} \coloneqq  \Psi_{\text{pC},i} \circ \vartheta \circ f_k$ and observe
\be C_{\text{Lip}}\left ( g_{ik} \right) \le 4 C_0 /N < r_0 \,. \label{gik lip est} \ee
Furthermore, $\Gamma = \bigcup_{i\in I} \bigcup_{k=1}^N g_{ik}([0,1])$. By \eqref{gik lip est}, we know  $g_{ik}([0,1]) \subset B_{r_0} (g_{ik}(0))$ and thus
\be B_r(g_{ik}((0,1)) \subset B_{2r_0}(g_{ik}(0)) \ee
for $r \le r_0$. Hence, there is an $j=j(i,k) \in J$, such that $ B_r(g_{ik}([0,1]) \cap \p \L  \subset U_{j(i,k)} $.
For any $r \le r_0$, we can estimate
 \be   \mathcal H^2( B_r(\Gamma) \cap \partial \L ) \le \sum_{i\in I} \sum_{k=1}^N \mathcal H^2\big( B_r(g_{ik}([0,1]) ) \cap  \p \L \big) \,. \ee
As $\Psi_{\text{gL},j(i,k)}$ is bi-Lipschitz, there is a constant $C$ such that
\be \mathcal H^2\big( B_r(g_{ik}([0,1]) ) \cap  \p \L \big)  \le C \left  \lvert \Psi_{\text{gL},j(i,k)}^{-1}\big( B_r(g_{ik}([0,1]) ) \cap \p\L  \big) \right \rvert \,,\ee
and
\be \Psi_{\text{gL},j(i,k)}^{-1} \left(  B_r(g_{ik}([0,1]) ) \cap  \p\L    \right)  \subset   B_{Cr}( \Psi_{\text{gL},j(i,k)}^{-1}(g_{ik}([0,1])))  \,.\ee
We now apply \autoref{tubular nbh small} with $f=\Psi_{\text{gL},j(i,k)}^{-1} \circ g_{ik}$ and $d=1$ to obtain
\be \lvert B_{Cr}( \Psi_{\text{gL},j(i,k)}^{-1}(g_{ik}([0,1])))   \rvert \le C(r+r^2) \le Cr\,, \ee
as $r < r_0$.

In conclusion, as $I$ is finite, we have
\begin{align}
\mathcal H^2(B_r(\Gamma) \cap \partial \L ) & \le  \sum_{i\in I} \sum_{k=1}^N \mathcal H^2\big( B_r(g_{ik}([0,1]) ) \cap  \p \L \big) \\
&\le  C\sum_{i\in I} \sum_{k=1}^N \left  \lvert \Psi_{\text{gL},j(i,k)}^{-1}\big( B_r(g_{ik}([0,1]) ) \cap \p\L  \big) \right \rvert \\
&\le  C\sum_{i\in I} \sum_{k=1}^N  \lvert B_{Cr}( \Psi_{\text{gL},j(i,k)}^{-1}(g_{ik}([0,1])))   \rvert  \\
&\le  C\sum_{i\in I} \sum_{k=1}^N Cr \le Cr\,. 
\end{align}
For $r_0<r<2^{k_{\text{max}}}$, we trivially arrive at the same estimate as long as $C \ge \mathcal H^2( \partial \L ) r_0^{-1}$, that is, $ \mathcal H^2(B_r(\Gamma)\cap\p\L)\le C r$ also for ``large'' $r$. 

Now, we are able to finish \eqref{piecewise C1a semifinal est} and obtain for some (finite) constant $C$
\begin{align}
 \int_{\partial \L} \mathrm{d}\mathcal H^2(w)\,\lvert \ln(\operatorname{dist}(w,\Gamma))\rvert 
\le  C  \sum_{k =- \infty}^{k_{\text{max}}} (\lvert k \rvert+1) 2^k \le C\,,
\end{align}
which was the claim.
\end{proof}

\section{Proof of \eqref{kmu integrable eq}} \label{kmu integrable section}

We observe
\begin{equation}
 \int_{\R^{m-1}}\prod_{j=1}^{m}\frac{\mathrm{d}y_j^\parallel} { \langle y_j^\parallel \rangle}  = \int_{\R^{m-1} } \mathrm dx_1^\parallel \cdots \mathrm dx_{m-1}^\parallel \prod_{j=1}^{m} \frac 1 { \langle x_j^\parallel- x_{j-1}^\parallel \rangle }\,, \label{B.4}
 \end{equation}
 where we switched back to the integration variables $x_1, \dots , x_m$ and set\footnote{The values $x_0$ and $x_m$ only matter through $x_0-x_m=0$. Thus, we can set both to $0$.} $x_0 \coloneqq x_m \coloneqq 0$. As we can see, the last expression is the $m-1$ fold convolution of $\langle \, \cdot\,  \rangle ^{-1}$ with itself evaluated at $0$. This is a job for the Fourier transform. We use the convention
 \be \mathcal F (f) (\xi) \coloneqq \lim_{R \to \infty} \int_{-R}^R \mathrm d t f(t) \,\e^{-2\pi \mathrm{i} \xi t } \,, \quad \xi \in \R\,. \ee
Thus, we have
\be \int_{\R^{m-1} } \mathrm dx_1^\parallel \cdots \mathrm dx_{m-1}^\parallel \prod_{j=1}^{m} \frac 1 { \langle x_j^\parallel- x_{j-1}^\parallel \rangle } = \int_\R \mathrm d\xi\,\mathcal F(\langle \, \cdot \,\rangle^{-1} )(\xi)^m  \, . \label{B.6}\ee
The Fourier transform of $\langle \, \cdot \, \rangle^{-1}$ can be expressed in terms of the modified Bessel function of the second kind $K_0$, see \cite[Eq.~10.32.6]{NIST:DLMF}\footnote{see \cite[Eq.~1.4.22]{NIST:DLMF} to verify their usage of an improper Riemann integral, while this paper uses Lebesgue integrals},
\be \mathcal F(\langle \, \cdot \, \rangle^{-1} )(\xi)= \lim_{R \to \infty} \int_{-R} ^R \mathrm dt\,\frac 1 {\langle t \rangle} \e^{2 \pi \mathrm{i} t \xi} = 2 \lim_{R \to \infty} \int_0^R \mathrm d t\,\frac{ \cos(2 \pi  \lvert \xi \rvert t ) } {\sqrt{t^2+1}}  = 2  K_0(2 \pi \lvert \xi \rvert ) \,. \ee
We observe that
\be 
\int_0^\infty \mathrm d \xi \,2  K_0(2 \pi \lvert \xi \rvert )  = \frac 1 2 \int_{-\infty}^\infty \mathrm d \xi\, 2 K_0(2 \pi \lvert \xi \rvert ) = \frac  1 2 \left(\langle 0 \rangle^{-1} \right)=\frac 1 2\,.\label{K0 int}
\ee
We need the (known) estimate,
\be 0 < \ln(2)- \gamma_E  < \frac 1 8 \,,\ee
where $\gamma_E$ is Euler's constant (see e.g. \cite[Eq.~5.2.3]{NIST:DLMF}). 
Using this inequality, the series representations \cite[Eq.~10.31.2, Eq.~10.25.2]{NIST:DLMF}, the harmonic series $H_n \coloneqq \sum_{k=1}^n k^{-1}\le n!$, the identity $\Gamma(n+1)=n!$ (where $\Gamma$ is the Gamma function, see e.g. \cite[Eq.~5.2.1, Eq.~5.4.1]{NIST:DLMF}) and the geometric series, we get for any $t \in (0,1)$
\begin{align}
K_0(t)&= -\left(\ln\left(\frac 1 2 t \right) + \gamma_E\right) \sum_{k=0}^\infty\frac{ \left( \frac 1 4 t^2\right)^k  }{(k!)^2} + \sum_{k=1}^\infty H_k \frac{ \left( \frac 1 4 t^2\right)^k  }{(k!)^2} \\
&<  - \left(\ln\left(\frac 1 2 t \right) + \gamma_E\right) \frac 1 {1-\frac 1 4 t^2} + \frac{t^2}{4-t^2} \,.  
\end{align}
Using the last two inequalities, we can infer
\be 2K_0(1) < 2 \left(  \frac 1 8  \frac 4 3 + \frac 1 3 \right) = 1\,.\ee
Thus, as $K_0$ is decreasing on $\R^+$ (see \cite[§10.37]{NIST:DLMF}), we have $2K_0(t)<1$ for $t>1$. For $t \in (0,1)$, we estimate using $\ln(t/2)+\gamma_E<0$ and $0<\gamma_E<1$,
\begin{align}
K_0(t) & \le - \left(\ln\left(\frac 1 2 t \right) + \gamma_E\right) \frac 1 {1-\frac 1 4 t^2} + \frac{t^2}{4-t^2} \\
& = -\ln\left( \frac 1 2 t \right)  - \gamma_E + \frac{ t^2}{4-t^2} \left( - \ln\left( t \right) + \ln(2) -\gamma_E +1 \right) \\
&< -\ln\left( \frac 1 2 t \right)  - \gamma_E  + \frac 1 3 (\sup_{t \in (0,1)}(-t^2 \ln(t) + 1+ \ln(2) -\gamma_E) ) \\
& \le -\ln\left( \frac 1 2 t \right)  - \gamma_E  + \frac 1 3 (1/(2\mathrm{e})+ 1+ \ln(2) -\gamma_E) ) < - \ln\left(\frac 1 2 t \right)\,.
\end{align}
The last step relies on a numerical computation. This can be rewritten as
\be2 K_0(2 \pi \xi) \le - 2\ln(\pi \xi) \ee
for $0<2 \pi \xi <1$. Thus, we are able to estimate 
\begin{align}
 \int_\R \mathrm d\xi\,\mathcal F (\langle \, \cdot \, \rangle^{-1} )(\xi) ^m  &= 2\int_0^\infty \mathrm d\xi\, (2K_0(2 \pi  \xi) )^m 
\\
&\le 2^{m+1}\int_0^{\frac 1{2\pi}}\mathrm d\xi\, (-\ln( \pi \xi))^m + 2\int_{\frac 1 {2\pi}} ^\infty \mathrm d\xi \, 2 K_0(2 \pi \xi) \\
 &\le   \frac 2 \pi 2^m  \int_0^1\mathrm dt\, (- \ln(t))^m + 2\int_0^\infty\mathrm d\xi\, 2K_0(2 \pi \xi)  
\\& = \frac 2 \pi 2^m m! + 1< 2^m m! \,. \label{B.12n}
 \end{align}
 The final estimate relies on $m\ge 2$ and $\pi > 3$, while the last identity is based on \eqref{K0 int} and
 \be \int_0^1 \mathrm d\xi\, (- \ln(\xi))^m  = \int_0^\infty \mathrm dt\, t^m \,\e^{-t}  = \Gamma(m+1) = m! \,.\ee
 Combining \eqref{B.4}, \eqref{B.6} and \eqref{B.12n}, we arrive at
\be  \int_{\R^{m-1}}\prod_{j=1}^{m}\frac{\mathrm{d}y_j^\parallel} { \langle y_j^\parallel \rangle} < 2^m m! \,, \ee
 which was the claim.

\section{Asymptotic expansion with order one error term}\label{Appendix C}

Our final result, \autoref{1D final asym}, in this section deals with the asymptotic expansion for a finite union of bounded intervals. That is, we assume that we have $k\in\N$ open and bounded intervals $I_1,\ldots, I_k$, whose closures are disjoint. More precisely, there exist $d_i>0$ for $1 \le i <k$ with $\sup I_i +d_i = \inf I_{i+1}$. Let $\ell_j\coloneqq |I_j|$ be the length of $I_j$ and let $\Omega \coloneqq \bigcup_{j=1}^k I_j$. The symbol $\ell$ for the length of intervals in this section has, of course, nothing to do with the index of a Landau level.

The proof of \autoref{1D final asym} is based upon two lemmata. The first lemma is per se not an asymptotic result but reduces the analysis to a single interval including an error term. The second lemma deals with the asymptotic expansion for a single interval, including an order one error term, and improves a seminal result by Landau and Widom in \cite{LW}. This is achieved by improving a certain estimate in their proof which allows for an order one error term instead of $o(\ln(L))$.  Later, Widom\cite{Widom1982} extended their result and proved that the error term is indeed of order one. This was used by Sobolev in \cite[Chapter 8]{Sob:AMS} to obtain concrete error terms. Our error term is somewhat different and fits our purposes. It is important to notice that there is still an undetermined error term of order one which is however independent of the scaling and the lengths of the intervals and depends only on the energy.

The first lemma is the following.
\begin{lemma} \label{1d multiple interval error lem}
Let $\mu >0$ and $m\in\N$. Then under the above assumptions on $\Omega$ we have
\be \left\lvert \tr \Big[ \Big( \mathds{1}_\Omega \mathds{1}[(-\mathrm{i}\nabla^\parallel)^2 \le \mu] \mathds{1}_\Omega\Big)^m - \sum_{j=1}^k \left( \mathds{1}_{I_j} \mathds{1}[(-\mathrm{i}\nabla^\parallel)^2 \le \mu] \mathds{1}_{I_j}\right)^m \Big]\right \rvert \le C \sum_{j=1}^{k-1}  \ln \Big(1+ \frac {\ell_j}{1+d_j}\Big)  \,, \ee
where $C$ is a constant depending on $m$ and $\mu$, but crucially not on $k$ or the intervals themselves.
\end{lemma}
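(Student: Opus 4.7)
The plan is to combine a telescoping reduction with a ``two-set estimate''. Setting $\mathrm E \coloneqq \mathds{1}[(-\mathrm{i}\nabla^\parallel)^2 \le \mu]$ and $\Omega_j^R \coloneqq I_j \cup I_{j+1} \cup \cdots \cup I_k$, so that $\Omega_1^R = \Omega$, $\Omega_k^R = I_k$ and $\Omega_j^R = I_j \cup \Omega_{j+1}^R$ with $I_j$ to the left of $\Omega_{j+1}^R$ at distance $d_j$, one has the telescoping identity
\[
\tr(\mathds{1}_\Omega \mathrm E \mathds{1}_\Omega)^m - \sum_{j=1}^k \tr(\mathds{1}_{I_j} \mathrm E \mathds{1}_{I_j})^m = \sum_{j=1}^{k-1}\Delta_j,
\]
where $\Delta_j \coloneqq \tr(\mathds{1}_{\Omega_j^R}\mathrm E \mathds{1}_{\Omega_j^R})^m - \tr(\mathds{1}_{I_j}\mathrm E \mathds{1}_{I_j})^m - \tr(\mathds{1}_{\Omega_{j+1}^R}\mathrm E \mathds{1}_{\Omega_{j+1}^R})^m$. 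It therefore suffices to prove the following: for any bounded interval $I$ of length $\ell$ and any Borel set $\Omega'\subset \R$ with $\sup I + d \le \inf \Omega'$,
\[
\Big| \tr(\mathds{1}_{I\cup\Omega'}\mathrm E \mathds{1}_{I\cup\Omega'})^m - \tr(\mathds{1}_I\mathrm E \mathds{1}_I)^m - \tr(\mathds{1}_{\Omega'}\mathrm E \mathds{1}_{\Omega'})^m \Big| \le C \ln\Big(1+\frac{\ell}{1+d}\Big),
\]
with $C$ depending only on $m$ and $\mu$ (the case $m=1$ is trivial, as both sides vanish).

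For this two-set estimate, set $P \coloneqq \mathds{1}_I\mathrm E \mathds{1}_I$, $Q \coloneqq \mathds{1}_{\Omega'}\mathrm E \mathds{1}_{\Omega'}$ and $R \coloneqq \mathds{1}_I\mathrm E \mathds{1}_{\Omega'}$, so that $\mathds{1}_{I\cup\Omega'}\mathrm E \mathds{1}_{I\cup\Omega'} = P + R + R^* + Q$. Expanding the $m$-th power and using $\mathds{1}_I \mathds{1}_{\Omega'} = 0$ leaves only terms corresponding to closed cyclic walks of length $m$ on the two-vertex graph with loop edges $P, Q$ and transition edges $R, R^*$; the two constant walks yield exactly $\tr P^m + \tr Q^m$, so the difference above is a sum of at most $4^m$ ``mixed'' walks of the form (up to cyclic rotation)
\[
P^{a_1} R\, Q^{d_1} R^* \cdots P^{a_b} R\, Q^{d_b} R^*, \qquad b \ge 1,\quad \textstyle\sum a_i + \sum d_i + 2b = m.
\]
Applying generalized H\"older to this $4b$-fold product with exponent $\infty$ on each $P^{a_i}$ and $Q^{d_i}$ block (each of operator norm $\le 1$) and exponent $2b$ on each $R,\,R^*$ gives $|\tr(\text{walk})| \le \|R\|_{2b}^{2b} \le \|R\|_2^2$, where the last step uses $s_k(R)\le \|R\|_\infty \le 1$ and $2b \ge 2$. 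Thus the two-set estimate reduces to the Hilbert--Schmidt bound $\|R\|_2^2 \le C_\mu \ln(1 + \ell/(1+d))$.

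This Hilbert--Schmidt bound is proved by enlarging $\Omega'$ to the half-line $[\sup I + d,\infty)$ (which only increases the non-negative integral $\int_I\int_{\Omega'} k_\mu^2(x-y)\,\mathrm d x\,\mathrm d y$) and splitting the domain according to whether $|x-y| \le 1/\sqrt\mu$ or $|x-y| > 1/\sqrt\mu$: on the former one uses $k_\mu^2 \le \mu/\pi^2$ on a set of Lebesgue measure $O(1/\mu)$ (empty when $d \ge 1/\sqrt\mu$), contributing at most a constant; on the latter one uses $k_\mu^2(z) \le 1/(\pi z)^2$ and evaluates the integral explicitly as $\frac{1}{\pi^2}\ln(1 + \ell/D)$ with $D\coloneqq \max(d,1/\sqrt\mu)$. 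A short case analysis (splitting on whether $\ell\sqrt\mu \le 1$ or $>1$ and using $\ln(1+x) \ge x/(1+x)$ to absorb the additive constant into the logarithm) converts this to $C_\mu \ln(1+\ell/(1+d))$. The main subtlety is that $|\Omega'|$ can be arbitrarily large, so the proof must be arranged so that no factor depending on $|\Omega'|$ survives; this is precisely why the $Q^{d_i}$-blocks are collapsed via $\|Q\|_\infty \le 1$ rather than via a Schatten norm, leaving only the interaction factor $\|R\|_2^2$, which is controlled purely by the geometric separation $d$ and the length $\ell$.
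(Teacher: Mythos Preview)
Your proof is correct and follows essentially the same strategy as the paper: reduce to the two-set case by telescoping/induction (the paper phrases it as induction on $k$ with $I_k$ allowed to be an arbitrary measurable set to the right, which is exactly your telescoping), expand the $m$-th power, and bound every mixed term by $\lVert \mathds{1}_I \mathrm E\, \mathds{1}_{\Omega'}\rVert_2^2$ via H\"older. The only cosmetic difference is the Hilbert--Schmidt step: the paper uses the pointwise bound $|k_\mu(z)|\le C_\mu/(1+|z|)$, which yields $\lVert \mathds{1}_I \mathrm E\, \mathds{1}_{\Omega'}\rVert_2^2 \le C_\mu^2\int_0^{\ell}\!\int_{\ell+d}^{\infty}(1+y-x)^{-2}\,\mathrm dy\,\mathrm dx = C_\mu^2\ln\!\big(1+\ell/(1+d)\big)$ in one line and avoids your near/far splitting and subsequent case analysis.
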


\begin{proof} Let $Q \coloneqq \mathds{1}[(-\mathrm{i}\nabla^\parallel)^2 \le \mu]$. It is convenient to make a slight generalization by allowing $I_k$ to be any measurable set such that $d_{k-1} \coloneqq \inf(I_k)- \sup(I_{k-1} )>0$. We note that $\ell_k$  is undefined, but it is also not present in our claim. We proceed by induction with respect to the number of intervals, $k$. Once we have proved the statement for $k=2$, the statement follows for any $k$, as we can choose $I_k' \coloneqq I_k \cup I_{k+1}$. 

Hence, we just have to deal with the case $k=2$. We observe $1_\Omega= 1_{I_1}+ 1_{I_2}$. We multiply out the first term and, using $I_1 \cap I_2 = \emptyset$, we have
\begin{align}
 \left( \mathds{1}_\Omega Q \mathds{1}_\Omega\right)^m =& \sum_{j \in \{1,2\}^{\{0, \dots , m \}} } \mathds{1}_{I_{j_0}} \prod_{i=1}^m Q \mathds{1}_{I_{j_i}}\,.
 \end{align}
The two summands $j=(1,1, \dots ,1)$ and $j=(2,2, \dots,2)$ are the ones we subtract in the statement of the lemma. Hence, we have to estimate all other summands. In the case $j_0 \neq j_m$, we use $\tr AB= \tr BA$ with $A=\mathds{1}_{I_{j_0}}Q$ and $B= \prod_{i=1}^m Q \mathds{1}_{I_{j_i}}$ to conclude that the trace vanishes. We are left to estimate the terms where $j_0=j_m$ and there is an $i \in \{1,2, \dots ,m-1\}$ with $j_i \neq j_0$. In this case, we consider $i_-$ and $i_+$ as the smallest and largest such $i$ (which can be the same). Now, we  write
 \begin{align} \label{C.3}
  \mathds{1}_{I_{j_0}} \prod_{i=1}^m Q \mathds{1}_{I_{j_i}} =  \left( \mathds{1}_{I_{j_0}} Q\right) ^{i_--1} \mathds{1}_{I_{j_0}} Q \mathds{1}_{I_{j_{i_-}}} A_{i_-,i_+}     \mathds{1}_{I_{j_{i_+}}}Q \mathds{1}_{I_{j_0}} \left( Q \mathds{1}_{I_{j_0}} \right) ^{m- i_+-1} \,,
  \end{align}
where $A_{i_-,i_+}$ is the identity, if $i_-=i_+$ and a product of some operators $Q$, $\mathds{1}_{I_1}$, and $\mathds{1}_{I_2}$ otherwise. As all of the operators are projections, their operator norm can be bounded by $1$. As we are interested in the trace, we will bound the trace norm. To do so, it suffices to bound two operators in the Hilbert--Schmidt norm and all others in the operator norm. The operators we will bound in Hilbert--Schmidt norm are $\mathds{1}_{I_{j_0}} Q \mathds{1}_{I_{j_{i_-}}}$ and $ \mathds{1}_{I_{j_{i_+}}}Q \mathds{1}_{I_{j_0}}$. These operators are adjoint and hence have the same Hilbert--Schmidt norm. As $j_{i_-} \neq j_0 \neq j_{i_+}$, we know $\{j_0,j_{i_-}\}=\{j_0,j_{i_+}\}=\{1,2\}$. Thus, we are left to estimate $\lVert \mathds{1}_{I_1} Q \mathds{1}_{I_2} \rVert_2^2$. Since the operator $Q$ has integral kernel $Q(x,y)= k_\mu(x-y) = \frac{\sin(\sqrt{\mu}(x-y))}{\pi (x-y)}, x,y\in\R$, the square of the Hilbert--Schmidt norm can be easily calculated as the square of the integral of this kernel for $x \in I_1$ and $y \in I_2$. By translation invariance we may assume that $I_1=(0,\ell_1)$. By the definition of $d_1$, we know $I_2 \subset (\ell_1+d_1, \infty)$. Hence, using the estimate $|k_\mu(z)|\le C/(1+|z|)$ for some constant $C$ we get
\begin{align}
  \big|\tr \eqref{C.3}\big|\, \le  \, \lVert \mathds{1}_{I_1} Q \mathds{1}_{I_2} \rVert_2^2 &\le C \int_{I_1} \mathrm{d}x\, \int_{I_2} \mathrm{d}y\, \frac 1 {(1+ y-x)^2} \\
& \le C \int_{0}^{\ell_1} \mathrm{d}x\, \int_{\ell_1+d_1}^\infty \mathrm{d}y\, \frac1{(1+ y-x)^2}\\
&= C \int_{0}^{\ell_1} \mathrm{d}x\,\frac 1 {1+d_1+\ell_1-x} = C \ln \left( \frac{ 1+ d_1+\ell_1}{ 1+d_1} \right)\,.
 \end{align}
The number of such error terms is $2^m-1$. Thus, the error bound in $m$ is quite bad, but we only need to be good in $k$. The proof is now finished.
\end{proof}

Here is our second lemma on the mentioned improved asymptotic expansion for a single interval of Landau and Widom. This agrees with the improvement of Widom in \cite{Widom1982}. As the paper of Landau and Widom \cite{LW} is freely accessible, but the later paper by Widom \cite{Widom1982} is not\footnote{as of August 25, 2022}, we provide this different proof for the reader's convenience. We do not claim any originality. 

\begin{lemma}\label{improved Landau-Widom}
Let $\Omega \subset \mathbb R$ be an open and bounded interval of length $\ell>0$ and let $\mu>0$. Then for any $m\in\N$ and $L >0$, we have with $\mathsf{I}(m)$ explained after \eqref{def: I},
\begin{align}\label{C.7}
 \tr \big( \mathds{1}_{L\Omega} \mathds{1}[(-\mathrm{i}\nabla^\parallel)^2 \le \mu] \mathds{1}_{L\Omega}\big)^m &= \frac{\sqrt \mu}\pi L\ell +4 \,\mathsf{I}(m)\ln(1+L\ell) + O(1)\,,
\end{align}
where the order one error term is independent of $L$ and $\ell$ but depends on $\mu$.
\end{lemma}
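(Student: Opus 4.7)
The plan is first to reduce the problem to a standard form. By the unitary dilation $U_\a f(x) := \sqrt\a\, f(\a x)$ combined with translation invariance of $\mathds{1}_{L\Omega}$, one replaces the operator of interest by $T_R := \mathds{1}_{(0,R)} F \mathds{1}_{(0,R)}$, where $F := \mathds{1}[(-\mathrm{i}\nabla^\parallel)^2 \le 1]$ and $R := \sqrt\mu\, L\ell$. Since $|\ln(1+L\ell) - \ln(1+R)|$ is bounded by a constant depending only on $\mu$, it suffices to prove $\tr T_R^m = R/\pi + 4\mathsf{I}(m)\ln(1+R) + O(1)$. The case $m=1$ is immediate: since $k_1(0) = 1/\pi$, we have $\tr T_R = R/\pi$ exactly, matching $\mathsf{I}(1) = 0$.

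For $m \ge 2$, the polynomial identity $t - t^m = (1-t)\sum_{j=1}^{m-1} t^j$ gives $\tr T_R - \tr T_R^m = \sum_{j=1}^{m-1} \tr(T_R^j(I-T_R))$. Using $F^2 = F$ one obtains $T_R(I-T_R) = H H^*$, where $H := \mathds{1}_{(0,R)} F \mathds{1}_{(0,R)^\complement}$ is of Hankel type, and cyclicity yields $\tr(T_R^j(I-T_R)) = \tr(H^* T_R^{j-1} H)$. The entire asymptotic content then reduces to the claim
\[
\tr\bigl(T_R^j(I-T_R)\bigr) = \frac{\ln R}{\pi^2\, j} + O(1), \qquad j = 1,\dots,m-1.
\]
Summing these and comparing with $4\mathsf{I}(m) = -\tfrac{1}{\pi^2}\sum_{r=1}^{m-1} r^{-1}$ then produces the claimed expansion.

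The case $j=1$ of the key claim is a direct computation: after the substitutions $u := y-x$ (for $y > R$) and $u := x-y$ (for $y < 0$),
\[
\|H\|_2^2 = \frac{2}{\pi^2}\int_0^R \frac{\sin^2 u}{u}\,\mathrm d u + \frac{2R}{\pi^2}\int_R^\infty \frac{\sin^2 u}{u^2}\,\mathrm d u = \frac{\ln R}{\pi^2} + O(1),
\]
using the cosine-integral asymptotic $\int_0^R (1-\cos 2u)/u\,\mathrm d u = \ln R + O(1)$ together with $R\int_R^\infty \sin^2 u/u^2\,\mathrm d u = O(1)$. For $j \ge 2$ the plan is to expand $\tr(H^* T_R^{j-1} H)$ via its $(j+1)$-variable iterated sine-kernel integral and evaluate it asymptotically---either by passing to the Fourier representation $k_1(z) = \int_{-1}^1 \mathrm e^{\mathrm i \xi z}\,\mathrm d\xi/(2\pi)$ and analyzing the resulting multi-integral over $[-1,1]^{j+1}$ near its diagonal boundary, or by induction on $j$ using the telescoping $\tr T_R^j(I-T_R) = \tr T_R^j - \tr T_R^{j+1}$ combined with a direct kernel expansion.

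The main obstacle is producing the $O(1)$ error term, uniform in $R$, rather than the weaker $o(\ln R)$ of the Landau--Widom paper \cite{LW}. For $j=1$ this is achieved through standard sine- and cosine-integral asymptotics. For $j \ge 2$ the challenge is to control the subleading contributions in the iterated sine-kernel integral: one must isolate the logarithmic contribution coming from configurations in which consecutive arguments accumulate at the endpoints $\{0,R\}$, while showing that all other contributions are uniformly bounded. This is essentially the content of Widom's improvement \cite{Widom1982}, to be re-derived here by direct kernel analysis.
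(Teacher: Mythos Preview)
Your reduction to $T_R$ and the case $m=1$ are fine, and the $j=1$ computation of $\|H\|_2^2$ is correct. But for $j\ge 2$ you have not given a proof: you write that ``the plan is to expand \dots\ and evaluate it asymptotically'' and then explicitly concede that ``the main obstacle is producing the $O(1)$ error term'' and that this ``is essentially the content of Widom's improvement, to be re-derived here.'' That is precisely the hard part of the lemma, and nothing in your proposal explains how to do it. Direct analysis of the iterated sine-kernel integral for $\tr(H^*T_R^{j-1}H)$ does not obviously separate a clean $\ln R$ term from an $O(1)$ remainder; the boundary contributions near $\{0,R\}$ interact across all $j{+}1$ variables, and your telescoping alternative $\tr T_R^j - \tr T_R^{j+1}$ simply pushes the same difficulty to the next value of $j$.

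The paper avoids this obstacle entirely by working not with $\tr(T_R^j(I-T_R))$ but with the basis $(t(1-t))^n$ and $t(t(1-t))^n$, invoking \cite[(18)]{LW} to pass to $\tr K_c^n$, where $K_c = P(1,c)KP(1,c)$ has the homogeneous kernel
\[
f(x,y)=\frac{1}{4\pi^2}\int_0^\infty \frac{\mathrm d u}{(u+x)(u+y)}\,,\qquad f_n(\lambda x)=\lambda^{-n}f_n(x)\,.
\]
The homogeneity is the missing structural ingredient: it lets one replace the cube $[1,c]^n$ by the spherical shell $V=\{x:\sqrt n\le\|x\|\le\sqrt n\,c\}$ up to an $O(1)$ error (controlled by the $n$ ``corner'' sets $U_i$, each of finite, $c$-independent $\iota$-measure because $\iota(cU_i)=\iota(U_i)$), and then spherical coordinates give $\iota(V)=\tilde C(n)\ln c$ exactly. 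This geometric comparison is what yields the uniform $O(1)$ remainder in one stroke; your sine-kernel decomposition has no analogue of it.
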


\begin{proof}
The case $m=1$ is trivial, as the integral kernel is constant on the diagonal and only the volume term $\frac{\sqrt \mu}\pi L\ell$ appears. Thus, by linearity, it suffices to show the statement for a basis of the polynomials vanishing at $0$ and $1$. 

As $\mu$ is fixed, the result depends only on $L\ell$, which can be small or large. If $L\ell\le1$ then the trace on the left-hand side of \eqref{C.7} is bounded uniformly for these $L,\ell$ by continuity as a function of $L\ell\in[0,1]$. The same is true for the first two terms on the right-hand side of \eqref{C.7} and hence the equality holds true with an $O(1)$ error term. In the following we will assume that $L\ell>1$.

Form now on we use the same notation as in \cite{LW}, where $c$ takes the role of $L$. The last equation in the proof of their Theorem 1, where they still carry the order one error term is \cite[(18)]{LW}. Afterwards they allow for a larger $o(\ln(c))$ error term and here we take a different route. 

They consider the polynomials $(t(1-t))^n$ and $t(t(1-t))^n$ for $n \in \N$, which span all polynomials that vanish at $0$ and $1$. We proceed with these polynomials instead of $t^m$ as in the statement of our lemma. Their equation \cite[(18)]{LW} states
\be
\tr A_c [A_c(I-A_c)]^n =2 \,\tr K_c^n + O(1) = \frac 12 \tr [A_c(I- A_c)]^n + O(1)\,, \label{LW18}
\ee
where $A_c= P(0,c) Q(0,1)P(0,c)$, which is unitarily equivalent to $\mathds{1}_{[0,c]} \mathds{1}\big(-\Delta \le \frac 1 4 \big) \mathds{1}_{[0,c]}$ in our notation, and $K_c=P(1,c) Q(-\infty,0) P(-\infty, 0) Q( -\infty,0) P(1,c)$, as stated below \cite[(17)]{LW}. They state below \cite[(18)]{LW} that the integral kernel of the operator $K_c$ on $\Lp^2([1,c])$ is given for $1 \le x,y \le c$ by 
\begin{align}
f(x,y) \coloneqq \frac 1 {4 \pi^2} \int_0^\infty \frac {\mathrm{d}u} {(u+x)(u+y)} = \frac 1 {4\pi^2}\begin{cases} \frac { \ln(x)- \ln(y)} {x-y} & \mbox{ if }x \neq y\\ \frac 1 x & \mbox{ if } x=y
\end{cases}\,.
\end{align}
Let $K$ be the operator on $\Lp^2(\mathbb R^+)$ with integral kernel $f(x,y)$ for $0 <  x,y < \infty$. Thus, $K_c=P(1,c) K P(1,c)$ and $K= Q(-\infty,0) P(-\infty, 0) Q( -\infty,0)$. 
Hence, we can conclude
\be
\tr K_c^n = \int_{[1,c]^n} \mathrm{d}x\, \prod_{i=1}^n f_1(x_i, x_{i+1})\,,
\ee
with the convention $x_{n+1}=x_1$. We denote the integrand $f_n(x) \coloneqq \prod_{i=1}^n f(x_i, x_{i+1})$. It satisfies for $\lambda>0$ the homogeneity property $f_n(\lambda x)= \lambda^{-n} f_n(x)$, which indicates that we should use spherical coordinates to calculate the integral. The problem is, however, that the integration domain does not look particularly nice in spherical coordinates. Thus, we would like to change the integration domain without changing the integral too much.

The first thing to observe is that as $\ln$ is increasing, $f_n(x) \ge 0$ holds for any $n \in \N, x \in (\R^+)^n$. For any (Borel) measurable $X \subset (\R^+)^n$, we define
\be
\iota(X) \coloneqq \int_{X} \mathrm{d}x \, f_n(x)= \int_X \mathrm{d}x\,  \prod_{i=1}^n f(x_i, x_{i+1})\,.
\ee
As the integrand is non-negative, $\iota$ is a measure. We also observe that $\iota$ is invariant under the cyclic shift $(x_1, x_2, \dots, x_n) \mapsto (x_2, x_3, \dots, x_n,x_1)$. Assuming $n >1$, for $i=1, \dots, n$, we consider the set
\be
U_i \coloneqq \big\{ x \in (\mathbb R^+)^n  : x_i \le 1 \le x_{i+1}\big\}\,.
\ee
We observe $\iota(U_i)=\iota(U_1)$ by the cyclic shift property. We see
\be 
\iota(U_1)= \tr P(0,1)K P( 1, \infty) (K P(0, \infty))^{n-2} K P(0,1)\,.
\ee    
Since $P(0,1)K P(1,\infty)= P(0,1) Q(-\infty,0)P(-\infty, 0) Q(-\infty,0) P(1,\infty)$ is the operator $R$ \cite[(9)]{LW} with appropriately chosen intervals $J,M,K,N,L$ we see that this is trace class by \cite[Lemma, (L2)]{LW}. 
By the homogeneity of $f_n$, we even have $\iota( c U_1)= c^{n-n} \iota(U_1)=\iota(U_1)$. 

Next, we introduce the set 
\[ V \coloneqq \big\{x \in (\mathbb R^+)^n : \sqrt n \le \|x \| \le \sqrt{n} \, c\big\}\,,
\] 
which looks very nice in spherical coordinates. For $n=1$, we just have $V=[1,c]$. For $n>1$, we observe the chain
\be
[1,c]^n \subset V \subset [1,c]^n \cup \bigcup_{i=1}^n ( U_i \cup c U_i ) \,.
\ee
The first inclusion is trivial. We call $x_1, \dots, x_n$ the coordinates of $x$. If $x$ has both a coordinate above $\lambda$ and one below $\lambda$, then it has to be in the set $\bigcup _{i=1} ^n \lambda U_i$. Any $x \in V$ has at least one coordinate above $1$ and a coordinate below $c$. Thus, if $x \not \in [1,c]^n$, it has to have a coordinate above and below $1$ or a coordinate above and below $c$, which proves the second inclusion. These inclusions and the subbadditivity and monotonicity of $\iota$ imply that there is a constant $C_n$ (depending on $n$ but not on $c$) such that
\begin{align}
 &\iota([1,c]^n) \le \iota (V) \le \iota([1,c]^n) + C_n \qquad 
\implies \quad & \iota([1,c]^n)= \iota(V) + O(1)\,. \label{iota V+O1}
\end{align}
This holds with $O(1)$ replaced by $0$ for $n=1$. Finally, we introduce $W \coloneqq \big\{ x \in (\R^+)^n : \| x \| = \sqrt n\big\}$ with Hausdorff measure $\mathcal H^{n-1}$ and observe $V= [1,c] W = \big\{\lambda x : 1\le\lambda\le c, x\in W\big\}$. Now, we are just left to calculate
\begin{align}
\iota(V) &= \int _V \mathrm{d}x\, f_n(x) = \int_{1}^c \mathrm{d}r\,r^{n-1}  \int_{W}\mathrm{d}\mathcal H^{n-1}(x)\, f_n(r x) 
\\
& = \int_{1}^c \mathrm{d}r\, r^{n-1}  \int_{W} \mathrm{d}\mathcal H^{n-1}(x) \, r^{-n} f_n(x)  = \int_{1}^c \mathrm{d}r\,r^{-1} \int_{W} \mathrm{d}\mathcal H^{n-1}(x) \, f_n(x) 
\\
&= \ln(c) \int_{W} \mathrm{d}\mathcal H^{n-1}(x) \,f_n(x) = \tilde{C}(n) \ln(c)\,,
\end{align}
where $\tilde{C}(n)$ is the result of the surface integral. We did a change to spherical coordinates in the second step. As the integrand is positive, $\tilde{C}(n)\in (0,\infty]$ is well-defined. By \eqref{iota V+O1}, we conclude (for fixed $n$ and as $c\to\infty$)
\begin{align}
\tr K_c^n = \tilde{C}(n) \ln(c) + O(1)\,.
\end{align}
From \cite[(19)]{LW}, we know that $\tilde{C}(n)= \frac 1 {4\pi^2} \int_0^1 \mathrm{d}t\, (t(1-t))^{n-1}$. In conjunction with \eqref{LW18}, we get the improved error term $O(1)$ with the same leading term for any polynomial, which vanishes at $0$ and $1$. 

To get the claim of our lemma, we just have to replace $c$ by $2\sqrt \mu L\ell$ and then use $\ln(2\sqrt \mu L\ell) = \ln(2)+ \frac 1 2 \ln(\mu) +\ln(L\ell)=\ln(1+L\ell)+ O(1)$, which relies on $L\ell \ge 1$.
\end{proof}

Now we are in position to present and prove the main result in this section. The dependency of our error term on $\Omega$ is not just $O(1)$ as in \cite{Widom1982} but explicit in terms of the number, lengths, and distances of the constituent intervals of $\O$. Sobolev in \cite[Chapter 8]{Sob:AMS} has a similar error term, which however, does not seem to suffice for our purposes.

\begin{cl} \label{1D final asym} We assume the same conditions on the set $\Omega$ as in \autoref{1d multiple interval error lem}, $\mu>0$ and $m\in\N$. Then, with $\mathsf{I}(m)$ explained after \eqref{def: I}, we have for any $L \ge 1$,
\begin{align}
 \tr \left( \mathds{1}_{L\Omega} \mathds{1}[(-\mathrm{i}\nabla^\parallel)^2 \le \mu] \mathds{1}_{L\Omega}\right)^m &= \frac{\sqrt \mu}\pi L\lvert \Omega \rvert +4k \,\mathsf{I}(m)\ln(1+L) 
\\
&+ O\Big(k+\lvert \ln(\ell_k)\rvert+\sum_{j=1}^{k-1} \lvert \ln(\ell_j) \rvert +  \lvert \ln (d_j) \rvert \Big) \,.\label{1d multiple interval error term eq}
\end{align}
\end{cl}

\begin{proof}
For the case of a single interval, we use \autoref{improved Landau-Widom}
\begin{align}
\tr \left( \mathds{1}_{L I_j} Q \mathds{1}_{L I_j}\right)^m& =  \frac {\sqrt \mu} \pi L \ell_j +4\,\mathsf{I}(m) \ln(1+L \ell_j ) + O(1) \\
&=  \frac {\sqrt \mu} \pi L \ell_j +4\,\mathsf{I}(m) \Big[\ln(1+L)+\ln\Big(\frac{L}{1+L}\Big)+\ln\Big(\frac 1 L + \ell_j\Big)\Big] + O(1) \,. \label{C10}
\end{align}
As $L \ge 1$, we have
\begin{align}
\left \lvert \ln\left(\frac{L}{1+L}\right) + \ln\left(\frac 1 L + \ell_j\right) \right\rvert < 3+\lvert \ln(\ell_j)\rvert\,. \label{C11}
\end{align}
Next, we observe\footnote{If $ab\le 1$ then $\ln(1+ab)\le ab\le 1$ and \eqref{C.13} holds. If $ab\ge1$ then we distinguish between the case that both $a\ge1$ and $b\ge1$ and the case where one of them is smaller than $1$. In the first case \eqref{C.13} is equivalent to $\ln(1+ab)-\ln(ab) = \ln(1+1/(ab))\le1$ which holds because $\ln(1+1/(ab))\le 1/(ab)\le1$. In the remaining case we may assume $a\le1$ and $b\ge1$ (but still $ab\ge1$). Then $ab\le b/a$ and $\ln(1+ab) = \int_1^{ab} \mathrm{d}x/x \le \int_1^{b/a} \mathrm{d}x/x =\ln(b/a)$.} that for any $a>0$ and $b>0$, we have
\be\label{C.13}
\ln(1+ab) < \lvert \ln(a) \rvert + \lvert \ln(b) \rvert + 1\,.
\ee
Now, we only need to rewrite the error term from \autoref{1d multiple interval error lem} in the form we claim in this corollary. Thus, we estimate
\begin{align}
 \ln \left(1+ \frac {L\ell_j}{1+Ld_j}\right) = &\ln\left( 1+ \frac{\ell_j}{\frac 1 L + d_j } \right) \le \ln(1+\frac{\ell_j}{d_j})  <   \lvert \ln(\ell_j) \rvert + \lvert \ln(d_j) \rvert +1\,. \label{C15}
\end{align}
Once we sum the error term estimate in \eqref{C15} for $i=1, \dots, k-1$ and the one in \eqref{C11} for $i=1, \dots, k$, we arrive at the claimed error estimate in \eqref{1d multiple interval error term eq}. We also see that the sum of the main terms in \eqref{C10} for $i=1,\dots, k$ agrees with the main term in \eqref{1d multiple interval error term eq}. This finishes the proof.
\end{proof}

\section{A technical lemma on decaying functions}

This section contains a technical lemma that was useful to construct the sequence $(a_i)_{i\in\N}$ and the region $\L$ in the proof of \autoref{thm error term}.

\begin{lemma}\label{technical function existence}
Let $f \colon \R^+ \to \R^+$ be bounded and satisfy $\lim_{L \to \infty} f(L)=0$. Then there is a convex, non-increasing function $\mathrm{Env}(f) \colon \R^+ \to \R^+$ satisfying $\mathrm{Env}(f)(0)=1, \lim_{L \to \infty} \mathrm{Env}(f)(L)= \lim_{L \to \infty} f(L)/\mathrm{Env}(f)(L) =0$ and $\mathrm{Env}(f)(L) \ge C/\sqrt{\ln(2+L)}$ for some $C>0$.
\end{lemma}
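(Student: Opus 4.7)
My plan is to build $\mathrm{Env}(f)$ as a piecewise linear convex function dominating a carefully chosen non-increasing auxiliary function built from $f$. First, I would pass to the non-increasing envelope $M(L) \coloneqq \sup_{t \ge L} f(t)$; by the hypotheses, $M$ is finite, non-increasing, and $M(L) \to 0$. For the ratio condition $f/\mathrm{Env}(f) \to 0$ to be compatible with $\mathrm{Env}(f)\to 0$, a natural scale for $\mathrm{Env}(f)$ is $\sqrt{M}$, since then $f/\sqrt{M} \le M/\sqrt{M} = \sqrt{M} \to 0$. Combining this with the required lower bound, I would set
\[
h(L) \coloneqq c_0 \max\!\Bigl(\sqrt{M(L)},\; \frac{1}{\sqrt{\ln(2+L)}}\Bigr),
\]
with $c_0 \in (0,1)$ chosen small enough that $h(0) \le 1/2$. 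Then $h$ is non-increasing, $h(L) \to 0$, and $h(L) \ge c_0/\sqrt{\ln(2+L)}$ everywhere.

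Next, I would construct a convex non-increasing $\phi$ with $\phi \ge h$, $\phi(0)=1$, and $\phi\to 0$. Define recursively $L_0 \coloneqq 0$ and
\[
L_{n+1} \coloneqq \inf\bigl\{L \ge 2 L_n + 1 \colon h(L) \le 2^{-n-2}\bigr\},
\]
which is finite since $h \to 0$; a short induction then gives $h(L_n) \le 2^{-n-1}$ for all $n \ge 0$. Set $\phi(L_n) \coloneqq 2^{-n}$ and extend $\phi$ to $[0,\infty)$ by linear interpolation on each interval $[L_n, L_{n+1}]$. Non-increasingness is clear, $\phi(0)=1$ and $\phi(L_n)=2^{-n}\to 0$ are built in, and for $L \in [L_n, L_{n+1}]$ one has $h(L) \le h(L_n) \le 2^{-n-1}$ while $\phi(L) \ge \phi(L_{n+1}) = 2^{-n-1}$, so $\phi \ge h$.

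The hard part is convexity. For the piecewise linear $\phi$ this is equivalent to the slopes $s_n = -2^{-n-1}/(L_{n+1} - L_n)$ being non-decreasing, i.e.,
\[
2(L_{n+2} - L_{n+1}) \ge L_{n+1} - L_n.
\]
The spacing rule $L_{n+1} \ge 2 L_n + 1$ delivers the much stronger estimate $L_{n+2} - L_{n+1} > L_{n+1} > L_{n+1} - L_n$, so convexity is in fact strict. Setting $\mathrm{Env}(f) \coloneqq \phi$, the four required properties then follow: $\mathrm{Env}(f)(0) = 1$ and $\mathrm{Env}(f)(L)\to 0$ by construction, the lower bound $\mathrm{Env}(f)(L) \ge h(L) \ge c_0/\sqrt{\ln(2+L)}$ is immediate, and
\[
\frac{f(L)}{\mathrm{Env}(f)(L)} \le \frac{M(L)}{c_0\sqrt{M(L)}} = \frac{\sqrt{M(L)}}{c_0} \longrightarrow 0.
\]
The main obstacle is reconciling convexity (which forces the slopes of $\phi$ to flatten out toward zero, hindering decay) with the requirements $\phi(L)\to 0$ and $\phi \ge h$; the geometric spacing $L_{n+1}\ge 2L_n+1$ of the breakpoints is precisely what makes all three compatible, since it buys enough horizontal room for $\phi$ to halve at each step while its slopes become progressively less steep.
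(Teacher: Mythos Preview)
Your argument is essentially correct and takes a genuinely different route from the paper. The paper replaces $f$ by its non-increasing majorant $\hat f$, takes the \emph{lower convex envelope} of $\sqrt{\hat f}$, adds $1/\sqrt{\ln(2+t)}$, and normalizes; the key technical point there is showing that the convex envelope evaluated at $t/2$ still dominates a constant times $\sqrt{\hat f(t)}$, which requires unpacking the characterization of points on the envelope as convex combinations of values of $\sqrt{\hat f}$. Your approach avoids the convex-envelope machinery entirely by building an explicit piecewise linear majorant with geometrically spaced breakpoints, which makes convexity, decay, and the domination $\phi\ge h$ all verifiable by hand. This is more elementary and arguably more transparent; the paper's version is shorter to state but relies on a slightly subtle property of convex envelopes.

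Two small points to tighten. First, since $h$ need not be right-continuous (the supremum $M(L)=\sup_{t\ge L}f(t)$ can jump), the infimum defining $L_{n+1}$ need not lie in the set $\{L\ge 2L_n+1:h(L)\le 2^{-n-2}\}$; simply pick $L_{n+1}$ to be any element of this (non-empty) set rather than the infimum, and nothing else in your argument changes. Second, the chain ``$L_{n+2}-L_{n+1}>L_{n+1}>L_{n+1}-L_n$'' fails at the second inequality when $n=0$ (there it is an equality, since $L_0=0$), but the conclusion $L_{n+2}-L_{n+1}\ge L_{n+1}+1>L_{n+1}-L_n$ still holds directly from $L_{n+2}\ge 2L_{n+1}+1$, so convexity is unaffected.
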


\begin{proof}
The conditions on $\mathrm{Env}(f)$ only get worse if we increase $f$. Hence, we can replace $f$ by
\be \hat f(s) \coloneqq  \sup_{t >s} f(t)\,. 
\ee
This is non-increasing. To achieve the $f(L)/\mathrm{Env}(f)(L)$ condition we consider $\sqrt{\hat f}$. However, we still need to make sure that $\mathrm{Env}(f)$ is convex. For this reason, we need to consider the lower convex envelope $ \breve {\sqrt {\hat f}}$. It is given by the supremum over all convex functions below $\sqrt {\hat f}$. Another way to think of it is that the area above the graph of $\breve{ \sqrt {\hat f}}$ is the convex hull of the area above $\sqrt {\hat f}$. Finally, we define
\be \mathrm{Env}(f)(t) \coloneqq N \left( \breve{ \sqrt {\hat f}} \left( \frac t 2\right) + \frac 1 {\sqrt{ \ln(2+t)}} \right)\,,\quad t\ge0\,,\ee
where $N$ is a normalization constant to be chosen below. As the lower convex envelope and $\frac 1 {\sqrt{ \ln(2+t)}}$ are convex, so is $\mathrm{Env}(f)$.  As the lower convex envelope lies below the function, we have $\mathrm{Env}(f)(t) \le N\Big(\sqrt{\hat f (t/2)} + \frac 1 {\sqrt{ \ln(2+t)}}\Big) \to 0$ as $t \to \infty$. As $\mathrm{Env}(f)$ is convex and $\lim_{L \to \infty}\mathrm{Env}(f)(L)=0$, $\mathrm{Env}(f)$ is non-increasing.  The condition $\mathrm{Env}(f)(L) \ge C/\sqrt{\ln(2+L)}$
is trivially satisfied and implies $\mathrm{Env}(f)(0)>0$ and hence allows us to choose $N$ such that $\mathrm{Env}(f)(0)=1$. We are only left with the condition $\lim_{L \to \infty} f(L)/\mathrm{Env}(f)(L) =0$. To show this, it is sufficient to prove
\be \mathrm{Env}(f)(L) \ge C \sqrt{f(L)}\,. \ee
By the definition of the convex envelope, for any $t \ge0$, there are $0 \le t_1 \le t < t_2$ such that 
\be
\mathrm{Env}(f)(t)/N=\frac{t_2-t}{t_2-t_1} \sqrt {\hat f}\left( \frac {t_1} 2 \right) + \frac{t-t_1}{t_2-t_1} \sqrt {\hat f}\left( \frac {t_2} 2 \right) + \frac 1 {\sqrt{ \ln(2+t)}}\,.
\ee
If $t_2 \le 2t$, as $\hat f$ is non-increasing, we get $\sqrt {\hat f}\left( \frac {t_j} 2 \right)  \ge \sqrt {\hat f}(t) $ for $j=1,2$ and thus are finished. If $t_2 > 2t$, then, as $t_1 \ge0$, we have
$\frac{t_2-t}{t_2-t_1}  > \frac 12 $. Thus, it suffices to bound the first summand from below, which we already did.
\end{proof}

\end{appendix}


\begin{thebibliography}{40}

\bibitem{AFS} H. Abdul-Rahman, C. Fischbacher, and G. Stolz: \emph{Entanglement bounds in the {XXZ} quantum spin chain}, Ann. Henri Poincar\'e \textbf{21}, 2327--2366 (2020)

\bibitem{Adams} Robert A.~Adams: \emph{Sobolev Spaces}, Academic Press, New York (1975)

\bibitem{AFV} Luigi Amico, Rosario Fazio, Andreas Osterloh, and Vlatko Vedral: \emph{Entanglement in many-body systems}, Rev. Mod. Phys. \textbf{80}, 517--576 (2008)



\bibitem{BW} V. Beaud and S. Warzel: \emph{Bounds on the entanglement entropy of droplet states in the {XXZ} spin chain}, J. Math. Phys. \textbf{59} (2018)



\bibitem{CE} Laurent Charles and Benoit Estienne: \emph{Entanglement entropy and Berezin--Toeplitz operators}, Commun. Math. Phys. \textbf{376}, 521--554 (2020)

\bibitem{ECP} J. Eisert, M. Cramer, and M. B. Plenio: \emph{Area laws for the entanglement entropy --- a review}, Rev. Mod. Phys. \textbf{82}, 277 (2010)

\bibitem{EKS} Alexander Elgart, Abel Klein, and G\"unter Stolz, \emph{Many-body localization in the droplet spectrum of the random {XXZ} quantum spin chain}, J. Funct. Anal. \textbf{275}, 211--258 (2018)

\bibitem{EPS} A. Elgart, L. Pastur, and M. Shcherbina: \emph{Large block properties of the entanglement entropy of free disordered Fermions}, J. Stat. Phys. \textbf{166}, 1092--1127 (2017)


\bibitem{Evans} Lawrence C. Evans and Ronald F. Gariepy: \emph{Measure Theory and Fine Properties of Functions, Revised Edition (1st ed.)},  Chapman and Hall/CRC (2015) 


\bibitem{Fock} V. Fock: \emph{Bemerkung zur Quantelung des harmonischen Oszillators im Magnetfeld}, Z. Physik \textbf{47}, 446--448 (1928)

\bibitem{FS} Christoph Fischbacher and Ruth Schulte: \emph{Lower Bound on the Entanglement Entropy of the {XXZ} Spin Ring}, arXiv:2007.00735 

\bibitem{FO} Christoph Fischbacher and Oluwadara Ogunkoya: \emph{Entanglement Entropy Bounds in the Higher Spin XXZ Chain}, J. Math. Phys. \textbf{62}, 101901 (2021)

\bibitem{GK} Dimitri Gioev and Israel Klich: \emph{Entanglement entropy of fermions in any dimension and the Widom conjecture}, Phys. Rev. Lett. \textbf{96}, 100503 (2006)

\bibitem{DifftopGP} Victor Guillemin and Alan Pollack \emph{Differential Topology},  	Prentice-Hall (1974)


\bibitem{HHHH} Ryszard Horodecki, Pawe\l{} Horodecki, Micha\l{} Horodecki, and Karol Horodecki: \emph{Quantum entanglement}, Rev.
Mod. Phys. \textbf{81}, 865--942 (2009)

\bibitem{JinKor} B.-Q. Jin and V.E. Korepin: \emph{Quantum spin chain, Toeplitz determinants and the Fisher--Hartwig conjecture}, J. Stat. Phys. \textbf{116}, Nos. 1/4, (2004)




\bibitem{Klich} Israel Klich: \emph{Lower entropy bounds and particle number fluctuations in a Fermi sea}, J. Phys. A, Math. Gen. \textbf{39} L85 (2006)

\bibitem{LW} H.J. Landau and H. Widom: \emph{Eigenvalue Distribution of Time and Frequency Limiting}, J. Math. Anal. Appl. \textbf{77}, 469--481 (1980)

\bibitem{Landau} L. Landau: \emph{Diamagnetismus der Metalle}, Z. Physik \textbf{64}, 629--637 (1930)

\bibitem{LSS1} Hajo Leschke, Alexander V. Sobolev, and Wolfgang Spitzer: \emph{Scaling of R\'enyi entanglement entropies of the free Fermi-gas ground state: a rigorous proof}, Phys. Rev. Lett. \textbf{112}, 160403 (2014)

\bibitem{LSSmagn} Hajo Leschke, Alexander V. Sobolev, and Wolfgang Spitzer: \emph{Asymptotic growth of the local ground-state entropy of the ideal Fermi gas in a constant magnetic field}, Commun. Math. Phys. \textbf{381}, 673--705 (2021)

\bibitem{LSS-temp1} Hajo Leschke, Alexander V. Sobolev, and Wolfgang Spitzer: \emph{Trace formulas for Wiener--Hopf operators with applications to entropies of free fermionic equilibrium states}, J. Funct. Anal. \textbf{273}, 1049--1094 (2017)

\bibitem{LSS-temp2} Hajo Leschke, Alexander V. Sobolev, and Wolfgang Spitzer: \emph{R\'enyi entropies of the free Fermi gas in multi-dimensional space at high temperature}, arXiv:2201.11087 

\bibitem{MPS} Peter M\"uller, Leonid Pastur, and Ruth Schulte: \emph{How much delocalisation is needed for an enhanced area law of the entanglement entropy?}, Commun. Math. Phys. \textbf{376}, 649--679 (2020)

\bibitem{MS2019} Peter M\"uller and Ruth Schulte: \emph{Stability of the enhanced area law of the entanglement entropy}, Ann. Henri Poincar\'e \textbf{21}, 3639--3658 (2020)

\bibitem{MS2020} Peter M\"uller and Ruth Schulte: \emph{Stability of a Szeg\H{o}-type asymptotics}, arXiv:2104.12765


\bibitem{PasSlav} L. Pastur and V. Slavin: \emph{Area law scaling for the entropy of disordered quasifree fermions}, Phys. Rev. Lett. \textbf{113} (2014)

\bibitem{P2021} Paul Pfeiffer: \emph{On the stability of the area law for the entanglement entropy of the Landau Hamiltonian}, arXiv:2102.07287

\bibitem{PS} Bernhard Pfirsch and Alexander V. Sobolev: \emph{Formulas of Szeg\H{o} type for the periodic Schr\"odinger operator}, Commun. Math. Phys. \textbf{358}, 675--704 (2018)


\bibitem{RS1} Iv\'an D. Rodr{\'{\i}}guez and Germ\'an Sierra: \emph{Entanglement entropy of integer Quantum Hall states}, Phys. Rev. B \textbf{80}, 153303 (2009)



\bibitem{Sob:AMS} Alexander V. Sobolev: \emph{Pseudo-differential operators with discontinuous symbols: Widom's Conjecture}, {Memoirs of AMS} \textbf{222}, 1043 (2013)

\bibitem{Sob2013} Alexander V. Sobolev: \emph{Wiener--Hopf operators in higher dimensions: the Widom conjecture for piece-wise smooth domains}, Integr. Equ. Oper. Theory \textbf{81}, 435--449 (2015)

\bibitem{Sob:Schatten} Alexander V. Sobolev: \emph{On the Schatten--von Neumann properties of some pseudo-differential operators}, J. Funct. Anal. \textbf{266}, 5886--5911 (2014)

\bibitem{Sob2016} Alexander V. Sobolev: \emph{Quasi-classical asymptotics for functions of Wiener-Hopf operators: smooth vs non-smooth symbols}, Geom. Funct. Anal. \textbf{27}, 676--725 (2017)




\bibitem{Widom1982} Harold Widom, \emph{On a class of integral operators with discontinuous symbol}, Toeplitz centennial (Tel Aviv, 1981), pp. 477--500, Operator Theory: Adv. Appl., \textbf{4}, Birkh\"auser, Basel-Boston,
Mass., 1982

\bibitem{Widom1990} Harold Widom: \emph{On a class of integral operators on a half-space with discontinuous symbol},  J. Funct. Anal. \textbf{88}, no. 1, 166--193 (1990)

\bibitem{NIST:DLMF} \emph{NIST Digital Library of Mathematical Functions}. \url{http://dlmf.nist.gov/}, Release 1.1.5 of 2022-03-15. F. W. J. Olver, A. B. Olde Daalhuis, D. W. Lozier, B. I. Schneider, R. F. Boisvert, C. W. Clark, B. R. Miller, B. V. Saunders, H. S. Cohl, and M. A. McClain, eds.
\end{thebibliography}
\end{document}